\documentclass[journal]{IEEEtran}
%\documentclass[journal,onecolumn]{IEEEtran}
%\usepackage{geometry}
%\geometry{a4paper,left=2.5cm,right=2.5cm,top=3cm,bottom=3cm}
%\linespread{2}
%\documentclass[journal,onecolumn]{IEEEtran}
\usepackage{amsmath,amsfonts}
\usepackage{algorithmic}
\usepackage{algorithm}
\usepackage{array}
\usepackage[caption=false,font=normalsize,labelfont=sf,textfont=sf]{subfig}
\usepackage{textcomp}
\usepackage{stfloats}
\usepackage{url}
\usepackage{verbatim}
\usepackage{graphicx}
\usepackage{amsmath}
\usepackage{amssymb}
\usepackage{cuted} %公式单栏排双栏
\usepackage{mathrsfs}
\usepackage{makecell}
\usepackage{threeparttable}
\usepackage[figuresright]{rotating}
\usepackage{arydshln}
\usepackage{setspace}%使用间距宏包
\usepackage{color, xcolor} % 颜色包，color 必须导入，xcolor 建议导入
\usepackage{cite}
\newtheorem{lemma}{Lemma}
\newtheorem{theorem}{Theorem}

\newtheorem{definition}{Definition}
\newenvironment{proof}{{\noindent\it Proof:} }{\hfill $\square$\par}

\newtheorem{example}{Example}
\newtheorem{remark}{Remark}
\newenvironment{note}{{\it Note:}}{}
\hyphenation{op-tical net-works semi-conduc-tor IEEE-Xplore}
% updated with editorial comments 8/9/2021

\begin{document}

\title{Symplectic self-orthogonal quasi-cyclic codes}

\author{Chaofeng Guan, Ruihu Li, Jingjie Lv, Zhi Ma~\IEEEmembership{}
%\thanks{Chaofeng Guan is with Henan Key Laboratory of Network Cryptography Technology, Zhengzhou, 450001, China.  (e-mail: gcd2020@yeah.net).}
%\thanks{Ruihu Li is with the Fundamentals Department, Air Force Engineering University, Xi’an, 710051, China. (e-mail: liruihu@aliyun.com).}
%\thanks{Jingjie Lv is with the 
%	School of Electrical Engineering \& Intelligentization, Dongguan University of Technology, Dongguan, 523808,
%	China (email: juxianljj@163.com).}
%\thanks{Zhi Ma is with Henan Key Laboratory of Network Cryptography Technology, Zhengzhou, 450001, China (e-mail: LWtglx2023@outlook.com).}
%\thanks{Chaofeng Guan is supported by the National Key Research and Development Program of China under Grant 2021YFB3100100. 
%	Ruihu Li is supported by the National Natural Science Foundation of China under Grant U21A20428.
%}
}
%61972413,61901525,62002385,U21A20428
\markboth{}%
{}

\IEEEpubid{}
% Remember, if you use this you must call \IEEEpubidadjcol in the second
% column for its text to clear the IEEEpubid mark.

\maketitle

\begin{abstract}
%we propose a method for constructing symplectic self-orthogonal quasi-cyclic codes, which allows arbitrary polynomials that coprime $x^{n}-1$ to construct symplectic self-orthogonal codes. Moreover, 
In this paper, we establish the necessary and sufficient conditions for quasi-cyclic (QC) codes with index even to be symplectic self-orthogonal. Subsequently, we present the 
lower and upper bounds on the minimum symplectic distances of a class of $1$-generator QC codes and their symplectic dual codes by decomposing code spaces.
As an application, we construct numerous new binary symplectic self-orthogonal QC codes with excellent parameters, leading to $117$ record-breaking quantum error-correction codes. 
%improving the records in Grassl's online table (Bounds on the minimum distance of QECCs, http://www.codetables.de). 
\end{abstract}

\begin{IEEEkeywords}
Quasi-cyclic code, symplectic self-orthogonal code, record-breaking quantum code.
\end{IEEEkeywords}

\section{Introduction}
\IEEEPARstart{Q}{uantum} error-correction codes (QECCs) are capable of encoding quantum bits to overcome unwanted noise and decoherence and are considered one of the cornerstones of quantum computing and quantum communication. 
The concept of QECCs was initially proposed by Shor \cite{shors1995scheme} and Steane \cite{steane1996multiple}.
Subsequently, many scholars have studied QECCs and gradually established the connection between QECCs and classical codes with self-orthogonal properties under Euclidean, Hermitian, and symplectic inner products \cite{calderbank1998quantum,ketkar2006nonbinary,steane1999enlargement}.
Later, a large number of QECCs with suitable parameters were constructed using classical codes such as cyclic codes \cite{aly2007quantum,shi2019dual}, constacyclic codes \cite{kai2014constacyclic,chen2015application}, and algebraic geometry codes \cite{jin2011euclidean,bartoli2021certain}.
However, due to the inherent limitations of these classical codes, the constructed QECCs are often limited to specific lengths and dimensions. Therefore, these good QECCs are discrete.
Specifically, for binary QECCs, improving the existing records in Grassl's online code table \cite{Grassltable} faces two significant challenges. 
First, many researchers have tried exhaustive searches when the lengths are small. Without additional clever strategies, it is unlikely to find better codes. Second, for QECCs with long lengths, the computational complexity of determining their actual distances will be very high.
Therefore, more powerful tools are needed to construct QECCs with excellent parameters.

As an effective generalization of cyclic codes, quasi-cyclic (QC) codes possess a rich algebraic structure that allows for the construction of flexible codes in terms of lengths and dimensions. In 1974, Kassmi \cite{kasami1974gilbert} proved that QC codes satisfy the modified Gilbert-Varshamov bound, thus indicating their good asymptotic performance. In recent decades, many scholars have conducted in-depth research on the algebraic structure of QC codes \cite{ling2001algebraic,guneri2012concatenated,semenov2012spectral,ezerman2021comparison}, 
resulting in the development of numerous record-breaking linear codes from QC codes \cite{aydin2001structure,daskalov2003new,Chen2016NewBH,aydin2019equivalence}. These studies have demonstrated the excellent performances of QC codes. Therefore, it is natural to consider using QC codes to construct high-quality QECCs. 
%As an effective generalization of cyclic codes, the abundant algebraic structure of quasi-cyclic codes allows the construction of flexible codes in terms of lengths and dimensions.
%In 1974, Kassmi \cite{kasami1974gilbert} proved that quasi-cyclic codes satisfy the modified Gilbert-Vashamov (GV) bound. 
%Therefore, the parameters of quasi-cyclic codes have asymptotic good performance.
%In recent decades, many scholars have studied the algebraic structure of quasi-cyclic codes in depth \cite{ling2001algebraic,guneri2012concatenated,semenov2012spectral,ezerman2021comparison} and constructed a large number of record-breaking linear codes from quasi-cyclic codes \cite{aydin2001structure,daskalov2003new,Chen2016NewBH,aydin2019equivalence}, which demonstrated their excellent performance. Therefore, it is natural to consider using quasi-cyclic codes to construct good QECCs. 
%Hagiwara et al. \left[12,13\right] studied constructions of quantum codes by QC LDPC
%codes. They focused on long codes and probabilistic constructions.
Work on constructing QECCs from QC codes started with Hagiwara et al. \cite{hagiwara2007quantum}, but progress was slow due to the lack of a practical theory. 
In 2018, Galindo et al. \cite{galindo2018quasi} determined some sufficient conditions for dual-containment of a class of QC codes under Euclidean, Hermitian, and symplectic inner products. 
Although they did not construct record-breaking QECCs, their work provided a new perspective.
Inspired by Galindo's work, Ezerman et al. \cite{ezerman2019good} studied QC codes over finite fields of order $4$ and $9$ and obtained some new binary and ternary QECCs.
Lv et al. \cite{lv2020explicit,lv2020constructions} and the authors \cite{guan2022construction,guan2022new} studied sufficient conditions for dual-containing or self-orthogonal of three classes of QC codes with specific algebraic structures, and constructed many record-breaking binary QECCs. These studies demonstrate that QC codes also possess superior properties in constructing QECCs.

Nevertheless, many important issues remain concerning constructing QECCs from QC codes. The crucial issue is identifying what kind of QC codes are self-orthogonal, so one can construct them efficiently. 
Notably, this paper determines, for the first time, the necessary and sufficient conditions for symplectic self-orthogonality of QC codes with index even.
The second problem is determining the bounds on the minimum distance of QC codes. 
Around this problem, this paper proposes lower and upper bounds on the minimum symplectic distance of a class of QC codes with index 2, which improves the results in \cite{dastbasteh2023polynomial}, \cite{GuanAdditive2023} and \cite{galindo2018quasi}. 
As a practical application of our findings, we construct numerous binary QECCs that break previous records in terms of their error-correcting capabilities.
%As an application, we construct many record-breaking binary QECCs. 
%Notably, in the case of $1$-generator quasi-cyclic codes with index 2, our symplectic self-orthogonal  codes construction is different from Calderbank's additive construction (Theorem 14 in \cite{calderbank1998quantum}).

% In Section 3, we discuss the necessary and sufficient conditions for quasi-cyclic codes of index even  to be symplectic self-orthogonal, and provide practical approaches for constructing quasi-cyclic symplectic self-orthogonal codes. In Section 4, we propose a class of symplectic self-orthogonal quasi-cyclic codes and construct many record-breaking binary QECCs. Finally, in Section 5, we discuss our main results and future research directions.
This paper is structured as follows. Section \ref{sec2} provides some basics that we use in this paper.
In Section \ref{sec3}, we discuss the necessary and sufficient conditions for QC codes of index even to be symplectic self-orthogonal.
In Section \ref{sec4}, we provide bounds on the minimum symplectic distances of a class of $1$-generator QC codes and their symplectic dual codes. Then, we give some interesting and record-breaking results.  
Finally, we conclude this paper in Section \ref{sec5}. 
%This paper presents the parameters of symplectic self-orthogonal or dual-containing codes as $\left[2n,k\right]_q^s$ or $\left[2n,k,d_s\right]_q^s$, where $d_s$ denotes symplectic distance.
\section{Preliminaries}\label{sec2}
This section presents the basics of classical codes and QECCs to facilitate the development of subsequent sections. For more details, see standard textbooks 
\cite{huffman2010fundamentals,huffman2021concise}.
%This section presents the fundamentals of classical codes and QECCs to facilitate the development of subsequent sections. For more detailed information, we refer the reader to standard textbooks \cite{huffman2010fundamentals,huffman2021concise}.

\subsection{Linear codes}

Let $\mathbb{F}_q$ be a finite field with $q$ elements, where $q$ is a prime power. Let $\mathbb{F}_q^*=\mathbb{F}_q\setminus \{0\}$. % and positive integer $\ell$ satisfies $\ell \equiv 0\pmb{2}$.
An $\left[\ell n,k\right]_q$ linear code $\mathscr{C}$ is a $k$-dimensional subspace of  $\mathbb{F}_{q}^{\ell n}$. 
For a codeword $\mathbf{c}=(c_{0},c_1,\ldots, c_{\ell n-1})$ of $\mathscr{C}$, the Hamming weight of $\mathbf{c}$ is $\mathbf{w}_{H}(\mathbf{c})=|\left\{i \mid c_{i} \neq 0, 0 \leq i \leq \ell n -1 \right\}|$.
If $\ell n$ is even and $N=\ell n/2$,
then the symplectic weight of $\mathbf{c}$ is 
$\mathbf{w}_{s}(\mathbf{c})=|\{i \mid (c_{i}, c_{N+i})  \neq(0,0), 0 \leq i \leq N-1 \}|$. 
The support of $\mathbf{c}$, denoted by $\mathrm{Supp}(\mathbf{c})$, is the set of coordinates of nonzero entries in $\mathbf{c}$, which is a subset of $\{0,1,\ldots,\ell n-1\}$.
If the minimum Hamming (resp. symplectic) weight of $\mathscr{C}$ is $d_{H}=\min \left\{\mathbf{w}_{H}(\mathbf{c}) \mid \mathbf{c} \in \mathscr{C}, \mathbf{c} \ne \mathbf{0} \right\}$ (resp. $d_{s}=\min \left\{\mathbf{w}_{s}(\mathbf{c}) \mid \mathbf{c} \in \mathscr{C}, \mathbf{c} \ne \mathbf{0}\right\}$), then $\mathscr{C}$ can be written as an $\left[\ell n,k,d_H\right]_q$ (resp. $\left[\ell n,k,d_s\right]_q^s$ ) code.

For $\mathbf{u }=(u_{0},u_1,  \ldots, u_{\ell n-1})$, 
$\mathbf{v}=(v_{0},v_1,\ldots, v_{\ell n-1}) \in \mathbb{F}_q^{\ell n}$,  the Euclidean and symplectic inner products of them are  $\left\langle\mathbf{u}, \mathbf{v} \right\rangle_{e}=\sum\limits_{i=0}^{\ell n-1} u_{i} v_{i}$ and 
$\left\langle\mathbf{u},\mathbf{v} \right\rangle_{s}=\sum\limits_{i=0}^{N-1}\left(u_{i} v_{N+i}-u_{N+i} v_{i}\right)$, respectively. 
The Euclidean and symplectic dual codes of $\mathscr{C}$ are  $\mathscr{C}^{\perp_{e}}=\left\{\mathbf{v} \in \mathbb{F}_{q}^{\ell n} \mid\left\langle\mathbf{u}, \mathbf{v} \right\rangle_{e}=0, \forall \mathbf{u} \in \mathscr{C}\right\}$ and
$ \mathscr{C}^{\perp_{s}}= \left\{\mathbf{v} \in \mathbb{F}_{q}^{\ell n} \mid\left\langle\mathbf{u},\mathbf{v} \right\rangle_{s}=0, \forall \mathbf{u} \in \mathscr{C}\right\}.$
If $\mathscr{C} \subset \mathscr{C}^{\perp_{e}}$ (resp. $\mathscr{C} \subset \mathscr{C}^{\perp_{s}}$), then $\mathscr{C}$ is a Euclidean (resp. symplectic) self-orthogonal code.

For $i \in\{1, 2\}$, let $\mathscr{C}_i$ be an $\left[n_i , k_i , d_i \right]_q$ code. Then, the \textit{ direct sum}  of $\mathscr{C}_1$ and $\mathscr{C}_2$ is an $\left[n_1 + n_2, k_1 + k_2, \min\{d_1, d_2\}\right]_q$ code, denoted by $\mathscr{C}_1\oplus  \mathscr{C}_2$.
The following lemma helps identify symplectic self-orthogonal codes.

\begin{lemma}\label{lemma-self-orthogonal}(\cite{huffman2021concise})
	Let $\ell$ be an even integer and $\mathscr{C}$ be a linear code over $\mathbb{F}_q^{\ell n}$. Then, $\mathscr{C}$ is a self-orthogonal code under the symplectic inner product if and only if for all $\mathbf{c_1}, \mathbf{c_2} \in \mathscr{C}$, it holds that $\left\langle \mathbf{c_1}, \mathbf{c_2}\right\rangle_s = 0$.
\end{lemma}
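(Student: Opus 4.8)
The plan is to prove the stated equivalence directly from the definition of the symplectic dual code together with the bilinearity of $\langle\cdot,\cdot\rangle_s$, since no deeper machinery is needed. Recall that, by definition, $\mathscr{C}$ is symplectic self-orthogonal precisely when $\mathscr{C}\subseteq\mathscr{C}^{\perp_s}$, and that $\mathscr{C}^{\perp_s}=\{\mathbf{v}\in\mathbb{F}_q^{\ell n}\mid\langle\mathbf{u},\mathbf{v}\rangle_s=0,\ \forall\,\mathbf{u}\in\mathscr{C}\}$. So the task is simply to unwind these two nested quantifiers and check that the resulting pairwise condition is the one in the statement.

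For the ``only if'' direction, I would assume $\mathscr{C}\subseteq\mathscr{C}^{\perp_s}$ and take arbitrary $\mathbf{c_1},\mathbf{c_2}\in\mathscr{C}$. Since $\mathbf{c_2}\in\mathscr{C}\subseteq\mathscr{C}^{\perp_s}$, the defining property of $\mathscr{C}^{\perp_s}$, applied with the choice $\mathbf{u}=\mathbf{c_1}\in\mathscr{C}$, gives $\langle\mathbf{c_1},\mathbf{c_2}\rangle_s=0$. For the converse, suppose $\langle\mathbf{c_1},\mathbf{c_2}\rangle_s=0$ for all $\mathbf{c_1},\mathbf{c_2}\in\mathscr{C}$; then for every fixed $\mathbf{w}\in\mathscr{C}$ one has $\langle\mathbf{u},\mathbf{w}\rangle_s=0$ for all $\mathbf{u}\in\mathscr{C}$, which is exactly the statement that $\mathbf{w}\in\mathscr{C}^{\perp_s}$, and hence $\mathscr{C}\subseteq\mathscr{C}^{\perp_s}$.

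There is essentially no obstacle here beyond being careful with the quantifier order (the symplectic form is skew, so $\langle\mathbf{u},\mathbf{v}\rangle_s$ and $\langle\mathbf{v},\mathbf{u}\rangle_s$ differ by a sign, but both vanish simultaneously, so the subset inclusion is well-posed either way). The only point worth emphasizing in the write-up is why the pairwise condition cannot be relaxed to testing only the diagonal $\langle\mathbf{c},\mathbf{c}\rangle_s=0$, as one can do for symmetric forms via polarization: because $\langle\cdot,\cdot\rangle_s$ is alternating, $\langle\mathbf{c},\mathbf{c}\rangle_s=0$ holds identically on all of $\mathbb{F}_q^{\ell n}$, and the polarization identity $\langle\mathbf{c_1}+\mathbf{c_2},\mathbf{c_1}+\mathbf{c_2}\rangle_s=\langle\mathbf{c_1},\mathbf{c_2}\rangle_s+\langle\mathbf{c_2},\mathbf{c_1}\rangle_s=0$ degenerates to a tautology. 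Finally, I would note the practical consequence for later sections: by bilinearity it suffices to verify $\langle\mathbf{r_i},\mathbf{r_j}\rangle_s=0$ for the rows $\mathbf{r_i},\mathbf{r_j}$ of a generator matrix of $\mathscr{C}$, and Lemma~\ref{lemma-self-orthogonal} guarantees this extends to all pairs of codewords.
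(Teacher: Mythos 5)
Your proof is correct and essentially the only sensible approach: the paper states this lemma with a citation to a textbook and gives no proof of its own, and the claim is definitional once one unwinds the two nested universal quantifiers in $\mathscr{C}\subseteq\mathscr{C}^{\perp_s}$, which is exactly what you do. Your side remarks are also accurate and worth keeping: the form is alternating, so $\langle\mathbf{c},\mathbf{c}\rangle_s=0$ holds vacuously and polarization yields nothing, which explains why the lemma is phrased in terms of arbitrary pairs; and bilinearity does reduce the check to rows of a generator matrix, which is precisely how the paper uses the lemma in Theorems~\ref{one_quasi-cyclic} and~\ref{theo-h-QC}.
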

%\subsection{Additive codes}
%A code $\mathscr{C}_a$ is said to be an additive code over $\mathbb{F}_{q^2}$ if it is a subgroup of $\mathbb{F}^n_{q^2}$, which means that scalar multiples of the codewords do not necessarily belong to the code. 
%If $\mathscr{C}_a$ has dimension $\frac{k_a}{2}$, the minimum Hamming distance $d$, then $\mathscr{C}_a$ can be denoted as $(n,\frac{k_a}{2},d)_{q^2}$.
%Define a map $\Phi$ form $\mathbb{F}_q^{2n}$ to $\mathbb{F}_{q^2}^n$ as $\Phi(\mathbf{v})=(v_0+w\cdot v_n,v_1+w\cdot v_{n+1},\cdots,v_{n-1}+w\cdot v_{2n-1})$, where $w$ generates $\mathbb{F}_{q^2}$ over $\mathbb{F}_{q}$.
%Then, it is easy to see that $\Phi$ is a bijection, and $\mathbf{w}_H(\Phi(\mathbf{v}))=\mathbf{w}_{s}(\mathbf{v})$.
%Let $G$ denote generator matrix of $\mathscr{C}$ with parameters $\left[\ell n,k,d_s\right]_q^s$, then $\Phi(G)$ can generate additive code $\mathscr{C}_a$ with parameters $(\ell n,\frac{k}{2},d_s)_{q^2}$. In addition, By \cite{calderbank1998quantum}, a trace Hermitian self-orthogonal additive code over $\mathbb{F}_4$ is equivalent to a symplectic self-orthogonal code over $\mathbb{F}_2$. 
%Since additive code is not the focus of this paper, we will not go into it here. For more information, we refer the readers to Ref. .  

\subsection{Cyclic codes and quasi-cyclic codes}

Let $\left\langle x^{n}-1\right\rangle$ represent the ideal of the polynomial ring $\mathbb{F}_q\left[x\right]$ generated by $x^n -1$ and $\mathbb{R}_{q,n} = \mathbb{F}_q\left[x\right]/\left\langle x^{n}-1\right\rangle$ be the quotient ring of $\mathbb{F}_q\left[x\right]$ modulo $\left\langle x^{n}-1\right\rangle$.
An element in  $\mathbb{R}_{q,n}$ is an equivalence class of polynomials in $\mathbb{F}_q[x]$. Each equivalence class has a representative which is a polynomial in $\mathbb{F}_q[x]$ of degree less than $n$. Here, equivalence classes in $\mathbb{R}_{q,n}$ are represented by this representative. For a polynomial $c(x)$ in $\mathbb{F}_q[x]$, we use $[c(x)]$ to denote its representative.

A linear $\left[n,k\right]_q$ code $\mathscr{C}$ is cyclic if and only if for any codeword $\mathbf{c}=(c_{0}, c_{1}, \ldots, c_{n-1})$ of $\mathscr{C}$, the shifted vector $\tau_1(\mathbf{c})=(c_{n-1}, c_{0}, \ldots, c_{n-2})$ is also a codeword of $\mathscr{C}$. 
Define an isomorphic mapping $\pi$ as follows:  
\begin{equation}  
\pi: \mathbb{R}_{q,n} \rightarrow \mathbb{F}_q^n, \quad [c(x)]=\sum_{i=0}^{n-1} c_{i} x^{i} \mapsto \mathbf{c}=(c_0,c_1,\ldots,c_{n-1}).  
\end{equation}

Then, we have $\pi(xc(x))=\tau_1(\mathbf{c})$. 
%For $c(x)$ in $\mathbb{R}_{q,n}$, $\left[c(x)\right]$ denotes its equivalence class in  $\mathbb{R}_{q,n}$. 
%$\pi^{-1}(\tau_1(\mathbf{c}))$ corresponds to the class $\left[xc(x)\right]$. 
%We default that the representative of an equivalence class is the polynomial of least degree in its class. 
Therefore, when studying cyclic codes, $\left[c(x)\right]$ is identified with the coefficient vector $\mathbf{c}$.  
%Consider each codeword $\mathbf{c}$ as coefficient vector of a polynomial $\mathbf{c}(x)=\sum\limits_{i=0}^{n-1} c_{i} x^{i}$ in $\mathbb{F}_{q}\left[x\right]$. 
Since a cyclic code is an ideal in the quotient ring $\mathbb{R}_{q,n}$ and $\mathbb{R}_{q,n}$ is principal, every cyclic code can be generated by a single polynomial in $\mathbb{R}_{q,n}$.
For an $\left[n,k\right]_q$ cyclic code $\mathscr{C}$, we call the monic polynomial $g(x)$ with least degree $n-k$ that generates $\mathscr{C}$ as the generator polynomial of $\mathscr{C}$. 
We use $d_H(\mathscr{C})$ or $d(\left[g(x)\right])$ to denote the minimum Hamming distance of $\mathscr{C}$. 
The parity check polynomial of $\mathscr{C}$ is $h(x)=(x^{n}-1)/g(x)$, whose reciprocal polynomial is \( \tilde{h}(x)= x^{\deg(h(x))} h(x^{-1})\).
The Euclidean dual of $\mathscr{C}$ is also a cyclic code, which has generator polynomial $g^{\perp_e}(x)=h_0^{-1} \tilde{h}(x)$, where $h_0$ is the constant term coefficient of $h(x)$. If $g^{\perp_e}(x)\mid g(x)$, then $\mathscr{C}$ is a Euclidean self-orthogonal code.

% Note that $\mathbf{c}^{\prime}=(c_{\ell n-\ell}, c_{\ell n-\ell+1}, \ldots, c_{\ell n-1}, c_0, c_1, \ldots, c_{\ell n-\ell-1})$ is an alternative but equivalent way of writing $\tau_{\ell}(\mathbf{c})$.  
% Also, if $\ell=1$, then $\mathscr{C}$ is also a cyclic code.  

A linear code $\mathscr{C}$ of length $\ell n$ over $\mathbb{F}_{q}$ is called a QC code of index $\ell$, if for any codeword $\mathbf{c}=\left(c_{0}, c_{1}, \ldots, c_{\ell n-1}\right)$ of $\mathscr{C}$, the $\ell$ shifted vector  
\[  
\tau_{\ell}(\mathbf{c})=(c_{\ell n-\ell}, \ldots, c_{\ell n-1}, c_0, \ldots, c_{\ell n-\ell-1})  
\]  
is also a codeword.  
%Note that if $\ell=1$, $\mathscr{C}$ is also a cyclic code.
By \cite{huffman2021concise}, after appropriate permutations, the generator matrix $G$ of an $h$-generator QC code with index $\ell$ can be put into the following form:
\begin{equation}\setlength{\arraycolsep}{1.2pt}
	G=\left(\begin{array}{cccc}
		A_{1,0} & A_{1,1} & \ldots & A_{1, \ell-1} \\
		A_{2,0} & A_{2,1} & \ldots & A_{2, \ell-1} \\
		\vdots & \vdots & \ddots & \vdots \\
		A_{h, 0} & A_{h, 1} & \ldots & A_{h, \ell-1}
	\end{array}\right),
\end{equation}
where  $A_{i, j}$ $(1\le i \le h, 0\le j \le \ell-1)$ is an $n\times n$ circulant matrix. % established by polynomial $[a_{i,j}(x)] \in \mathbb{R}_{q,n}$. 
Therefore, a QC code of index $\ell$ also can be identified with an $\mathbb{R}_{q,n}$-submodule of $\mathbb{R}_{q,n}^{\ell}$.

%For $\ell \ge 2$ and $\ell$ is, quasi-cyclic codes also can be transformed into additive cyclic codes over $\mathbb{F}_{q^{2}}$ via map $\Phi$ above. 
%Therefore, quasi-cyclic codes in Section \ref{sec4} also can be viewed as additive cyclic codes over $\mathbb{F}_{4}$.
% generated by polynomials  $a_{i, j}(x)\in \mathbb{R}_{q,n}$, where  $1 \leq i \leq h$  and  $0 \leq j \leq \ell-1$. 
%
%
%Circulant matrices are basic components in the generator matrix for quasi-cyclic codes. An $n\times n$ circulant matrix $A$ is defined as
%\begin{equation}
%	A=\left(\begin{array}{ccccc}
%		a_{0} & a_{1} & a_{2} & \ldots & a_{n-1} \\
%		a_{n-1} & a_{0} & a_{1} & \ldots & a_{n-2} \\
%		\vdots & \vdots & \vdots & \vdots & \vdots \\
%		a_{1} & a_{2} & a_{3} & \ldots & a_{0}
%	\end{array}\right).
%\end{equation}
%
%If the first row of  $A$  is mapped onto polynomial  $a(x)$, then circulant matrix  $A$  is isomorphic to polynomial $a(x)=a_{0}+a_{1} x+\cdots+a_{n-1} x^{n-1} \in \mathbb{R}_{q,n}$. So $A$ can be determined by polynomial $a(x)$. 
%Generator matrix of $h$-generator quasi-cyclic code with index $\ell$ 
%%can be transformed into rows of $n\times n$ circulant matrices by suitable permutation of columns, which 
%has the following form:

\subsection{Quantum error-correction codes}
Since our purpose is to improve the records in \cite{Grassltable}, here we only introduce the basics of binary QECCs. 
Similar to the classical situation, a binary QECC $\mathrm{Q}$ of length $n$ is a $K$-dimensional subspace of $2^n$-dimensional Hilbert space $(\mathbb{C}^{2})^{\otimes n}$, where $\mathbb{C}$ represents complex field and $(\mathbb{C}^{2})^{\otimes n}$ is the $n$-fold tensor power of $\mathbb{C}^2$. If $K = 2^k$, then the parameters of $\mathrm{Q}$ are denoted by $\left[\left[n, k, d\right]\right]_2$, where $d$ is its minimum distance, and $\mathrm{Q}$ can correct up to $\left\lfloor\frac{d-1}{2}\right\rfloor$ qubit errors. 
\begin{lemma}\label{CRSS}(\cite{calderbank1998quantum}) If $\mathscr{C} \subset \mathbb{F}_{2}^{2 n}$ is a symplectic self-orthogonal $\left[2n, k\right]^s_2$ code, then there exists a QECC with parameters $\left[\left[n, n-k, d_s\right]\right]_{2}$, where $d_s$ is the minimum symplectic distance of $\mathscr{C}^{\perp_{s}} \backslash\mathscr{C}$.
\end{lemma}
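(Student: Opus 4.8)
The statement is the CRSS stabilizer construction, so the plan is to pass from the classical symplectic-self-orthogonal code to a stabilizer group on $n$ qubits and then verify the quantum error-correction conditions. First I would fix the standard dictionary between $\mathbb{F}_2^{2n}$ and the $n$-qubit Pauli group $\mathcal{P}_n$: to a vector $\mathbf{c}=(\mathbf{a}\mid\mathbf{b})$ with $\mathbf{a}=(a_0,\dots,a_{n-1})$, $\mathbf{b}=(b_0,\dots,b_{n-1})$ assign the Pauli operator $E(\mathbf{c})=X^{a_0}Z^{b_0}\otimes\cdots\otimes X^{a_{n-1}}Z^{b_{n-1}}$. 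Two facts must be checked here. (i) Modulo phases, $\mathbf{c}\mapsto E(\mathbf{c})$ is an isomorphism of $(\mathbb{F}_2^{2n},+)$ onto $\mathcal{P}_n/\langle iI\rangle$, and the number of tensor factors on which $E(\mathbf{c})$ acts nontrivially is exactly the symplectic weight $\mathbf{w}_s(\mathbf{c})$ (with the paper's indexing, $N=n$). (ii) From $XZ=-ZX$ and a sign count across tensor positions, $E(\mathbf{c})E(\mathbf{c}')=(-1)^{\langle\mathbf{c},\mathbf{c}'\rangle_s}E(\mathbf{c}')E(\mathbf{c})$, so $E(\mathbf{c})$ and $E(\mathbf{c}')$ commute precisely when $\langle\mathbf{c},\mathbf{c}'\rangle_s=0$.

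Next I would construct the stabilizer. Choose a basis $\mathbf{c}_1,\dots,\mathbf{c}_k$ of $\mathscr{C}$. By Lemma~\ref{lemma-self-orthogonal}, $\langle\mathbf{c}_i,\mathbf{c}_j\rangle_s=0$ for all $i,j$, so the $E(\mathbf{c}_i)$ pairwise commute; after fixing Hermitian square-to-identity representatives, they generate an elementary abelian $2$-subgroup $S\subseteq\mathcal{P}_n$ with $-I\notin S$ and $|S|=2^k$. (Non-degeneracy of the symplectic form forces $k\le n$, since $\mathscr{C}\subseteq\mathscr{C}^{\perp_s}$ has dimension $2n-k$.) Let $\mathrm{Q}$ be the joint $(+1)$-eigenspace of $S$, equivalently the image of the projector $\Pi=2^{-k}\prod_{i=1}^{k}(I+E(\mathbf{c}_i))$. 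Since the $k$ generators are independent, $\mathrm{tr}\,\Pi=2^{n-k}$, so $\dim\mathrm{Q}=2^{n-k}$ and $\mathrm{Q}$ is an $[[n,n-k,d]]_2$ code for some $d$ to be identified.

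To compute $d$, note that the centralizer of $S$ in $\mathcal{P}_n$, read back through the dictionary, is exactly $\mathscr{C}^{\perp_s}$: $E(\mathbf{c})$ commutes with every $E(\mathbf{c}_i)$ iff $\langle\mathbf{c},\mathbf{c}_i\rangle_s=0$ for all $i$ iff $\mathbf{c}\in\mathscr{C}^{\perp_s}$. I would then invoke the Knill--Laflamme conditions: $\mathrm{Q}$ corrects a set $\mathcal{E}$ of errors iff for all $E,F\in\mathcal{E}$ the operator $E^\dagger F$ either lies in $S$ (hence acts as the identity on $\mathrm{Q}$) or anticommutes with some generator of $S$ (hence $\Pi E^\dagger F\Pi=0$). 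Take $\mathcal{E}$ to be all Pauli errors of symplectic weight $\le t$; then $E^\dagger F$ has symplectic weight $\le 2t$. If $2t<d_s:=\min\{\mathbf{w}_s(\mathbf{c}):\mathbf{c}\in\mathscr{C}^{\perp_s}\setminus\mathscr{C}\}$, then an $E^\dagger F$ lying in $\mathscr{C}^{\perp_s}$ must already lie in $\mathscr{C}$, hence in $S$ up to an absorbed phase, while an $E^\dagger F\notin\mathscr{C}^{\perp_s}$ anticommutes with some generator. Thus $\mathrm{Q}$ corrects all Pauli errors of weight $\le\lfloor(d_s-1)/2\rfloor$, and by linearity all errors of weight $\le\lfloor(d_s-1)/2\rfloor$, giving $d\ge d_s$; conversely a minimum-weight vector of $\mathscr{C}^{\perp_s}\setminus\mathscr{C}$ yields an uncorrectable pair, so $d=d_s$.

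The main obstacle is the phase bookkeeping of the first step: pinning down the isomorphism $\mathbb{F}_2^{2n}\to\mathcal{P}_n/\langle iI\rangle$, proving the commutation identity with the correct sign $(-1)^{\langle\mathbf{c},\mathbf{c}'\rangle_s}$, and --- the genuinely delicate point --- showing that self-orthogonality permits a choice of signs for the generators with $-I\notin S$, which is precisely what guarantees $\mathrm{Q}\neq\{0\}$. Once this dictionary is in place, the dimension count and the translation of the Knill--Laflamme conditions are routine.
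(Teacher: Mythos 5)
The paper does not actually prove Lemma~\ref{CRSS}: it is quoted as a known result from Calderbank, Rains, Shor, and Sloane \cite{calderbank1998quantum} and used as a black box, so there is no ``paper's own proof'' to compare against. Your argument is a correct reconstruction of the original CRSS stabilizer proof, and the dictionary you set up --- $\mathbf{c}=(\mathbf{a}\mid\mathbf{b})\mapsto E(\mathbf{c})=\bigotimes_i X^{a_i}Z^{b_i}$, the sign law $E(\mathbf{c})E(\mathbf{c}')=(-1)^{\langle\mathbf{c},\mathbf{c}'\rangle_s}E(\mathbf{c}')E(\mathbf{c})$, the identification of Pauli weight with symplectic weight, the trace computation $\operatorname{tr}\Pi=2^{n-k}$, the identification of the centralizer of $S$ with $\mathscr{C}^{\perp_s}$, and the Knill--Laflamme reduction giving distance $d_s=\min\{\mathbf{w}_s(\mathbf{c}):\mathbf{c}\in\mathscr{C}^{\perp_s}\setminus\mathscr{C}\}$ --- is exactly what is needed.

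One small clarification on the point you flag as ``genuinely delicate'': the exclusion $-I\notin S$ is in fact automatic once you pick, for each basis vector $\mathbf{c}_i$, a Hermitian representative $g_i$ with $g_i^2=I$ (e.g.\ $g_i=i^{\mathbf{a}_i\cdot\mathbf{b}_i}E(\mathbf{c}_i)$). Commutativity (from self-orthogonality) makes any element of $S$ a product $\prod_{j\in J}g_j$, and if this equalled $-I$ then $\sum_{j\in J}\mathbf{c}_j=\mathbf{0}$ in $\mathbb{F}_2^{2n}$, forcing $J=\emptyset$ by linear independence --- but the empty product is $+I$. So self-orthogonality is spent on commutativity, and independence then gives $-I\notin S$ for free; no extra freedom in the sign choice is needed. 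With that remark the proof is complete and agrees with the cited source.
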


In general, QECCs can also be derived from existing ones by the following propagation rules, which will help us to get more record-breaking QECCs later.

\begin{lemma}\label{propagation_rules}
	(\cite{calderbank1998quantum}) Suppose there exists an $\left[\left[n, k, d\right]\right]_{2}$ QECC. Then, the following QECCs also exist.\\
	(1) $\left[\left[n, k-1, d\right]\right]_{2}$ for $k\ge 1$;\\
	(2) $\left[\left[n+1, k, d\right]\right]_{2}$ for $k>0$;\\
	(3) $\left[\left[n-1, k+1, d-1\right]\right]_{2}$ for $n\ge2$.
\end{lemma}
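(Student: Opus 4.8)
The plan is to derive all three rules from the symplectic reformulation in Lemma~\ref{CRSS}: fix a symplectic self-orthogonal code $\mathscr{C}\subseteq\mathbb{F}_2^{2n}$ with $\dim\mathscr{C}=n-k$ whose associated QECC is the given $\left[\left[n,k,d\right]\right]_2$, so that $d=\min\{\mathbf{w}_s(\mathbf{w})\mid\mathbf{w}\in\mathscr{C}^{\perp_s}\setminus\mathscr{C}\}$ and, since the symplectic form is alternating over $\mathbb{F}_2$, $\mathscr{C}\subseteq\mathscr{C}^{\perp_s}$ with every vector self-orthogonal. Each rule corresponds to a concrete modification of $\mathscr{C}$: enlarging it (rule (1)), padding it with an extra coordinate pair (rule (2)), or shortening it at a coordinate pair (rule (3)). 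In each case I will check that the new code is still symplectic self-orthogonal, compute its dimension, control the minimum symplectic weight of ``dual minus code'', and invoke Lemma~\ref{CRSS}.

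For rule (1), since $k\ge1$ we have $\dim\mathscr{C}^{\perp_s}=n+k>n-k=\dim\mathscr{C}$, so we may pick $\mathbf{v}\in\mathscr{C}^{\perp_s}\setminus\mathscr{C}$ and set $\mathscr{C}_1=\mathscr{C}+\langle\mathbf{v}\rangle$. As $\mathbf{v}$ is symplectically orthogonal to $\mathscr{C}$ and to itself, $\mathscr{C}_1$ is symplectic self-orthogonal of dimension $n-k+1$; moreover $\mathscr{C}\subseteq\mathscr{C}_1$ forces $\mathscr{C}_1^{\perp_s}\subseteq\mathscr{C}^{\perp_s}$, hence $\mathscr{C}_1^{\perp_s}\setminus\mathscr{C}_1\subseteq\mathscr{C}^{\perp_s}\setminus\mathscr{C}$ and the relevant minimum symplectic weight is at least $d$. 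Lemma~\ref{CRSS} then yields a $\left[\left[n,\,n-(n-k+1),\,\ge d\right]\right]_2=\left[\left[n,k-1,\ge d\right]\right]_2$ code, which in particular gives the claimed $\left[\left[n,k-1,d\right]\right]_2$. For rule (2), embed $\mathscr{C}$ in $\mathbb{F}_2^{2(n+1)}$ on the coordinate set $\{0,\dots,n-1\}\cup\{n+1,\dots,2n\}$, so the new pair $(n,\,2n+1)$ is identically zero on the image $\widehat{\mathscr{C}}$; then adjoin the unit vector $\mathbf{e}_n$, which is symplectically orthogonal to $\widehat{\mathscr{C}}$ (every word of $\widehat{\mathscr{C}}$ vanishes in position $2n+1$) and to itself. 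The resulting code $\mathscr{C}_2=\widehat{\mathscr{C}}+\langle\mathbf{e}_n\rangle$ is symplectic self-orthogonal of dimension $n-k+1$ in $\mathbb{F}_2^{2(n+1)}$, and a short computation gives $\mathscr{C}_2^{\perp_s}=\{\mathbf{w}\mid \mathbf{w}\text{ restricted to the old coordinates lies in }\mathscr{C}^{\perp_s},\ w_{2n+1}=0,\ w_n\text{ arbitrary}\}$. Writing any $\mathbf{w}\in\mathscr{C}_2^{\perp_s}\setminus\mathscr{C}_2$ as $\mathbf{w}'+\lambda\mathbf{e}_n$ with $\mathbf{w}'$ supported on the old coordinates, one sees $\mathbf{w}'\in\mathscr{C}^{\perp_s}\setminus\mathscr{C}$, so $\mathbf{w}_s(\mathbf{w})\ge\mathbf{w}_s(\mathbf{w}')\ge d$, with equality attained at $\lambda=0$ and a minimal $\mathbf{w}'$. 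Lemma~\ref{CRSS} gives $\left[\left[n+1,\,(n+1)-(n-k+1),\,d\right]\right]_2=\left[\left[n+1,k,d\right]\right]_2$ (the hypothesis $k>0$ only serves to make $d$ meaningful).

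Rule (3) is the genuine obstacle. After a symplectic coordinate permutation, work with the last pair $(n-1,\,2n-1)$ and let $\rho$ be the projection $\mathbf{c}\mapsto(c_{n-1},c_{2n-1})$ onto $\mathbb{F}_2^2$. Shorten $\mathscr{C}$ at this pair, i.e. take $\{\mathbf{c}\in\mathscr{C}\mid\rho(\mathbf{c})=\mathbf{0}\}$ and delete the two coordinates, obtaining $\mathscr{S}\subseteq\mathbb{F}_2^{2(n-1)}$; restriction of the symplectic form shows $\mathscr{S}$ is symplectic self-orthogonal, and the standard shortening--puncturing duality, valid here because the removed coordinates form a full pair, gives $\mathscr{S}^{\perp_s}=\{\text{delete the two coordinates from each word of }\mathscr{C}^{\perp_s}\}$. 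The work lies in two places. First, a dimension count showing $\dim\mathscr{S}=n-k-2$: this rests on a case analysis according to $\dim\rho(\mathscr{C})\in\{0,1,2\}$ together with the self-orthogonality identity $\dim\rho(\mathscr{C})+\dim\bigl(\mathscr{C}^{\perp_s}\cap\langle\mathbf{e}_{n-1},\mathbf{e}_{2n-1}\rangle\bigr)=2$, plus an adjustment in the degenerate cases (choosing the coordinate pair suitably, or passing to a sub/super-code exactly as in rule (1)). Second, a weight estimate showing that puncturing a minimal-weight representative of $\mathscr{C}^{\perp_s}\setminus\mathscr{C}$ yields a word of $\mathscr{S}^{\perp_s}\setminus\mathscr{S}$ of symplectic weight at least $d-1$, since deleting one coordinate pair lowers the symplectic weight by at most one and, via the same identity, the punctured word cannot fall into $\mathscr{S}$. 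With $\dim\mathscr{S}=n-k-2$ and minimum symplectic weight of $\mathscr{S}^{\perp_s}\setminus\mathscr{S}$ at least $d-1$, Lemma~\ref{CRSS} produces $\left[\left[n-1,\,(n-1)-(n-k-2),\,\ge d-1\right]\right]_2=\left[\left[n-1,k+1,\ge d-1\right]\right]_2$. Throughout, one assumes the starting code is nondegenerate (equivalently $d\ge2$) so that these manipulations and the resulting parameters are meaningful; the bookkeeping in rule (3) — tracking exactly how shortening interacts with the symplectic pairing on the removed pair — is the only place where real care is needed, everything else being routine linear algebra over $\mathbb{F}_2$.
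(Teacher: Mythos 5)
The paper does not prove this lemma --- it is stated as a citation to Calderbank--Rains--Shor--Sloane, so there is no ``paper's proof'' for your argument to be compared against. Judged on its own terms, your treatment of rules (1) and (2) is correct and routine: adjoining a vector from $\mathscr{C}^{\perp_s}\setminus\mathscr{C}$ preserves isotropy because the form is alternating, and padding with a new pair plus one unit vector does exactly what you say. Rule (3), however, is not actually proved, and the gap is real rather than cosmetic.

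Two specific problems with your rule (3). First, the shortening argument delivers dimension $n-k-2$ only when there is a coordinate pair $i$ with $\dim\rho_i(\mathscr{C})=2$, and you never show such a pair exists; when $\dim\rho_i(\mathscr{C})\le 1$ for every $i$, the ``adjustment (choosing the coordinate pair suitably, or passing to a sub/super-code)'' is left unexplained and it is not clear it can recover the claimed $k+1$. Second, even when $\dim\rho_i(\mathscr{C})=2$, your distance estimate fails for degenerate codes: a word $\mathbf{w}\in\mathscr{C}^{\perp_s}$ of weight $d-1$ that happens to lie in $\mathscr{C}$ (which exists precisely when the code is impure) punctures to a nonzero word of $\mathscr{S}^{\perp_s}\setminus\mathscr{S}$ of weight $d-2$, violating the bound. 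Your closing caveat that ``one assumes the starting code is nondegenerate (equivalently $d\ge 2$)'' conflates two different notions: $d\ge 2$ is not the same as purity, and purity (no nonzero element of $\mathscr{C}$ of symplectic weight $<d$) is what the CRSS version of rule (3) actually hypothesizes. Indeed, without such a hypothesis the statement is false as written --- e.g. the trivial $[[2,2,1]]_2$ code would yield an impossible $[[1,3,0]]_2$ --- so any correct proof must either add the purity assumption or restrict attention to the regime in which the paper actually applies the rule.
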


\begin{note}
Throughout the paper, we agree on the following notation.
All the indices $\ell$ of QC codes in the subsequent texts will be even and set $m=\ell/ 2$.
For $f(x)=f_{0}+f_{1} x+\cdots+ f_{n-1} x^{n-1} \in \mathbb{R}_{q,n}$, denote $\bar{f}(x)=x^{n}f(x^{-1})$. 
%In this work, all polynomials are computed over $\mathbb{R}_{q,n}$, therefore we agree with that 
% \textcolor{blue}{When we consider the codeword or the inner product between polynomials, $\left[f(x)\right]$ is identified with the coefficient vector $\pi(f(x))=({f_{0}},f_{1},\ldots ,f_{n-1})$. }
Let $\mathbf{0}$ and $\mathbf{0}_n$ denote a appropriate zero matrix and a zero vector of length $n$, respectively.
Additionally, we denote 
$\Omega _{mn}=\left(\begin{array}{cc}
	\mathbf{0} & I_{m n} \\
	-I_{m n} & \mathbf{0}
\end{array}\right)$, where $I_{mn}$ is the identity matrix of order $mn$. 
\end{note}

%%%%%%%%%%%%%%%%%-III-New method of constructing quantum codes--%%%%%%%%%%%%%%%%%%%%%%%%%%%%%%
\section{The necessary and sufficient conditions for QC codes with index even to be symplectic self-orthogonal}\label{sec3}

%Let $g(x)=g_{0}+g_{1} x+g_{2} x+\cdots+ g_{n-1} x^{n-1} \in \mathbb{R}_{q,n}$, $\left[g(x)\right]$ denote vector generated by coefficients of $g(x)$ in $\mathbb{F}_{q}^{n}$, i.e $\left[g(x)\right]=\left[{{g}_{0}},{{g}_{1}},{{g}_{2}},\ldots ,{{g}_{n-1}}\right]$. In adiition, we consider the polynomial $\bar{g}(x)$ defined as $\bar{g}(x)=\sum _{i=0}^{n-1}g_ix^{n-i}$. 
%Moreover, since this paper is studying quasi-cyclic codes with even indices, it is agreed herein that all the indices $\ell$ of quasi-cyclic codes in subsequent texts are even numbers, and let $m=\frac{\ell}{2}$.

In this section, we present the necessary and sufficient conditions for QC codes with index even to be symplectic self-orthogonal. 
First of all, the following lemma is essential for determining the Euclidean inner product between different polynomials in the form of coefficient vectors.

\begin{lemma} \label{commutative_law}(\cite{galindo2018quasi})
	Let $f_a(x)$, $f_b(x)$, and $f_c(x)$ be polynomials in $\mathbb{R}_{q,n}$. Then the following equation holds for the Euclidean inner product among them 
	\begin{equation}
		\left\langle\left[f_a(x) f_b(x)\right],\left[f_c(x)\right]\right\rangle_{e}=\left\langle\left[f_a(x)\right],\left[\overline{f}_b(x) f_c(x)\right]\right\rangle_{e}.
	\end{equation}
\end{lemma}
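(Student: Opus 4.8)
The plan is to use multilinearity to reduce the identity to the case of monomials, and then verify it by an elementary count of exponents modulo $n$.

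\medskip

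\noindent\textbf{Step 1: reduction to monomials.} I would first observe that both sides are $\mathbb{F}_q$-linear separately in each of $f_a$, $f_b$, and $f_c$. On the left, $[f_a(x)f_b(x)]$ is linear in $f_a$ and in $f_b$, and $\langle\cdot,\cdot\rangle_e$ is bilinear; on the right, the bar operation $f_b\mapsto\overline f_b$ is $\mathbb{F}_q$-linear by its very definition, and the remaining ingredients are as before. Consequently it suffices to prove the claim when $f_a(x)=x^a$, $f_b(x)=x^b$, $f_c(x)=x^c$ for arbitrary $0\le a,b,c\le n-1$, since these monomials span $\mathbb{R}_{q,n}$ over $\mathbb{F}_q$.

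\medskip

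\noindent\textbf{Step 2: the monomial case.} Working in $\mathbb{R}_{q,n}$ one has $[\overline{x^b}]=[x^{(n-b)\bmod n}]=[x^{-b}]$, so that $[x^ax^b]=[x^{(a+b)\bmod n}]$ and $[\overline{x^b}\,x^c]=[x^{(c-b)\bmod n}]$. Writing $e_j$ for the standard basis vector of $\mathbb{F}_q^{n}$ supported on coordinate $j$, the left-hand side becomes $\langle e_{(a+b)\bmod n},\,e_c\rangle_e$, which equals $1$ exactly when $a+b\equiv c\pmod n$ and is $0$ otherwise, while the right-hand side becomes $\langle e_a,\,e_{(c-b)\bmod n}\rangle_e$, which equals $1$ exactly when $a\equiv c-b\pmod n$ and is $0$ otherwise. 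Since $a+b\equiv c\pmod n$ is the same condition as $a\equiv c-b\pmod n$, the two sides coincide, and Step 1 then upgrades this to arbitrary $f_a,f_b,f_c$.

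\medskip

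I do not anticipate a genuine obstacle here: morally the lemma just asserts that multiplication by $\overline f_b(x)$ is the adjoint of multiplication by $f_b(x)$ with respect to $\langle\cdot,\cdot\rangle_e$, which one can alternatively see in a single line from the fact that the circulant matrix associated with $\overline f_b$ is the transpose of the one associated with $f_b$ (because the cyclic-shift matrix $\tau_1$ satisfies $\tau_1^{\top}=\tau_1^{-1}$). The only point requiring a little care is the bookkeeping of exponents in the quotient ring --- in particular, checking that the bar map is well defined on $\mathbb{R}_{q,n}$ and that the edge case $b=0$, where $\overline{x^0}=x^n\equiv 1$, causes no trouble --- and Step 2 settles that.
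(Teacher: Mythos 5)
The paper does not prove this lemma; it is imported verbatim from \cite{galindo2018quasi}. Your argument is correct and self-contained: reduction to monomials by multilinearity of all the maps involved is legitimate, the identification $[\overline{x^b}]=[x^{-b}]$ in $\mathbb{R}_{q,n}$ (including the edge case $b=0$, where $\overline{x^0}=x^n\equiv 1$) is exactly right given the paper's definition $\overline f(x)=\sum_{i=0}^{n-1}f_i x^{n-i}$, and the exponent bookkeeping $a+b\equiv c\pmod n \iff a\equiv c-b\pmod n$ closes the monomial case. Your closing remark — that this is the adjoint property of the circulant for $\overline f_b$ being the transpose of the circulant for $f_b$, since the shift matrix is orthogonal — is a clean one-line alternative and is the conceptual content underlying how the lemma is used later in Theorems \ref{cornerstone} and \ref{one_quasi-cyclic}. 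No gaps.
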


Since the symplectic inner product can be decomposed into Euclidean inner product, here we determine the Euclidean orthogonality of cyclic codes before discussing the symplectic orthogonality of QC codes. Under the property of cyclic codes, it is easy to derive Theorem \ref{cornerstone}, which establishes necessary and sufficient conditions for Euclidean orthogonality between two cyclic codes.

\begin{theorem}\label{cornerstone}
	Let $g_i(x)$ be a polynomial in $\mathbb{R}_{q,n}$ that divides $x^{n}-1$, and let $\mathscr{C}_i$ be a cyclic code generated by $\left[g_i(x)\right]$, where $i=1$, $2$.
	Then $\mathscr{C}_1$ and $\mathscr{C}_2$ are orthogonal with each other, i.e., for any $a(x)$ and $b(x)$ in $\mathbb{R}_{q,n}$, 
	$\left\langle\left[a(x) g_1(x)\right],\left[b(x) g_2(x)\right]\right\rangle_{e}=0$ holds if and only if $g_1^{\perp_{e}}(x) \mid g_2(x)$.
\end{theorem}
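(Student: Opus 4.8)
The plan is to use Lemma~\ref{commutative_law} to collapse the two free polynomials $a(x),b(x)$ into a single polynomial divisibility in $\mathbb{F}_q[x]$, and then to translate that divisibility into the stated condition by means of the explicit generator polynomial $g_1^{\perp_e}(x)$ recalled in Section~\ref{sec2}.

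First I would rewrite the inner product. Applying Lemma~\ref{commutative_law} with $f_a=a(x)$, $f_b=g_1(x)$, $f_c=b(x)g_2(x)$ gives
\[
\left\langle\left[a(x)g_1(x)\right],\left[b(x)g_2(x)\right]\right\rangle_e=\left\langle\left[a(x)\right],\left[\,\overline{g_1}(x)\,b(x)\,g_2(x)\,\right]\right\rangle_e .
\]
Since $\pi$ is a bijection, $[a(x)]$ ranges over all of $\mathbb{F}_q^{n}$ as $a(x)$ ranges over $\mathbb{R}_{q,n}$, and the Euclidean inner product is nondegenerate; hence the right-hand side is $0$ for every $a(x)$ exactly when $[\,\overline{g_1}(x)\,b(x)\,g_2(x)\,]=\mathbf{0}_n$ in $\mathbb{R}_{q,n}$. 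Imposing this for every $b(x)$ and taking $b(x)=1$ (the reverse implication being immediate), I reduce the orthogonality of $\mathscr{C}_1$ and $\mathscr{C}_2$ to the single condition $x^{n}-1 \mid \overline{g_1}(x)\,g_2(x)$ in $\mathbb{F}_q[x]$.

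Next I would clear the bar operation and pass to a reciprocal polynomial: writing $g_1(x)=\sum_{i=0}^{r}g_{1,i}x^{i}$ with $r=\deg g_1(x)$ and $g_{1,0}\neq 0$ (as $g_1(x)\mid x^{n}-1$), one has $\overline{g_1}(x)=x^{\,n-r}g_1^{*}(x)$ with $g_1^{*}(x)=x^{r}g_1(x^{-1})$ the reciprocal of $g_1(x)$; since $\gcd\!\bigl(x^{\,n-r},x^{n}-1\bigr)=1$, the condition above is equivalent to $x^{n}-1\mid g_1^{*}(x)\,g_2(x)$. Finally, letting $h_1(x)=(x^{n}-1)/g_1(x)$ and taking reciprocals in $x^{n}-1=g_1(x)h_1(x)$ gives $g_1^{*}(x)h_1^{*}(x)=-(x^{n}-1)$, while $g_1^{\perp_e}(x)=h_{1,0}^{-1}h_1^{*}(x)$ by definition, with $h_{1,0}$ the constant term of $h_1(x)$; therefore $g_1^{*}(x)\,g_1^{\perp_e}(x)$ and $x^{n}-1$ are associates in $\mathbb{F}_q[x]$. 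Substituting, the orthogonality condition reads $g_1^{*}(x)\,g_1^{\perp_e}(x)\mid g_1^{*}(x)\,g_2(x)$, and cancelling the nonzero factor $g_1^{*}(x)$ in the domain $\mathbb{F}_q[x]$ yields $g_1^{\perp_e}(x)\mid g_2(x)$, which is the assertion.

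I do not anticipate a serious difficulty; the only delicate points are the opening reduction — using nondegeneracy of $\langle\cdot,\cdot\rangle_e$ and the surjectivity of $a(x)\mapsto[a(x)]$ so that a vanishing inner product against all $a(x)$ forces the ring element to be zero, together with the quantifier over $b(x)$ — and the reciprocal-polynomial bookkeeping (the harmless sign and the unit $h_{1,0}^{-1}$). As a cross-check, the same conclusion follows more conceptually: the hypothesis is exactly $\mathscr{C}_2\subseteq\mathscr{C}_1^{\perp_e}$, and since $\mathscr{C}_1^{\perp_e}=\bigl\langle g_1^{\perp_e}(x)\bigr\rangle$ with $g_1^{\perp_e}(x)\mid x^{n}-1$, the inclusion of these cyclic codes is equivalent to $g_1^{\perp_e}(x)\mid g_2(x)$.
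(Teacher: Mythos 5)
Your proof is correct. Note that the paper does not actually supply a proof of Theorem~\ref{cornerstone}; it only remarks that ``under the property of cyclic codes, it is easy to derive,'' so there is no official argument to compare against. Your main route --- apply Lemma~\ref{commutative_law} to move $g_1$ across the pairing, use nondegeneracy and surjectivity of $a(x)\mapsto[a(x)]$ to reduce to the single divisibility $x^n-1\mid \overline{g_1}(x)\,g_2(x)$, then relate $\overline{g_1}$ to the reciprocal $g_1^*$ and to $g_1^{\perp_e}$ --- is sound, and you were careful about the points that actually need care: $g_{1,0}\neq0$ because $g_1\mid x^n-1$, $\gcd(x^{n-r},x^n-1)=1$, $g_1^*g_1^{\perp_e}$ an associate of $x^n-1$, and cancellation of the nonzero factor $g_1^*$ in the domain $\mathbb{F}_q[x]$. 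Your closing cross-check (orthogonality of $\mathscr{C}_1,\mathscr{C}_2$ is exactly $\mathscr{C}_2\subseteq\mathscr{C}_1^{\perp_e}$, and containment of cyclic codes is divisibility of generator polynomials in the opposite direction) is the shorter, more standard argument and is presumably what the authors had in mind; the computational route you lead with has the merit of rehearsing Lemma~\ref{commutative_law} in precisely the way the proof of Theorem~\ref{one_quasi-cyclic} does, so it is well aligned with the paper even if slightly longer than necessary.
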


%Next, we will prove the symplectic self-orthogonality of QC codes with index even, beginning with the 1-generator and then the $h$-generator ($h\ge2$). 

\begin{definition}\label{one_quasi-cyclic_def}
	Let $g(x)$ be a polynomial in $\mathbb{R}_{q,n}$ that divides ${{x}^{n}}-1$, and let $f_{j}(x)$ be $\ell$ different polynomials in $\mathbb{R}_{q,n}$ for $0\le j\le \ell-1$, such that the greatest common divisor of $f_0(x),f_1(x),\ldots,f_{\ell-1}(x)$ and $\frac{x^n-1}{g(x)}$ is equal to $1$. 
	%Let $g(x)$ and ${{f}_{j}}(x)$ be polynomials in $\mathbb{R}_{q,n}$, where $g(x)\mid ({{x}^{n}}-1)$, and let $h(x)=\frac{x^n-1}{g(x)}$, $\gcd(f_0(x),f_1(x),\ldots,f_{\ell -1}(x),h(x))=1$. 
	If $\mathscr{C}$ is a QC code generated by $(\left[g(x){f}_{0}(x)\right]$, $\left[g(x){f}_{1}(x)\right],\ldots,\left[g(x){f}_{\ell -1}(x)\right]) \in \mathbb{R}_{q,n}^{\ell} $, then $\mathscr{C}$ is called a $1$-generator QC code with index $\ell$.
	The generator matrix $G$ of $\mathscr{C}$  has the following form:
	\begin{equation}
		G=\left(G_{0}, G_{1}, \ldots, G_{\ell-1}\right) ,
	\end{equation}
	where $G_ {j} $ is an $n\times n$  circulant matrix generated by $\left[g (x) f_{j} (x)\right]$.
\end{definition}

According to \cite{aydin2001structure}, the dimension of $\mathscr{C}$ in Definition \ref{one_quasi-cyclic_def} is given by $n-\deg(g (x))$.
We next present a theorem that determines the necessary and sufficient conditions for $1$-generator QC codes with index even to be symplectic self-orthogonal.

\begin{theorem}\label{one_quasi-cyclic}
	If $\mathscr{C}$ is a $1$-generator QC code in Definition \ref{one_quasi-cyclic_def} with index even $\ell$, then $\mathscr{C}$ is symplectic self-orthogonal if and only if the following equation holds:
	\begin{equation}
		%{{g}^{{{\bot}_{e}}}}(x)\mid \sum\limits_{j=0}^{m-1}{g}(x)({{f}_{j}}(x){{\bar{f}}_{m+j}}(x)-{{f}_{m+j}}(x){{\bar{f}}_{j}}(x)).
		g^{\perp_e}(x)\mid g(x)\Lambda_1,
	\end{equation}
where $\Lambda_1 =\sum\limits_{j = 0}^{m - 1}( {f_j}(x){{\bar f}_{m + j}}(x) - {f_{m + j}}(x){{\bar f}_j}(x) )$ and $m=\frac{\ell}{2}$.
\end{theorem}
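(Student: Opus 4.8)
The plan is to reduce the symplectic self-orthogonality of the $1$-generator QC code $\mathscr{C}$ to a divisibility condition on generator polynomials by exploiting Lemma \ref{lemma-self-orthogonal}, the decomposition of the symplectic inner product into Euclidean inner products, and Lemma \ref{commutative_law}. First I would note that, by Lemma \ref{lemma-self-orthogonal}, $\mathscr{C}$ is symplectic self-orthogonal iff $\langle \mathbf{c}_1,\mathbf{c}_2\rangle_s=0$ for all pairs of codewords. A general codeword of $\mathscr{C}$ is $(\,[a(x)g(x)f_0(x)],\ldots,[a(x)g(x)f_{\ell-1}(x)]\,)$ for some $a(x)\in\mathbb{R}_{q,n}$, and similarly with $b(x)$ for the second codeword. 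With the convention $\ell=2m$ and $N=mn$, the symplectic inner product splits as
\begin{equation}
\langle \mathbf{c}_1,\mathbf{c}_2\rangle_s=\sum_{j=0}^{m-1}\Big(\big\langle [a g f_j],[b g f_{m+j}]\big\rangle_e-\big\langle [a g f_{m+j}],[b g f_j]\big\rangle_e\Big).
\end{equation}

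Next I would apply Lemma \ref{commutative_law} to move all the polynomial factors onto one side of each Euclidean inner product. Writing $a(x)g(x)=A(x)$ and $b(x)g(x)=B(x)$, each term $\langle [A f_j],[B f_{m+j}]\rangle_e$ becomes $\langle [A f_j \bar f_{m+j}],[B]\rangle_e$, and the whole sum collapses to $\langle [A\cdot\Lambda_1],[B]\rangle_e$, where $\Lambda_1=\sum_{j=0}^{m-1}(f_j\bar f_{m+j}-f_{m+j}\bar f_j)$ exactly as in the statement (one must be slightly careful with which argument absorbs the bar and with signs, using that $\langle[u],[v]\rangle_e=\langle[v],[u]\rangle_e$ and the identity relating $\bar{\bar f}$ to $f$). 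Thus symplectic self-orthogonality of $\mathscr{C}$ is equivalent to $\langle [a(x)g(x)\Lambda_1(x)],[b(x)g(x)]\rangle_e=0$ for all $a(x),b(x)\in\mathbb{R}_{q,n}$.

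At this point the problem is exactly of the type handled by Theorem \ref{cornerstone}: the left argument ranges over the cyclic code generated by $[g(x)\Lambda_1]$ — or rather over all multiples of $[g(x)\Lambda_1(x)]$ — and the right argument over the cyclic code generated by $[g(x)]$ (here using that $g(x)\mid x^n-1$). I would invoke Theorem \ref{cornerstone} with $g_2(x)=g(x)$ and with the role of "$g_1$" played by $\gcd(g(x)\Lambda_1(x),\,x^n-1)$, so that the orthogonality holds for all $a,b$ iff $g^{\perp_e}(x)\mid g(x)\Lambda_1(x)$ (interpreting the divisibility in $\mathbb{R}_{q,n}$, i.e. modulo $x^n-1$). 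This yields precisely the displayed condition. A small technical point to address is that $\Lambda_1$ need not divide $x^n-1$, so Theorem \ref{cornerstone} is applied to the cyclic code generated by the reduction of $g(x)\Lambda_1(x)$ modulo $x^n-1$, whose generator polynomial is $\gcd(g(x)\Lambda_1(x),x^n-1)$; the condition $g^{\perp_e}(x)\mid g(x)\Lambda_1(x)$ should be read accordingly.

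The main obstacle I anticipate is purely bookkeeping: correctly tracking the bar operation and the minus signs through the application of Lemma \ref{commutative_law}, since $\bar{(\cdot)}$ is an anti-involution and the symplectic form is antisymmetric, so it is easy to end up with $\pm\Lambda_1$ or with the bar on the wrong factor. I would handle this by fixing one term $\langle[Af_j],[Bf_{m+j}]\rangle_e$, carefully rewriting it as $\langle[A],[\bar f_j f_{m+j} B]\rangle_e=\langle[\bar f_j f_{m+j} B],[A]\rangle_e=\langle[B],[f_j \bar{f}_{m+j}\, \overline{?}\cdots]\rangle_e$ — i.e. doing the reduction symmetrically on a single representative term, verifying the $j\leftrightarrow m+j$ antisymmetry, and only then summing over $j$. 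Once the sign conventions are pinned down, the rest is a direct appeal to Theorem \ref{cornerstone} and Lemma \ref{lemma-self-orthogonal}.
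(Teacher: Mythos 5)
Your proposal is correct and follows essentially the same route as the paper's own proof: express two arbitrary codewords via $a(x),b(x)$, expand the symplectic form into its two sums of Euclidean inner products over the $m$ blocks, use Lemma \ref{commutative_law} to push all the $f$-factors into one argument and collapse the sum to $\langle[a g\Lambda_1],[bg]\rangle_e$, and finish with Theorem \ref{cornerstone}. Your extra remarks about the symmetry of the Euclidean form, and about reading the divisibility modulo $x^n-1$ when $g\Lambda_1$ need not divide $x^n-1$, are correct technical clarifications that the paper leaves implicit, but they do not change the argument.
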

\begin{proof}
	Let $a(x)$ and $b(x)$ be arbitrary polynomials in $\mathbb{R}_{q,n}$. Then, any two codewords $\mathbf{c_{1}}$ and $\mathbf{c_{2}}$ of $\mathscr{C}$ can be denoted as $\mathbf{c_{1}}=(\left[a(x) f_{0}(x) g(x)\right],$ $\left[a(x) f_{1}(x) g(x)\right],$ $\ldots, \left[a(x) f_{\ell-1}(x) g(x)\right])$ 
	and $\mathbf{c_{2}}=(\left[b(x) f_{0}(x) g(x)\right],\left[b(x) f_{1}(x) g(x)\right],\ldots, 
	\left[b(x)f_{\ell-1}(x)g(x)\right])$, respectively.
	The symplectic inner product between $\mathbf{c_{1}}$ and $\mathbf{c_{2}}$ is 
		\begin{equation} 
			\begin{array}{rl}
		\left\langle \mathbf{c_{1}}, \mathbf{c_{2}}\right\rangle_{s}&=\mathbf{c_{1}} \cdot \Omega _{mn} \cdot \mathbf{c_{2}}^{T} \\
		&= \sum\limits_{j = 0}^{m - 1} {{{\left\langle {\left[ {a(x)g(x){f_j}(x)} \right],\left[ {b(x)g(x){f_{m + j}}(x)} \right]} \right\rangle }_e}}  \\
		&- \sum\limits_{j = 0}^{m - 1} {{{\left\langle {\left[ {a(x)g(x){f_{m + j}}(x)} \right],\left[ {b(x)g(x){f_j}(x)} \right]} \right\rangle }_e}}  \hfill \\
		&= \sum\limits_{j = 0}^{m - 1} {{{\left\langle {\left[ {a(x)g(x){f_j}(x){{\bar f}_{m + j}}(x)} \right],\left[ {b(x)g(x)} \right]} \right\rangle }_e}}  \\
		&- \sum\limits_{j = 0}^{m - 1} {{{\left\langle {\left[ {a(x)g(x){f_{m + j}}(x){{\bar f}_j}(x)} \right],\left[ {b(x)g(x)} \right]} \right\rangle }_e}}  \hfill \\
		&= \left\langle \left[ a(x) g (x) \Lambda_1\right],\left[b(x)g(x)\right] \right\rangle_e.
	\end{array}	
\end{equation}

	By Theorem \ref{cornerstone}, for all $a(x),b(x)\in\mathbb{R}_{q,n}$, the above equation is equal to zero 
	 if and only if there exists
$g^{\perp e}(x) \mid g (x) \Lambda_1$. 
Thus, we complete the proof.
\end{proof}

%Compared with $1$-generator quasi-cyclic codes, $2$-generator quasi-cyclic codes have a more complex algebraic structure, which can be described as the following definition. 

%\begin{definition} \label{def_2_quasi-cyclic}
%	Let $g_{i}(x)$, $f_{i, j}(x)$ are polynomials in $\mathbb{R}_{q,n}$, and 
%	$g_{i}(x) \mid\left(x^{n}-1\right)$, $1 \leq i \leq 2$, $0 \leq j \leq \ell-1$, then
%	$\left(\left[g_{1}(x) f_{1,0}(x)\right],\left[g_{1}(x) f_{1,1}(x)\right], 
%	\ldots,\left[g_{1}(x) f_{1, \ell-1}(x)\right]\right)$ and
%	$\left(\left[g_{2}(x) f_{2,0}(x)\right],\left[g_{2}(x) f_{2,1}(x)\right], 
%	\ldots, \left[g_{2}(x) f_{2, \ell-1}(x)\right]\right)$ can generate a 2-generator quasi-cyclic code with index $\ell$, whose generator matrix $G$ has the following form:
%	\begin{equation}
%		G=\left(\begin{array}{llll}
%			G_{1,0} & G_{1,1} & \ldots & G_{1, \ell-1} \\
%			G_{2,0} & G_{2,1} & \ldots & G_{2, \ell-1}
%		\end{array}\right),
%	\end{equation}
%	where $G_{i, j}$ are the $n \times n$ circulant matrices generated by $\left[g_{i}(x) f_{i, j}(x)\right]$ respectively.
%\end{definition}
The $h$-generator QC codes can be considered a generalization of $1$-generator QC codes. The definition of $h$-generator QC codes is given here.
\begin{definition}\label{def-h-QC}
	Let  $1 \leq i \leq h$. 
	Let $g_i(x)$ be a polynomial in $\mathbb{R}_{q,n}$ that divides ${{x}^{n}}-1$, and let $f_{i,j}(x)$ be $\ell$ different polynomials in $\mathbb{R}_{q,n}$ for $0\le j\le \ell-1$, such that the greatest common divisor of $f_{i,0}(x),$ $f_{i,1}(x),\ldots,f_{i,{\ell-1}}(x)$ and $\frac{x^n-1}{g_i(x)}$ is equal to $1$.  
	If $\mathscr{C}$ is a QC code with $h$ generators given by:
	$(\left[g_{1}(x) f_{1,0}(x)\right],\left[g_{1}(x) f_{1,1}(x)\right],\ldots,\left[g_{1}(x) f_{1, \ell-1}(x)\right])$, 
	$(\left[g_{2}(x) f_{2,0}(x)\right],$ $\left[g_{2}(x) f_{2,1}(x)\right],\ldots,\left[g_{2}(x) f_{2, \ell-1}(x)\right]),\ldots$, 
	$(\left[g_{h}(x) f_{h, 0}(x)\right],$ $\left[g_{h}(x) f_{h, 1}(x)\right],\ldots,\left[g_{h}(x) f_{h, \ell-1}(x)\right]) \in \mathbb{R}_{q,n}^{\ell} $, then $\mathscr{C}$ is an $h$-generator QC code with index $\ell$, whose generator matrix is
	\begin{equation}\setlength{\arraycolsep}{1.2pt}
		G=\left(\begin{array}{cccc}
			G_{1,0} & G_{1,1} & \ldots & G_{1, \ell-1} \\
			G_{2,0} & G_{2,1} & \ldots & G_{2, \ell-1} \\
			\vdots & \vdots & \ddots & \vdots \\
			G_{h, 0} & G_{h, 1} & \ldots & G_{h, \ell-1}
		\end{array}\right),
	\end{equation}
	where $G_{i, j}$ is an $n \times n$ circulant matrix generated by $\left[g_{i}(x) f_{i, j}(x)\right]$.
\end{definition}

$h$-generator QC codes possess a more complex algebraic structure compared to $1$-generator QC codes. The former can be regarded as a vertical join of $h$ individual $1$-generator QC codes.  The following theorem gives the necessary and sufficient conditions under which the $h$-generator QC codes with index even to be symplectic self-orthogonal.

\begin{theorem}\label{theo-h-QC} 
	If $\mathscr{C}$ is an $h$-generator QC code with index even $\ell$ in Definition \ref{def-h-QC}, then the necessary and sufficient conditions for $\mathscr{C}$ to be symplectic self-orthogonal are that for all $r,s \in\{1,2, \ldots, h\}$, the following equation holds:
 	 \begin{equation}\label{h_sym_quasi-cyclic}
			g_{r}^{\perp_{e}}(x) \mid g_{s}(x)\Lambda_{r,s}, 
	\end{equation} where  $\Lambda_{r,s}=\sum\limits^{m-1}_{i=0} (f_{s, i}(x) \bar{f}_{r, m+i}(x)-f_{s, m+i}(x) \bar{f}_{r, i}(x))$ and $m=\frac{\ell}{2}$.
\end{theorem}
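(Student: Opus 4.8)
The plan is to reduce the $h$-generator case to the $1$-generator machinery already developed in Theorem~\ref{one_quasi-cyclic}, exploiting the observation made just before the statement: an $h$-generator QC code is the vertical join (sum of modules) of $h$ individual $1$-generator QC codes $\mathscr{C}_1,\ldots,\mathscr{C}_h$, where $\mathscr{C}_r$ is generated by $(\left[g_r(x)f_{r,0}(x)\right],\ldots,\left[g_r(x)f_{r,\ell-1}(x)\right])$. By Lemma~\ref{lemma-self-orthogonal}, $\mathscr{C}$ is symplectic self-orthogonal if and only if $\left\langle \mathbf{c_1},\mathbf{c_2}\right\rangle_s = 0$ for all $\mathbf{c_1},\mathbf{c_2}\in\mathscr{C}$. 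Since every codeword of $\mathscr{C}$ is an $\mathbb{R}_{q,n}$-linear combination of the $h$ generators and the symplectic form is $\mathbb{F}_q$-bilinear, this is equivalent to requiring that the symplectic inner product of the $r$-th generator "row" multiplied by an arbitrary $a(x)\in\mathbb{R}_{q,n}$ with the $s$-th generator "row" multiplied by an arbitrary $b(x)\in\mathbb{R}_{q,n}$ vanishes, for every pair $r,s\in\{1,\ldots,h\}$ (note both the diagonal terms $r=s$ and the cross terms $r\neq s$ must be handled, and since the form is alternating the pair $(r,s)$ and $(s,r)$ are related by a sign, but it costs nothing to quantify over all ordered pairs).

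First I would fix $r,s$ and write a generic codeword in $\mathscr{C}_r$ as $\mathbf{c_1}=(\left[a(x)f_{r,0}(x)g_r(x)\right],\ldots,\left[a(x)f_{r,\ell-1}(x)g_r(x)\right])$ and one in $\mathscr{C}_s$ as $\mathbf{c_2}=(\left[b(x)f_{s,0}(x)g_s(x)\right],\ldots,\left[b(x)f_{s,\ell-1}(x)g_s(x)\right])$. Then I would compute $\left\langle\mathbf{c_1},\mathbf{c_2}\right\rangle_s = \mathbf{c_1}\cdot\Omega_{mn}\cdot\mathbf{c_2}^T$ exactly as in the proof of Theorem~\ref{one_quasi-cyclic}: split the symplectic form into the two Euclidean sums over $j=0,\ldots,m-1$, then apply Lemma~\ref{commutative_law} to move the factors $f_{r,m+j}(x)$ and $f_{r,j}(x)$ across the Euclidean inner product as $\bar f_{r,m+j}(x)$ and $\bar f_{r,j}(x)$. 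This collapses the whole expression to $\left\langle\left[a(x)g_r(x)\Lambda_{r,s}\right],\left[b(x)g_s(x)\right]\right\rangle_e$, where $\Lambda_{r,s}=\sum_{i=0}^{m-1}\bigl(f_{s,i}(x)\bar f_{r,m+i}(x)-f_{s,m+i}(x)\bar f_{r,i}(x)\bigr)$ — the asymmetry between the $r$-index (which gets barred) and the $s$-index (which does not) is exactly what the statement records.

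Next I would invoke Theorem~\ref{cornerstone} with the two cyclic codes generated by $\left[g_r(x)\right]$ and $\left[g_s(x)\right]$: the quantity $\left[a(x)g_r(x)\Lambda_{r,s}\right]$ lies in the cyclic code generated by $g_r(x)$ (since $g_r(x)\mid x^n-1$, multiplication keeps it an ideal element), and $\left[b(x)g_s(x)\right]$ ranges over the cyclic code generated by $g_s(x)$; hence the Euclidean inner product vanishes for all $a,b$ if and only if $g_r^{\perp_e}(x)\mid g_s(x)\Lambda_{r,s}$. A small point worth spelling out: to apply Theorem~\ref{cornerstone} cleanly one wants $g_s(x)\Lambda_{r,s}$ to be thought of as $b(x)$ ranging freely times $g_s(x)$, and to absorb $\Lambda_{r,s}$ and $a(x)$ into the "$a(x)$" slot of that theorem — since $g_r(x)\Lambda_{r,s}a(x)$ is an arbitrary element of $\langle g_r(x)\rangle$ as $a(x)$ varies over a suitable set, the divisibility condition is exactly the right criterion. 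Running this equivalence over all ordered pairs $(r,s)$ and recombining via bilinearity yields the claimed necessary and sufficient condition.

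\textbf{Main obstacle.} The computation itself is a routine replay of Theorem~\ref{one_quasi-cyclic}; the only genuinely delicate point is the bookkeeping of the bilinearity-plus-module-generation reduction — making sure that vanishing of the symplectic form on \emph{all} codewords really is equivalent to vanishing on the generator–generator pairs multiplied by arbitrary ring elements, and in particular that the $\mathbb{R}_{q,n}$-module structure (not merely $\mathbb{F}_q$-linearity) is what lets us reduce to a single $a(x),b(x)$ per pair rather than to finite $\mathbb{F}_q$-combinations. I would therefore state that reduction carefully at the outset and then let the per-pair calculation proceed mechanically.
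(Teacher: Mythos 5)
Your proposal is correct and follows essentially the same route as the paper: decompose $\mathscr{C}$ as the sum of the $h$ one-generator QC codes $\mathscr{C}_1,\ldots,\mathscr{C}_h$, reduce symplectic self-orthogonality of the sum to pairwise symplectic orthogonality of the pieces via bilinearity and Lemma~\ref{lemma-self-orthogonal}, and then replay the computation of Theorem~\ref{one_quasi-cyclic} together with Theorem~\ref{cornerstone} for each ordered pair $(r,s)$. The only quibble is a bookkeeping slip in your collapsed form (moving the $r$-indexed factors to the right slot leaves $\Lambda_{r,s}$ attached to $g_s(x)$, yielding $\pm\left\langle\left[a(x)g_r(x)\right],\left[b(x)g_s(x)\Lambda_{r,s}\right]\right\rangle_e$ rather than what you wrote), but since you quantify over all ordered pairs and Euclidean orthogonality of cyclic codes is symmetric, the conclusion $g_r^{\perp_e}(x)\mid g_s(x)\Lambda_{r,s}$ is unaffected.
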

\begin{proof}  
Let $\mathscr{C}_i$ be $1$-generator QC codes generated by $(\left[g_{i}(x) f_{i,0}(x)\right], \left[g_{i}(x) f_{i,1}(x)\right], \ldots, \left[g_{i}(x) f_{i, \ell-1}(x)\right])$, where $i \in \{1,2, \ldots, h\}$.  
Then, in accordance with Definition \ref{def-h-QC}, we have $\mathscr{C} = \mathscr{C}_1 + \mathscr{C}_2 + \cdots + \mathscr{C}_h$.  
  
If $\mathscr{C}$ is a symplectic self-orthogonal code, then for all $\mathbf{c_1}, \mathbf{c_2} \in \mathscr{C}$, $\left\langle \mathbf{c_1}, \mathbf{c_2} \right\rangle_s = 0$. Thus, for all $r, s \in \{1,2, \ldots, h\}$, $\mathscr{C}_r$ and $\mathscr{C}_s$ are symplectic orthogonal to each other.  
  
Conversely, if for all $r, s \in \{1,2, \ldots, h\}$, $\mathscr{C}_r$ and $\mathscr{C}_s$ are symplectic orthogonal to each other, then for all $\mathbf{c_1}, \mathbf{c_2} \in \mathscr{C}$, $\left\langle \mathbf{c_1}, \mathbf{c_2} \right\rangle_s = 0$.   
  
Therefore, $\mathscr{C}$ is a symplectic self-orthogonal code if and only if $\mathscr{C}_r$ and $\mathscr{C}_s$ are symplectic orthogonal to each other for all $r, s \in \{1,2, \ldots, h\}$.  
  
Moreover, using a similar method as the proof of Theorem \ref{one_quasi-cyclic}, we have that $\mathscr{C}_r$ and $\mathscr{C}_s$ are symplectic orthogonal to each other if and only if  
\begin{equation}  
g_{r}^{\perp_{e}}(x) \mid g_{s}(x)\Lambda_{r,s}.  
\end{equation}  
% where  $\Lambda_{r,s}=\sum\limits^{m-1}_{i=0} (f_{s, i}(x) \bar{f}_{r, m+i}(x)-f_{s, m+i}(x) \bar{f}_{r, i}(x))$.
Thus, we complete the proof.  
\end{proof}
%\begin{corollary}\label{theorem_(1,2)-QC}
%	Let the symbols be the same as above, and let $\mathscr{C}$ be a $1$-generator quasi-cyclic code with index $2$ in Definition \ref{one_quasi-cyclic_def}. If it satisfies
%	$g^{\perp_e}(x) \mid g(x)(f_{0}(x) \bar{f}_{1}(x)-f_{1}(x) \bar{f}_{0}(x))$, 
%	then $\mathscr{C}$ is a symplectic self-orthogonal code with parameters $\left[2 n, n-\deg(g(x))\right]_{q}$.
%\end{corollary}
%\begin{proof}
%	By Definition \ref{one_quasi-cyclic_def} and Theorem \ref{one_quasi-cyclic}, this corollary holds.
%	%	In addition, because of $\gcd(f_{j}(x),\frac{x^n-1}{g(x)} )=1$, 
%	%	so $g(x)=\gcd(g(x) f_{0}(x), g(x) f_{1}(x)$, $x^{n}-1)$. 
%	%	Therefore, the dimension of $\mathscr{C}$ is $n-\deg(g(x))$.
%\end{proof}

% \begin{remark}\label{remark1}
% There are three ways to construct symplectic self-orthogonal codes. 
% \begin{itemize}
%     \item The first is to choose appropriate $g_i(x)$ such that $g^{\perp_e}_r(x)\mid g_s(x)$, then search for $f_{i,j}(x)$ to construct symplectic self-orthogonal codes with excellent parameters. 
%     \item The second is to set suitable $f_{i,j}(x)$ such that $f_{s, i}(x) \bar{f}_{r, m+i}(x)-f_{s, m+i}(x) \bar{f}_{r, i}(x) \equiv 0 \pmod{x^n-1}$ holds, which can also produce symplectic self-orthogonal codes. 
%     \item The third is to find suitable $g_i(x)$ and $f_{i,j}(x)$ such that Equation (\ref{h_sym_quasi-cyclic}) holds when $g^{\perp_e}_r(x)\mid g_s(x)$ is not satisfied.
% \end{itemize}
% \end{remark}
As a consequence, for all QC codes with index even, one can directly deduce related necessary and sufficient conditions for symplectic self-orthogonality from Theorems \ref{one_quasi-cyclic} and \ref{theo-h-QC}.

\begin{example}  
	Set $q=2$ and $n=40$. Let $g(x)=x^5 + x^4 + x + 1$,   
	$f_0(x)=x^{34} + x^{33} + x^{32} + x^{30} + x^{29} + x^{28} + x^{26} + x^{24} + x^{23} + x^{22} + x^{19} + x^{16} + x^{15} + x^{14} + x^{12} + x^{10} + x^9 + x^8 + x^6 + x^5 + x^4$, and   
	$f_1(x)=x^{37} + x^{35} + x^{34} + x^{30} + x^{29} + x^{23} + x^{20} + x^{18} + x^{15} + x^9 + x^8 + x^4 + x^3 + x$.   
	It is straightforward to verify that $g(x)$, $f_0(x)$ and $f_1(x)$ satisfy the condition in Theorem \ref{one_quasi-cyclic}, namely, $g^{\perp_{e}}(x) \mid g(x)(f_{0}(x) \bar{f}_{1}(x)-f_{1}(x) \bar{f}_{0}(x))$, so we can obtain an $\left[80,35\right]_2^s$ symplectic self-orthogonal code $\mathscr{C}$.   
	Using Magma \cite{bosma1997magma}, we compute that the minimum symplectic dual distance of $\mathscr{C}$ is $10$.  
	  
	According to Theorem \ref{CRSS}, a binary QECC with parameters $\left[\left[40,5,10\right]\right]_2$ can be constructed.    
	It is noted that the best-known binary QECC with length $40$ and dimension $5$ in \cite{Grassltable} has parameters $\left[\left[40,5,9\right]\right]_2$. Thus, the minimum distance record for $\left[\left[40,5\right]\right]_2$ QECCs can be improved to $10$.  
\end{example}

%Therefore, the conditions in Theorem \ref{additive} are not necessary. Here, we supplement it starting from quasi-cyclic codes.

%Then, by Theorem \ref{theo-h-QC}, we can deduce necessary and sufficient conditions for $\mathscr{C}_a$ (resp. $\Phi^{-1}(\mathscr{C}_a)$) to be trace Hermitian (resp. symplectic) self-orthogonal.
%
%\begin{theorem}
%	Let the symbols be the same as Remark \ref{additive_representation}. Then, the quaternary additive cyclic code $\mathscr{C}_a$ (resp. $\Phi^{-1}(\mathscr{C}_a)$) is  trace Hermitian (resp. symplectic) self-orthogonal if and only if 
%	\begin{equation}
%	\left\{
%	\begin{array}{l}
%		g_1^{\perp_e}(x) \mid g_1(x)(f_{0}(x) \bar{f}_{1}(x)-f_{1}(x) \bar{f}_{0}(x)),\\ 
%		g_2^{\perp_e}(x) \mid g_1(x).
%	\end{array}
%	\right.
%\end{equation} 
%\end{theorem}

%%%%%%%%%%%%%%%%%-IV-New method of constructing quantum codes--%%%%%%%%%%%%%%%%%%%%%%%%%%%%%%
\section{Bounds on the minimum symplectic distance of $1$-generator quasi-cyclic codes with index two and their symplectic dual codes}\label{sec4}
%This section describes how to construct $1$-generator and 2-generator quasi-cyclic codes of index two with symplectic self-orthogonality. 
This section presents lower and upper bounds on the minimum symplectic distance of $1$-generator QC codes with index two and their symplectic dual codes. 
In particular, throughout this section, we consider the case where the index $\ell=2$ and $1$-generator QC code $\mathscr{C}$ has generator $(\left[f_0(x)g(x)\right],\left[f_1(x)g(x)\right])$, where $g(x)$ and $f_i(x)$ satisfy the conditions stated in Definition \ref{one_quasi-cyclic_def}. The parity check polynomial of $g(x)$ is denoted as $h(x)=\frac{x^n-1}{g(x)}$. 
%we also construct many symplectic self-orthogonal codes with excellent parameters, for which the associated QECCs are record-breaking. 

%Firstly, in order to decompose quasi-cyclic code, we provide the following definition.
\begin{definition}
	Let $f_a(x)$, $f_b(x)$ be polynomials in $\mathbb{R}_{q,n}$ and $t$ be a positive integer. Define $\mathrm{Circ}_t\left( \left[f_a(x)\right],\left[f_b(x)\right] \right)$ as the following matrix:  
			 %\begin{spacing}{0.1} 
	\begin{equation}
		G_t =
		\left(\begin{array}{c|c}
		  \pi(f_a(x))           &  \pi(f_b(x))        \\
			
			\pi(xf_a(x))        & \pi(xf_b(x))        \\
			
			\vdots           & \vdots           \\
			
			\pi(x^{t-1} f_a(x)) & \pi(x^{t-1} f_b(x))
		\end{array}\right)_{t\times 2n}.
	\end{equation}% \end{spacing}
\end{definition}

\begin{lemma}\label{QCmatrix_spread}
	With the above notation.
	Let $\mathscr{C}$ be a $1$-generator QC code with generator $(\left[f_0(x)g(x)\right],\left[f_1(x)g(x)\right])$, and let $t=n-\deg(g(x))-\deg(\gcd( h(x),f_1(x)))-\deg(\gcd( h(x),f_0(x)))$.
	Suppose that $\mathscr{C}_1$, $\mathscr{C}_2$ and $\mathscr{C}_3$ are codes generated by $\left[\frac{x^n-1}{\gcd( h(x),f_1(x))}\right]$, $\left[ \frac{x^n-1}{\gcd( h(x),f_0(x))}\right]$ and $\mathrm{Circ}_t\left( \left[g(x)f_0(x)\right],\left[g(x)f_1(x)\right] \right)$, where $h(x)=\frac{x^n-1}{g(x)}$. Then,  $\mathscr{C}=\mathscr{C}_1\oplus \mathscr{C}_2 + \mathscr{C}_3$.
%	If $g(x)f_0(x)$ and $g(x)f_1(x)$ satisfy $g(x)f_0(x)\mid \frac{x^n-1}{\gcd(h(x),f_1(x))} \pmod{x^n-1}$ and $g(x)f_0(x)\mid \frac{x^n-1}{\gcd(h(x),f_1(x))} \pmod{x^n-1}$ .
	%there exist polynomials $r_1(x)$,$r_2(x)$ and $r_3(x)$ in $\mathbb{R}_{q,n}$ such that a generator matrix $G$ of the 
\end{lemma}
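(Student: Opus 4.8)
The plan is to realize $\mathscr{C}$ as the row space of the block-circulant generator matrix and to peel off the rows that fall into two pure cyclic components, leaving the remaining rows to generate $\mathscr{C}_3$. Concretely, $\mathscr{C}$ is spanned over $\mathbb{F}_q$ by the $n$ vectors $\pi(x^i g(x)f_0(x))$ together with $\pi(x^i g(x)f_1(x))$, $0\le i\le n-1$, placed side by side; equivalently $\mathscr{C}$ is the $\mathbb{R}_{q,n}$-module generated by $(\left[g(x)f_0(x)\right],\left[g(x)f_1(x)\right])$. First I would observe that $\mathscr{C}_3$, by its definition via $\mathrm{Circ}_t$, is exactly the span of the first $t$ such rows, so $\mathscr{C}_3\subseteq\mathscr{C}$ and $\dim\mathscr{C}_3\le t$; the key will be to show that modulo $\mathscr{C}_3$ the whole of $\mathscr{C}$ collapses onto $\mathscr{C}_1\oplus\mathscr{C}_2$ and that the dimensions add up. Recall $\dim\mathscr{C}=n-\deg g(x)$ by the remark after Definition \ref{one_quasi-cyclic_def}, while $\dim\mathscr{C}_1=\deg\gcd(h(x),f_1(x))$, $\dim\mathscr{C}_2=\deg\gcd(h(x),f_0(x))$ (a cyclic code with generator $\frac{x^n-1}{u(x)}$ has dimension $\deg u(x)$), and these three numbers sum to exactly $n-\deg g(x)$ by the choice of $t$; moreover $\mathscr{C}_1$ and $\mathscr{C}_2$ live in disjoint coordinate blocks (first $n$ versus last $n$ coordinates), so their sum is genuinely a direct sum and $\dim(\mathscr{C}_1\oplus\mathscr{C}_2)=\dim\mathscr{C}_1+\dim\mathscr{C}_2$.

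The heart of the argument is to verify the two containments $\mathscr{C}_1\oplus\mathscr{C}_2\subseteq\mathscr{C}$ and $\mathscr{C}\subseteq\mathscr{C}_1\oplus\mathscr{C}_2+\mathscr{C}_3$. For the first, set $d_1(x)=\gcd(h(x),f_1(x))$ and write $h(x)=d_1(x)h_1(x)$. Then $g(x)f_1(x)\cdot h_1(x)=f_1(x)\cdot\frac{x^n-1}{d_1(x)}\equiv 0$ in $\mathbb{R}_{q,n}$ because $\frac{x^n-1}{d_1(x)}$ is a multiple of $h(x)$ divided appropriately — more carefully, $g(x)h_1(x)=\frac{x^n-1}{d_1(x)}$, and since $d_1(x)\mid f_1(x)$ the product $\frac{x^n-1}{d_1(x)}\cdot\frac{f_1(x)}{d_1(x)}\cdot d_1(x)$ shows $\frac{x^n-1}{d_1(x)}f_1(x)/d_1(x)\cdot d_1(x)$ is divisible by $x^n-1$; I would package this as: multiplying the generator $(\left[g f_0\right],\left[g f_1\right])$ by $\left[h_1(x)\right]$ (where $h(x)=d_1(x)h_1(x)$ with $d_1=\gcd(h,f_1)$) kills the second coordinate and leaves $(\left[g f_0 h_1\right],0)$. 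The element $\left[g(x)h_1(x)\right]=\left[\frac{x^n-1}{d_1(x)}\right]$ generates, in the first block, the cyclic code with generator polynomial $\frac{x^n-1}{d_1(x)}$, i.e. $\mathscr{C}_2$ sits in the second block analogously with $d_0=\gcd(h,f_0)$; after fixing the index bookkeeping this yields $\mathscr{C}_1$ (second block, as defined) and $\mathscr{C}_2$ (first block) inside $\mathscr{C}$. Hence $\mathscr{C}_1\oplus\mathscr{C}_2\subseteq\mathscr{C}$, and combined with $\mathscr{C}_3\subseteq\mathscr{C}$ we get $\mathscr{C}_1\oplus\mathscr{C}_2+\mathscr{C}_3\subseteq\mathscr{C}$.

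For the reverse containment I would argue on dimensions. We already have $\dim(\mathscr{C}_1\oplus\mathscr{C}_2)=\dim\mathscr{C}_1+\dim\mathscr{C}_2=\deg d_1(x)+\deg d_0(x)$, and $t=n-\deg g(x)-\deg d_1(x)-\deg d_0(x)=\dim\mathscr{C}-\dim(\mathscr{C}_1\oplus\mathscr{C}_2)$. So it suffices to show $\dim\mathscr{C}_3\ge t$, i.e. the $t$ rows $\pi(x^i g f_0)\mid\pi(x^i g f_1)$, $0\le i\le t-1$, are $\mathbb{F}_q$-linearly independent, and that $\mathscr{C}_3\cap(\mathscr{C}_1\oplus\mathscr{C}_2)=\{0\}$ (or at least that the three dimensions sum to $\dim\mathscr{C}$ from below while the sum is contained in $\mathscr{C}$, forcing equality). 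Independence of those $t$ rows: a dependence $\sum_{i<t}a_i x^i$ annihilating both $\left[g f_0\right]$ and $\left[g f_1\right]$ means $a(x)g(x)f_j(x)\equiv 0\ (j=0,1)$, so $h(x)\mid a(x)f_j(x)$, hence $\frac{h(x)}{d_j(x)}\mid a(x)$ for each $j$ (as $\gcd(h/d_j,\,f_j/d_j)=1$), so $\mathrm{lcm}\!\big(\tfrac{h(x)}{d_0(x)},\tfrac{h(x)}{d_1(x)}\big)\mid a(x)$; the degree of that lcm is at least $\deg h(x)-\deg d_0(x)-\deg d_1(x)+\deg\gcd(\ldots)\ge t$ — I would pin down that this degree is exactly $t$ using $\gcd(h/d_0,h/d_1)$ divides $h/\mathrm{lcm}(d_0,d_1)$ and $\deg h=n-\deg g$, which shows any nonzero such $a(x)$ has degree $\ge t$, contradicting $\deg a<t$. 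This gives $\dim\mathscr{C}_3=t$ and, together with the direct-sum count and the containment from the previous paragraph, $\dim(\mathscr{C}_1\oplus\mathscr{C}_2+\mathscr{C}_3)\le\dim\mathscr{C}$ with the submodule on the left, whence equality $\mathscr{C}=\mathscr{C}_1\oplus\mathscr{C}_2+\mathscr{C}_3$.

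\textbf{Main obstacle.} The delicate point is the exact degree accounting for the cyclic annihilators: showing that $\mathrm{lcm}\big(\frac{x^n-1}{d_0(x)},\frac{x^n-1}{d_1(x)}\big)$ has degree exactly $t$, so that the $t$ rows defining $\mathscr{C}_3$ are precisely independent and no more, and verifying cleanly that $\mathscr{C}_3\cap(\mathscr{C}_1\oplus\mathscr{C}_2)=\{0\}$ rather than merely that the dimensions are compatible. This hinges on the coprimality hypotheses $\gcd(f_0,f_1,h)=1$ from Definition \ref{one_quasi-cyclic_def} and on keeping the two coordinate blocks straight; once the lcm degree is nailed down, everything else is bookkeeping with $\dim\mathscr{C}=n-\deg g(x)$ and the definition of $t$.
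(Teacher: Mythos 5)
This is essentially the paper's own argument: classify the codewords $(\left[a(x)g(x)f_0(x)\right],\left[a(x)g(x)f_1(x)\right])$ by which of the two halves vanish, identify the pure-block codewords with the cyclic codes generated by $\left[\frac{x^n-1}{\gcd(h(x),f_1(x))}\right]$ and $\left[\frac{x^n-1}{\gcd(h(x),f_0(x))}\right]$ (using $\gcd(f_0(x),f_1(x),h(x))=1$), and close with a dimension count against $\dim\mathscr{C}=n-\deg g(x)$. The loose end you flag, $\mathscr{C}_3\cap(\mathscr{C}_1\oplus\mathscr{C}_2)=\{0\}$, follows from the same divisibility bookkeeping you already use for the independence of the $t$ rows (a common nonzero element forces $\frac{h(x)}{\gcd(h,f_0)\gcd(h,f_1)}$, which has degree exactly $t$, to divide an $a(x)$ of degree less than $t$), and the paper's proof is no more explicit on this point; the only slip is your claim that $\operatorname{lcm}\bigl(\frac{x^n-1}{d_0(x)},\frac{x^n-1}{d_1(x)}\bigr)$ has degree exactly $t$ --- its degree is $\deg h(x)$ (respectively $n$ in the un-reduced form), but only the inequality $\ge t$ is needed, so the argument stands.
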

\begin{proof}
Since the generator of $\mathscr{C}$ is $(\left[f_0(x)g(x)\right],\left[f_1(x)g(x)\right])$, any codeword of $\mathscr{C}$ has the form  $\mathbf{c}=(\mathbf{c}^{\prime},\mathbf{c}^{{\prime}{\prime}})=(\left[a(x)g(x)f_0(x)\right],$ $\left[a(x)g(x)f_1(x)\right])$, where $a(x)\in \mathbb{R}_{q,n}$. 
	This implies that only the following three types of nonzero codewords may appear in $\mathscr{C}$. We represent them with three sets:
	\begin{equation}
		\begin{matrix}
			S_1=\{ (\mathbf{c}^{\prime},\mathbf{c}^{{\prime}{\prime}})\mid \mathbf{c}^{\prime}\ne\mathbf{0}_n , \mathbf{c}^{{\prime}{\prime}} =\mathbf{0}_n \}, \\
			S_2=\{ (\mathbf{c}^{\prime},\mathbf{c}^{{\prime}{\prime}})\mid \mathbf{c}^{\prime}=\mathbf{0}_n , \mathbf{c}^{{\prime}{\prime}}\ne \mathbf{0}_n\}, \\
			S_3=\{ (\mathbf{c}^{\prime},\mathbf{c}^{{\prime}{\prime}})\mid \mathbf{c}^{\prime}\ne\mathbf{0}_n , \mathbf{c}^{{\prime}{\prime}}\ne \mathbf{0}_n\}.
		\end{matrix}
	\end{equation}

	If $\mathbf{c} \in S_1$, then $a(x)g(x)f_0(x) \not\equiv 0 \pmod{x^n-1}$ and $a(x)g(x)f_1(x)\equiv 0 \pmod{x^n-1}$. We have $\frac{x^n-1}{\gcd(x^n-1,g(x)f_1(x))}\mid a(x)$ and $a(x)$ can be written as $a(x)\equiv r(x)\frac{x^n-1}{\gcd(x^n-1,g(x)f_1(x))} \pmod{x^n-1}$, where $r(x)$ is a polynomial in $\mathbb{R}_{q,n}$ of degree less than $\deg(\gcd(x^n-1,g(x)f_1(x)))$. Since $\gcd(f_0(x),f_1(x),h(x))=1$ and $\frac{x^n-1}{\gcd(x^n-1,g(x)f_1(x))}g(x)=\frac{x^n-1}{\gcd(h(x),f_1(x))}$, we get $a(x)g(x)f_0(x) \equiv 0 \pmod{\frac{x^n-1}{\gcd(h(x),f_1(x))}}$.
 At this time, the cyclic code determined by $\left[a(x)g(x)f_0(x)\right]$ is indeed $\mathscr{C}_1$, which has dimension $\deg(\gcd( h(x),f_1(x)))$.
	
	Similarly, if $\mathbf{c} \in S_2$, then $a(x)g(x)f_0(x)\equiv 0 \pmod{x^n-1}$ and $a(x)g(x)f_1(x) \not\equiv 0 \pmod{x^n-1}$. 
 Therefore, when running through all $a(x)$ that divide $\frac{x^n-1}{\gcd(x^n-1,g(x)f_0(x))}$, the cyclic code that composed of all codewords $\left[a(x)g(x)f_0(x)\right]$ is $\mathscr{C}_2$, whose dimension is $\deg(\gcd( h(x),f_0(x)))$.
	Therefore, $\mathscr{C}_1\oplus \mathscr{C}_2 \subset \mathscr{C}$ and $\dim(\mathscr{C}_1\oplus \mathscr{C}_2)=\deg(\gcd( h(x),f_1(x)))+\deg(\gcd( h(x),f_0(x)))$.
	
	Given that $t=n-\deg(g(x))-\deg(\gcd( h(x),f_1(x)))-\deg(\gcd( h(x),f_0(x)))$, the nonzero codewords generated by $\mathrm{Circ}_t\left( \left[g(x)f_0(x)\right],\left[g(x)f_1(x)\right] \right)$ are all contained in $S_3$, and the dimension of $\mathscr{C}_3$ is $t$. Therefore, we have $\mathscr{C}_3\subset \mathscr{C}$ and $\dim(\mathscr{C}_1\oplus \mathscr{C}_2+\mathscr{C}_3)=n-\deg(g(x))$. 
	Therefore, the generator matrix of $\mathscr{C}$ can be written in the following form:
		\begin{equation}\label{QCtransform}
	G^{\prime}=
	%		\left(\begin{array}{c}
		%			G_1\\\hline
		%			G_2\\\hline
		%			G_3
		%		\end{array}\right)=
	\left(\begin{array}{c|c}
		\pi( g(x)f_0(x))                                                                  & \pi( g(x)f_1(x))                                                             \\
		
		\pi(x g(x)f_0(x))                                                                 & \pi(x g(x)f_1(x))                                                            \\
		
		\vdots                                                                           & \vdots                                                                      \\
		
		\pi(x^{t-1} g(x)f_0(x))         & \pi(x^{t-1}g(x)f_1(x))     \\
		\hline
		\mathbf{0}_n                                                                     & \pi( \frac{x^n-1}{\gcd( h(x),f_0(x))})                                  \\
		\mathbf{0}_n                                                                     & \pi(x \frac{x^n-1}{\gcd( h(x),f_0(x))} )                                \\
		\vdots                                                                           & \vdots                                                                      \\
		\mathbf{0}_n                                                                     & \pi(x^{t_a-1}   \frac{x^n-1}{\gcd( h(x),f_0(x))})\\		\hline
  		\pi( \frac{x^n-1}{\gcd( h(x),f_1(x))})                                 & \mathbf{0}_n                                                                \\
		
		\pi(x  \frac{x^n-1}{\gcd( h(x),f_1(x))})                               & \mathbf{0}_n                                                                \\
		
		\vdots                                                                           & \vdots                                                                      \\
		
		\pi(x^{t_b-1}  \frac{x^n-1}{\gcd( h(x),f_1(x))}) & \mathbf{0}_n                                                                \\

	\end{array}\right),\\%前两块矩阵
\end{equation}
	where $t_a=\deg(\gcd( h(x),f_1(x)))$ and $t_b=\deg(\gcd( h(x),f_0(x)))$. 
\end{proof}

 \begin{theorem}\label{SQCBound} 
	With the above notation,
	let $\mathscr{C}$ be a $1$-generator QC code with generator $(\left[f_0(x)g(x)\right],\left[f_1(x)g(x)\right])$.
 For $\alpha\in \mathbb{F}_q^*$, define
$\mathcal{I}^{(\alpha)}(\mathscr{C})=\gcd(f_0(x)+\alpha f_1(x),h(x))$. 
% $\mathcal{I}^{(\alpha)}(\mathscr{C})=\gcd(f_0(x)+\alpha f_1(x),h(x))$. 
 Let $\mathcal{S} =\{\mathcal{I}^{(\alpha)}(\mathscr{C}) \mid \mathcal{I}^{(\alpha)}(\mathscr{C})\ne 1, \forall\alpha\in\mathbb{F}_q^*\}$.
	Suppose that $\mathscr{C}_0$, $\mathscr{C}_1$, $\mathscr{C}_2$, $\mathscr{C}_3$, $\mathscr{C}_4$ and $\mathscr{C}_5$ are cyclic codes determined by $[g(x)]$, $[g(x)f_0(x)]$, $[g(x)f_1(x)]$, $ \left[ \frac{x^n-1}{\gcd(h(x),f_0(x))}\right]$, $\left[\frac{x^n-1}{\gcd(h(x),f_1(x))}\right]$ and $\left[ g(x)\operatorname{lcm}(f_0(x),f_1(x))\right]$, respectively.
 Define $\mathcal{D}(\mathscr{C})$ as in Equation (\ref{The. 4. E.1}). \begin{figure*}
\begin{equation}\label{The. 4. E.1}
\mathcal{D}(\mathscr{C})=\max \left\{ \left\lceil\left(d_H(\mathscr{C}_1)+d_H(\mathscr{C}_2)+(q-|\mathcal{S} |-1)d_H(\mathscr{C}_0)+\sum\limits_{\mathcal{I}^{(\alpha)}(\mathscr{C})\in \mathcal{S} } d_H\left(\left[ g(x)\mathcal{I}^{(\alpha)}(\mathscr{C}) \right]\right)\right)/ q\right\rceil, d_H(\mathscr{C}_1),d_H(\mathscr{C}_2)\right\}.
\end{equation} 
\end{figure*}
Writing $d_{\mathrm{upper}}=	\min\left\{d_H(\mathscr{C}_3),d_H(\mathscr{C}_4)\right\}$ and   $d_{\mathrm{lower}}$ as in Equation (\ref{symplectic_bound}),
\begin{figure*}\begin{equation}\label{symplectic_bound}
	d_{\mathrm{lower}}=
	\left\{
	\begin{array}{ll}
		\min \left\{ d_H(\mathscr{C}_3),d_H(\mathscr{C}_4),\mathcal{D}(\mathscr{C}) \right\},& \mathcal{S} =\emptyset,                                                           \\
		\min\left\{ d_H(\mathscr{C}_3),d_H(\mathscr{C}_4), d_H(\mathscr{C}_5), \mathcal{D}(\mathscr{C})\right\},&q=2,\mathcal{S}\ne \emptyset,                     \\
	\min \left\{ d_H(\mathscr{C}_3),d_H(\mathscr{C}_4),\max\{d_H(\mathscr{C}_1),d_H(\mathscr{C}_2)\} \right\},&q>2, \mathcal{S}\ne \emptyset.                                        \\
	\end{array}
	\right.
\end{equation} \end{figure*}
then we have $d_{\mathrm{lower}}\le d_s(\mathscr{C})\le d_{\mathrm{upper}}$.
\end{theorem}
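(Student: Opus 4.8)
The plan is to establish the two inequalities separately, using the decomposition of the code space into cyclic pieces that is available for index-$2$ $1$-generator QC codes. For the upper bound $d_s(\mathscr{C}) \le d_{\mathrm{upper}} = \min\{d_H(\mathscr{C}_3), d_H(\mathscr{C}_4)\}$, I would exhibit explicit low-weight codewords. Recall from the proof of Lemma \ref{QCmatrix_spread} that $\mathscr{C}$ contains the subcodes $\mathscr{C}_1 \oplus \mathscr{C}_2$ realized as codewords of the shape $(\mathbf{c}', \mathbf{0}_n)$ with $\mathbf{c}' \in \mathscr{C}_4 = \langle [\,(x^n-1)/\gcd(h(x),f_1(x))\,]\rangle$ and $(\mathbf{0}_n, \mathbf{c}'')$ with $\mathbf{c}'' \in \mathscr{C}_3 = \langle[\,(x^n-1)/\gcd(h(x),f_0(x))\,]\rangle$. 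For any codeword of either of these shapes, the symplectic weight equals the Hamming weight of the single nonzero block (the pairs $(u_i, u_{N+i})$ that are nonzero are exactly the positions where the nonzero block is nonzero). Hence $d_s(\mathscr{C}) \le \min\{d_H(\mathscr{C}_3), d_H(\mathscr{C}_4)\}$, which is exactly $d_{\mathrm{upper}}$. This direction is routine.

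For the lower bound I would argue that every nonzero codeword $\mathbf{c} = (\mathbf{c}', \mathbf{c}'')$ has symplectic weight at least $d_{\mathrm{lower}}$, splitting by which of the sets $S_1, S_2, S_3$ it lies in. If $\mathbf{c} \in S_1$ then $\mathbf{c}'' = \mathbf{0}_n$, so $\mathbf{w}_s(\mathbf{c}) = \mathbf{w}_H(\mathbf{c}')$ and $\mathbf{c}' \in \mathscr{C}_4$, giving $\mathbf{w}_s(\mathbf{c}) \ge d_H(\mathscr{C}_4)$; symmetrically $S_2$ gives $\ge d_H(\mathscr{C}_3)$. The substantive case is $\mathbf{c} \in S_3$, i.e. $\mathbf{c}' = [a(x)g(x)f_0(x)] \ne \mathbf{0}_n$ and $\mathbf{c}'' = [a(x)g(x)f_1(x)] \ne \mathbf{0}_n$. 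Here I would use the coordinatewise bound $\mathbf{w}_s(\mathbf{c}) \ge \max\{\mathbf{w}_H(\mathbf{c}'), \mathbf{w}_H(\mathbf{c}'')\}$ together with the averaging/combinatorial identity behind $\mathcal{D}(\mathscr{C})$: for a fixed position $i$, the pair $(c'_i, c''_i)$ is $(0,0)$ only if $c'_i = 0$ and $c''_i = 0$, and one can count, summing over the $q$ "slopes" $\alpha \in \mathbb{F}_q^* \cup \{\infty\}$, the contributions of $[a(x)g(x)f_0(x)]$, $[a(x)g(x)f_1(x)]$ and the combinations $[a(x)g(x)(f_0(x)+\alpha f_1(x))]$. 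Each such combination is a codeword of the cyclic code generated by $[g(x)\gcd(f_0(x)+\alpha f_1(x), h(x))] = [g(x)\mathcal{I}^{(\alpha)}(\mathscr{C})]$ when $\mathcal{I}^{(\alpha)}(\mathscr{C}) \ne 1$, and of $\mathscr{C}_0 = \langle[g(x)]\rangle$ otherwise, so its Hamming weight is bounded below by the corresponding $d_H$. Relating $\mathbf{w}_s(\mathbf{c})$ to the average of these $q$ weights — each non-$(0,0)$ position is counted an appropriate number of times — yields $\mathbf{w}_s(\mathbf{c}) \ge \mathcal{D}(\mathscr{C})$. In the special case $q = 2$ with $\mathcal{S} \ne \emptyset$, the codeword $[a(x)g(x)(f_0(x)+f_1(x))]$ is the only extra "slope" and it sits in $\mathscr{C}_5 = \langle[g(x)\operatorname{lcm}(f_0(x),f_1(x))]\rangle$ after accounting for the gcd structure, which is why $d_H(\mathscr{C}_5)$ enters there; for $q > 2$ with $\mathcal{S} \ne \emptyset$ the clean averaging breaks and one falls back to $\max\{d_H(\mathscr{C}_1), d_H(\mathscr{C}_2)\}$ via the $\max\{\mathbf{w}_H(\mathbf{c}'),\mathbf{w}_H(\mathbf{c}'')\}$ estimate. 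Taking the minimum over all three cases $S_1, S_2, S_3$ gives exactly the case-split defining $d_{\mathrm{lower}}$ in Equation (\ref{symplectic_bound}).

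The main obstacle I anticipate is the $S_3$ estimate: making precise the counting argument that converts the $q$ Hamming-weight lower bounds (for the two "pure" projections and the $q-1$ "slope" combinations, each landing in the appropriate cyclic code depending on whether $\mathcal{I}^{(\alpha)}(\mathscr{C}) = 1$) into a lower bound on $\mathbf{w}_s(\mathbf{c})$. The key point is that a coordinate $i$ with $(c'_i, c''_i) \ne (0,0)$ contributes to $\mathbf{w}_H$ of exactly $q-1$ of the $q+1$ polynomials $\{c', c'', c' + \alpha c'' : \alpha \in \mathbb{F}_q^*\}$ when $(c'_i, c''_i)$ is not a "pure" pair, and to exactly $q$ of them otherwise — so in all cases a non-$(0,0)$ position is counted at least $q$ times among a suitable set of $q$ of these polynomials, while each $(0,0)$ position is counted $0$ times; dividing by $q$ and taking the ceiling gives $\mathcal{D}(\mathscr{C})$. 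I would also need to verify the polynomial identity $\frac{x^n-1}{\gcd(x^n-1, g(x)f_1(x))}g(x) = \frac{x^n-1}{\gcd(h(x),f_1(x))}$ (and its analogue) used implicitly above, and that the divisibility $[g(x)\gcd(f_0(x)+\alpha f_1(x),h(x))] \mid [a(x)g(x)(f_0(x)+\alpha f_1(x))]$ in $\mathbb{R}_{q,n}$ is correctly set up so that the weight of the latter is $\ge d_H([g(x)\mathcal{I}^{(\alpha)}(\mathscr{C})])$ — these are the routine but delicate gcd/lcm manipulations that the whole bound rests on.
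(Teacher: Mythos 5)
Your proposal follows essentially the same route as the paper: the $S_1/S_2/S_3$ decomposition from Lemma \ref{QCmatrix_spread} gives the upper bound and the two pure cases, and the identity $q\,\mathbf{w}_{s}(\mathbf{c})=\mathbf{w}_{H}(\mathbf{c}^{\prime})+\mathbf{w}_{H}(\mathbf{c}^{\prime\prime})+\sum_{\alpha\in\mathbb{F}_q^*}\mathbf{w}_{H}(\mathbf{c}^{\prime}+\alpha\mathbf{c}^{\prime\prime})$ (which the paper simply cites from \cite{ling2010generalization}) combined with the memberships $\mathbf{c}^{\prime}+\alpha\mathbf{c}^{\prime\prime}\in\langle[g(x)\mathcal{I}^{(\alpha)}(\mathscr{C})]\rangle$ handles $S_3$, with the same degenerate-slope treatment yielding $\mathscr{C}_5$ for $q=2$ and the fallback to $\max\{d_H(\mathscr{C}_1),d_H(\mathscr{C}_2)\}$ for $q>2$. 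One small slip in your counting: a position with both coordinates nonzero lies in the support of exactly $q$ (not $q-1$) of the $q+1$ vectors, so every non-$(0,0)$ position is counted exactly $q$ times, which is precisely the cited identity and makes your conclusion go through.
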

\begin{proof}
 Denote codeword of $\mathscr{C}$ by $\mathbf{c}=(\mathbf{c}^{\prime},\mathbf{c}^{\prime\prime})=(\left[a(x)g(x)f_0(x)\right],\left[a(x)g(x)f_1(x)\right])$, where $a(x)\in \mathbb{R}_{q,n}$. 
According to \cite{ling2010generalization}, we have
$q\mathbf{w}_{s}(\mathbf{c})=\mathbf{w}_{H}(\mathbf{c}^{\prime})+\mathbf{w}_{H}(\mathbf{c}^{\prime\prime})+\sum\limits_{\alpha\in \mathbb{F}_q^*} \mathbf{w}_{H}(\mathbf{c}^{\prime}+\alpha \mathbf{c}^{\prime\prime})$.

If $\mathbf{c}^{\prime}=\mathbf{0}_n$ or $\mathbf{c}^{\prime\prime}=\mathbf{0}_n$, then we have $a(x)g(x)f_0(x)\equiv  0 \pmod{x^n-1}$ or $a(x)g(x)f_1(x) \equiv  0 \pmod{x^n-1}$.
By Lemma \ref{QCmatrix_spread}, the nonzero half part of $\mathbf{c}$ is a codeword of $\mathscr{C}_3$ or $\mathscr{C}_4$. 
Therefore, it follows that 
 $d_s(\mathscr{C})\le \min \{d_H(\mathscr{C}_3), d_H(\mathscr{C}_4)\}$.

If $\mathbf{c}^{\prime}\ne \mathbf{0}_n$ and $\mathbf{c}^{\prime\prime}\ne \mathbf{0}_n$, then $a(x)g(x)f_0(x)\not\equiv 0 \pmod{x^n-1}$ and $a(x)g(x)f_1(x) \not\equiv 0 \pmod{x^n-1}$.
There are two cases to consider. 
Firstly, if for all $ \alpha$ in $\mathbb{F}_q^*$, $\gcd(f_0(x)+\alpha f_1(x),h(x))= 1$, i.e., $\mathcal{S} =\emptyset$,
then the following formula holds.
\begin{equation}\label{ESBound1}
	\begin{array}{rl}
		d_s(\mathscr{C})\ge&  (d_H(\left[ g(x)f_0(x)\right])+d_H(\left[g(x)f_1(x)\right])\\
		&+(q-1) d_H(\left[ g(x)\right] ))/q.
	\end{array}
\end{equation}

Secondly, if there is $\alpha$ in $\mathbb{F}_q^*$ such that $\mathcal{I}^{(\alpha)}(\mathscr{C})=\gcd(f_0(x)+\alpha f_1(x),h(x))\ne 1$, i.e., $\mathcal{S} \ne \emptyset$, then there exist  $a(x)$ of degree less than $n-\deg(g(x))$ such that $a(x)(f_0(x)+\alpha f_1(x)) g(x)\equiv 0 \pmod{x^n-1}$, i.e.,  $\mathrm{Supp}(\mathbf{c}^{\prime})=\mathrm{Supp}(\mathbf{c}^{\prime\prime})$. 
For $q=2$, if $\mathrm{Supp}(\mathbf{c}^{\prime})=\mathrm{Supp}(\mathbf{c}^{\prime\prime})$, then we have $\mathbf{c}^{\prime}=\mathbf{c}^{\prime\prime}$ $\in \mathscr{C}_1 \cap \mathscr{C}_2$. Therefore, 
$d_s(\mathscr{C}) \ge d_H( \left[ g(x) \operatorname{lcm}(f_0(x),f_1(x)) \right] )$.  
For $q> 2$, we can only determine $d_s(\mathscr{C})\ge \max\{d_H(\mathscr{C}_1),d_H(\mathscr{C}_2)\}$.

If $\mathrm{Supp}(\mathbf{c}^{\prime})\ne \mathrm{Supp}(\mathbf{c}^{\prime\prime})$, then we have Equation (\ref{ESBound2}). 
\begin{figure*}\begin{equation}\label{ESBound2}
d_s(\mathscr{C})\ge \left(d_H(\mathscr{C}_1)+d_H(\mathscr{C}_2)+(q-|\mathcal{S} |-1)d_H(\mathscr{C}_0)+\sum\limits_{\mathcal{I}^{(\alpha)}(\mathscr{C})\in \mathcal{S} } d_H\left(\left[ g(x)\mathcal{I}^{(\alpha)}(\mathscr{C}) \right]\right)\right)/ q.
\end{equation}\end{figure*}

Additionally, $d_s(\mathscr{C})$ must also satisfy the condition that it is greater than or equal to $\max\left\{ d_H(\mathscr{C}_1),d_H(\mathscr{C}_2) \right\}$ when Equations (\ref{ESBound1}) and (\ref{ESBound2}) are satisfied. 
Therefore, we have $d_s(\mathscr{C})\ge\mathcal{D}(\mathscr{C})$.

From the combination of the above conditions one can obtain upper and lower bounds for the minimum symplectic distance of $\mathscr{C}$. This concludes the proof.
\end{proof}

%\begin{example}\label{good_Ex1}
%	Let $q=2$ and $n=21$. Set $g(x)=1$, $f_0(x)=x^{20} + x^{18} + x^{13} + x^{12} + x^{11} + x^{10} + x^9 + x^8 + x^7 + x^2$ and $f_1(x)=x^{20} + x^{19} + x^{18} + x^{17} + x^{13} + x^{12} + x^9 + x^6 + x^3 + x^2 + 1$. Since $\deg(g(x))=0$, $(\left[g(x)f_0(x)\right],\left[g(x)f_1(x)\right])$ will generate a $\left[42,21\right]_2$ quasi-cyclic code $\mathscr{C}$.
%	One can easily check that $f_{\alpha}(x)=\gcd(f_0(x)+f_1(x),x^{21}-1)=x^2 + x + 1$ and the distances of cyclic codes generated by $g(x)f_0(x)$, $g(x)f_1(x)$,  $f_0(x)\frac{x^n-1}{gcd( x^n-1,f_1(x))}$, $\frac{x^n-1}{\gcd(h(x),f_0(x))}$, $g(x)\operatorname{lcm}(f_0(x),f_1(x))$ and $g(x)f_{\alpha}(x)$ are $2$, $5$, $21$, $8$, $6$ and $2$, respectively. By Theorem \ref{SQCBound}, 
%    it follows that the lower bound on the minimum symplectic distance of $\mathscr{C}$ is $5$, while the upper bound is $8$.
%	Using Magma \cite{bosma1997magma}, we can compute that the resulting quasi-cyclic code $\mathscr{C}$ has a real symplectic distance $8$. Thus, $\mathscr{C}$ corresponds to an additive cyclic $(21, 10.5, 8)_4$ code, which outperforms the best quaternary linear $\left[21, 10, 8\right]_4$ code.
%\end{example}

\begin{remark}\label{improve_bound}
	Since $1$-generator QC codes with index two over $\mathbb{F}_q$ are equivalent with additive cyclic codes over $\mathbb{F}_{q^2}$ \cite{GuanAdditive2023}, Theorem \ref{SQCBound} also can be viewed as bounds on the minimum Hamming distance on additive cyclic codes.
	In addition, Corollary 3.9 in \cite{dastbasteh2023polynomial} can be viewed as a subcase ($q>2,\mathcal{S} \ne\emptyset$) of Theorem \ref{SQCBound},  
	and Theorem 1 in \cite{GuanAdditive2023} is a simplified form of $\mathcal{S} =\emptyset$ in Theorem \ref{SQCBound}.
	Thus, Theorem \ref{SQCBound} improves and enhances the results in \cite{dastbasteh2023polynomial} and \cite{GuanAdditive2023}. 
	%, in particular,  Corollary 3.9 in \cite{dastbasteh2023polynomial} can be viewed as $\mathcal{S} \ne\emptyset$ in Equation (\ref{symplectic_bound}), and Theorem 1 in \cite{GuanAdditive2023} is a subcase of $\mathcal{S} =\emptyset$ in Equation (\ref{symplectic_bound}).
	%	In fact, Corollary 3.9 in \cite{dastbasteh2023polynomial} and Theorem 1 in \cite{GuanAdditive2023} are subcases of Theorem \ref{SQCBound}. Corollary 3.9 in \cite{dastbasteh2023polynomial} can be viewed as $\mathcal{S} \ne\emptyset$ in Equation (\ref{symplectic_bound}), and Theorem 1 in \cite{GuanAdditive2023} is $\mathcal{S} =\emptyset$ in Equation (\ref{symplectic_bound}).
\end{remark}

\begin{example}\label{good_Ex1}
	Take $q=2$ and $n=21$. Set $g(x)=x^6 + x^5 + x^4 + x^2 + 1$, $f_0(x)=x^{11} + x^6 + x^5 + x^2 + x + 1$ and $f_1(x)=x^{12} + x^7 + x^6 + x^5 + x^2 + x + 1$, then $h(x)=\frac{x^{21}-1}{g(x)}=x^{15} + x^{14} + x^{12} + x^9 + x^8 + x^5 + x^2 + 1$. Since $\gcd(f_0(x),f_1(x),h(x))=1$ and $\deg(h(x))=15$, $([g(x)f_0(x)],[g(x)f_1(x)])$ generates a $\left[42,15\right]_2$ QC code $\mathscr{C}$.

 Given that $\mathcal{S}=\{ \mathcal{I}^{(1)}(\mathscr{C})=\gcd(f_0(x)+f_1(x),h(x))=x^5 + x^4 + 1\}$, the lower bound on the minimum symplectic distance of $\mathscr{C}$ is the second condition in Equation (\ref{symplectic_bound}), i.e., 
 \begin{equation}
	\begin{array}{rl}
	 \min\left\{ d_H(\mathscr{C}_3),d_H(\mathscr{C}_4), d_H(\mathscr{C}_5), \mathcal{D}(\mathscr{C}) \right\}  \le d_s(\mathscr{C}) &\\
	 \le\min\left\{d_H(\mathscr{C}_3),d_H(\mathscr{C}_4)\right\}&
	\end{array}
\end{equation}
% $$ \min\left\{ d_H(\mathscr{C}_3),d_H(\mathscr{C}_4), d_H(\mathscr{C}_5), \mathcal{D}(\mathscr{C}) \right\} \le d_s(\mathscr{C})\le \min\left\{d_H(\mathscr{C}_3),d_H(\mathscr{C}_4)\right\},$$
  where $\mathcal{D}(\mathscr{C})=\max \left\{ \left\lceil\left(d_H(\mathscr{C}_1)+d_H(\mathscr{C}_2)+d_H(\left[g(x)\mathcal{I}^{(1)}(\mathscr{C})\right])\right)/ 2\right\rceil\right.$, 
  $\left.d_H(\mathscr{C}_1),d_H(\mathscr{C}_2)\right\}$. 
Here we list the generator polynomials and parameters of the cyclic codes $\mathscr{C}_i$:
 \begin{itemize}
      \item  $\mathscr{C}_0$: Generator polynomial $\left[g(x)\right]$. Parameters $\left[21,15,3\right]_2$.
      \item  $\mathscr{C}_1$: Generator polynomial $\left[g(x)f_0(x)\right]$. Parameters $\left[21,14,4\right]_2$.
      \item  $\mathscr{C}_2$: Generator polynomial $\left[g(x)f_1(x)\right]$. Parameters $\left[21,6,7\right]_2$.
     \item $\mathscr{C}_3$: Generator polynomial $\left[\frac{x^{21}-1}{\gcd(h(x),f_0(x))}\right]=\left[\frac{x^{21}-1}{x+1}\right]$. Parameters $\left[21,1,21\right]_2$.
      \item $\mathscr{C}_4$: Generator polynomial $\left[\frac{x^{21}-1}{\gcd(h(x),f_1(x))}\right]=\left[x^{12} + x^{11} + x^9 + x^7 + x^3 + x^2 + x + 1\right]$. Parameters $\left[21,9,8\right]_2$.
     \item $\mathscr{C}_5$: Generator polynomial $\left[g(x)\operatorname{lcm}(f_0(x),f_1(x))\right]=\left[x^{16} + x^{15} + x^{14} + x^{13} + x^{12} + x^{10} + x^8 + x^5 + x^4 + 1\right]$. Parameters $\left[21,5,10\right]_2$.
     \item   Generator polynomial $\left[g(x)\mathcal{I}^{(1)}(\mathscr{C})\right]=\left[x^{11} + x^8 + x^7 + x^2 + 1\right]$. Parameters $\left[21,10,5\right]_2$. 
 \end{itemize}

% 	 $\min\left\{ d_H\left(\left[\right]\right),d_H\left(\left[\frac{x^{21}-1}{\gcd(h(x),f_1(x))}\right]\right), d_H\left(\left[ g(x)\operatorname{lcm}(f_0(x),f_1(x))\right]\right), \mathcal{D}_{\perp}(\mathscr{C})\right\}$, where
  
% 	 $\mathcal{D}_{\perp}(\mathscr{C})=\max \left\{ \left\lceil\left(d_H(\left[ g(x)f_0(x) \right])+d_H(\left[ g(x)f_1(x)\right])+d_H(\left[ g(x)\mathcal{I}^{(1)}(\mathscr{C}) \right])\right)/ 2\right\rceil, d_H(\left[ g(x)f_0(x) \right]),d_H(\left[ g(x)f_1(x)\right])\right\}$,
% and the upper bound is $\min\left\{d_H\left(\left[\frac{x^{21}-1}{\gcd(h(x),f_0(x))}\right]\right),d_H\left(\left[\frac{x^{21}-1}{\gcd(h(x),f_1(x))}\right]\right)\right\}$.

We can directly derive that the lower and upper bounds on the minimum symplectic distance of $\mathscr{C}$ are both $8$ based on the parameters of the above cyclic codes. 
 By utilising Magma \cite{bosma1997magma}, it can be determined that the specific parameters of $\mathscr{C}$ are indeed $\left[42,15,8\right]_2^s$. 
 In contrast, Corollary 3.9 in \cite{dastbasteh2023polynomial} provides a lower bound of $4$ on the minimum symplectic distance of $\mathscr{C}$, whereas Theorem 1 in \cite{GuanAdditive2023} does not yield a lower bound for this code. Consequently, our bound is more sharper and applicable.
\end{example}

The following example shows that Theorem \ref{SQCBound} performs better even though the constraints in Theorem 1 of \cite{GuanAdditive2023} are satisfied.

\begin{example}\label{good_Ex_for additive}
	Let $q=2$ and $n=31$. Set $g(x)=x^5 + x^2 + 1$, $f_0(x)=x^{25} + x^{24} + x^{23} + x^{22} + x^{21} + x^{20} + x^{19} + x^{16} + x^{13} + x^{11} + x^9 +x^8 + x^7 + x^6 + 1$ and $f_1(x)=x^{21} + x^{20} + x^{15} + x^{13} + x^8 + x^5 + x^4 + x^3$, then $h(x)=\frac{x^{31}-1}{g(x)}=x^{26} + x^{23} + x^{21} + x^{20} + x^{17} + x^{16} + x^{15} + x^{14} + x^{13} + x^9 + x^8 + x^6 +
	x^5 + x^4 + x^2 + 1$. Since $\gcd(f_0(x),f_1(x),h(x))=1$ and $\deg(h(x))=26$, $([g(x)f_0(x)],[g(x)f_1(x)])$ produces a $\left[62,26\right]_2$ QC code $\mathscr{C}$.

 Given that $\mathcal{S}=\emptyset$, the lower bound on the minimum symplectic distance of $\mathscr{C}$ is the first condition in Equation (\ref{symplectic_bound}), i.e., 
  \begin{equation}
 	\begin{array}{rl}
 		\min\left\{ d_H(\mathscr{C}_3),d_H(\mathscr{C}_4), \mathcal{D}(\mathscr{C})\right\} \le d_s(\mathscr{C}) &\\
 		\le \min\left\{d_H(\mathscr{C}_3),d_H(\mathscr{C}_4)\right\}&
 	\end{array}
 \end{equation}
% $$\min\left\{ d_H(\mathscr{C}_3),d_H(\mathscr{C}_4), \mathcal{D}(\mathscr{C})\right\} \le d_s(\mathscr{C})\le \min\left\{d_H(\mathscr{C}_3),d_H(\mathscr{C}_4)\right\},$$  
 where $\mathcal{D}(\mathscr{C})=\max \left\{ \left\lceil \left(d_H(\mathscr{C}_0)+d_H(\mathscr{C}_1)+d_H(\mathscr{C}_2)\right)/2\right\rceil \right.,$ $\left.d_H(\mathscr{C}_1),d_H(\mathscr{C}_2)\right\}$.
Here we list the generator polynomials and parameters of the cyclic codes $\mathscr{C}_i$:
 \begin{itemize}
      \item  $\mathscr{C}_0$: Generator polynomial $\left[g(x)\right]$. Parameters $\left[31,26,3\right]_2$.
      \item  $\mathscr{C}_1$: Generator polynomial $\left[g(x)f_0(x)\right]$. Parameters $\left[31,16,7\right]_2$.
      \item  $\mathscr{C}_2$: Generator polynomial $\left[g(x)f_1(x)\right]$. Parameters $\left[31,25,4\right]_2$.
     \item $\mathscr{C}_3$: Generator polynomial $\left[\frac{x^{31}-1}{\gcd(h(x),f_0(x))}\right]=\left[x^{21} +\right. $ $ \left. x^{18} + x^{17} + x^{15} +x^{13} + x^{10} + x^5 + x^4 + x^3 + x^2 + x \right. $ $ \left.+ 1\right]$. Parameters $\left[31,10,12\right]_2$.
      \item $\mathscr{C}_4$: Generator polynomial $\left[\frac{x^{31}-1}{\gcd(h(x),f_1(x))}\right]=\left[\frac{x^{31}-1}{x+1}\right]$. Parameters $\left[31,1,31\right]_2$.
 \end{itemize}

	% It is easy to check that $\mathcal{S}=\emptyset$, so by Theorem \ref{SQCBound}, the lower bound on the minimum symplectic distance of $\mathscr{C}$ is
	% $\min\left\{ d_H\left(\left[\frac{x^{31}-1}{\gcd(h(x),f_0(x))}\right]\right),d_H\left(\left[\frac{x^{31}-1}{\gcd(h(x),f_1(x))}\right]\right), \mathcal{D}_{\perp}(\mathscr{C})\right\}$, 
	% where
	% $\mathcal{D}_{\perp}(\mathscr{C})=\max \left\{ \left\lceil\left(d_H(\left[ g(x)f_0(x) \right])+d_H(\left[ g(x)f_1(x)\right])+d_H(\left[ g(x) \right])\right)/ 2\right\rceil, d_H(\left[ g(x)f_0(x) \right]),d_H(\left[ g(x)f_1(x)\right])\right\}$,
	% and the upper bound is $\min\left\{d_H\left(\left[\frac{x^{31}-1}{\gcd(h(x),f_0(x))}\right]\right),d_H\left(\left[\frac{x^{31}-1}{\gcd(h(x),f_1(x))}\right]\right)\right\}$.
	
	We can directly derive lower and upper bounds on the minimum symplectic distance of $\mathscr{C}$ are $7$ and $12$ based on the parameters of above cyclic codes. By using Magma \cite{bosma1997magma}, we can calculate that the specific parameters of $\mathscr{C}$ are $\left[62,26,11\right]_2^s$. 
 In contrast, Theorem 1 in \cite{GuanAdditive2023} provides a lower bound of $5$ on the minimum symplectic distance of $\mathscr{C}$. Hence, our bound is more tighter. 
\end{example}

\begin{theorem}\label{Dual_Structure} 
	With the above notation,
	let $\mathscr{C}$ be a $1$-generator QC code with generator $(\left[f_0(x)g(x)\right],\left[f_1(x)g(x)\right])$.
	 If $\gcd(f_0(x),g(x))=1$,		
then the symplectic dual code $\mathscr{C}^{\perp_s}$ of $\mathscr{C}$ is a 2-generator QC code, whose generator matrix is
\begin{equation}
	G^{\perp_{s}}=
	\begin{pmatrix}
		F_{0}	& F_{1}\\
		\mathbf{0}&G_{g^{\perp_e}}
	\end{pmatrix},
\end{equation}
where $(F_{0},F_1)$ and $(\mathbf{0},G_{g^{\perp_e}})$ are the generator matrices of $1$-generator QC codes determined by $(\left[ \bar{f}_{0}(x)\right], \left[ \bar{f}_{1}(x)\right])$ and $(\left[ 0\right], \left[ g^{\perp_e}(x)\right])$, respectively. 
 %where $(F_{0},F_1)$ is an $n \times 2n$ circulant matrices generated by $(\left[ \bar{f}_{0}(x)\right], \left[\bar{f}_{1}(x)\right])$, $G_{g^{\perp_e}}$ is the generator matrix of the cyclic code generated by $\left[g^{\perp_e}(x)\right]$, and $\mathbf{0}$ is an appropriate zero matrix.
\end{theorem}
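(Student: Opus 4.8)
The plan is to compute the symplectic dual $\mathscr{C}^{\perp_s}$ directly from the generator data of $\mathscr{C}$, then verify that the exhibited $G^{\perp_s}$ is correct both in the dimension count and in the orthogonality relation. First I would recall that $\dim \mathscr{C} = n - \deg(g(x))$ (by the remark after Definition \ref{one_quasi-cyclic_def}), so $\dim \mathscr{C}^{\perp_s} = 2n - (n-\deg(g(x))) = n + \deg(g(x))$. The claimed matrix $G^{\perp_s}$ consists of a $1$-generator QC block determined by $(\left[\bar{f}_0(x)\right],\left[\bar{f}_1(x)\right])$ — which, since $\gcd(\bar f_0(x),g(x))=1$ in fact $\gcd(\bar f_0(x), x^n-1)$ plays the role of the trivial generator polynomial here, giving dimension $n$ — stacked over the block $(\mathbf{0}, G_{g^{\perp_e}})$, where $G_{g^{\perp_e}}$ is the $\deg(g^{\perp_e}(x)\text{-free part})$... more precisely $(\left[0\right],\left[g^{\perp_e}(x)\right])$ generates a cyclic code of dimension $n - \deg(g^{\perp_e}(x)) = n - \deg(g(x))$. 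Wait — I need the dimensions to add to $n + \deg g$, so the second block must have dimension $\deg(g(x))$; hence $G_{g^{\perp_e}}$ should be read as the circulant generated by $g^{\perp_e}(x)$ having rank $\deg(g(x))$ (the cyclic code $\langle g^{\perp_e}(x)\rangle$ has dimension $n-\deg(g^{\perp_e})=n-(n-\deg g)=\deg g$). So the first step is to pin down these two dimensions and confirm they sum to $n+\deg(g(x))$, and to argue the two blocks are linearly independent (the second block lives entirely in the right-hand coordinate block with a specific cyclic structure, while the first block has generic left-hand component because $\gcd(\bar f_0(x),g(x))=1$ forces the left components to generate all of $\mathbb{R}_{q,n}$ modulo nothing nontrivial).

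Next I would verify $G^{\perp_s} \cdot \Omega_{n} \cdot G^T = \mathbf{0}$, i.e. that every generator row of $G^{\perp_s}$ is symplectically orthogonal to every codeword of $\mathscr{C}$. Using the decomposition of the symplectic inner product into Euclidean inner products together with Lemma \ref{commutative_law}, a row of the first block, corresponding to $x^i(\left[\bar f_0(x)\right],\left[\bar f_1(x)\right])$, paired symplectically with a codeword $(\left[a(x)g(x)f_0(x)\right],\left[a(x)g(x)f_1(x)\right])$ of $\mathscr{C}$ gives
\[
\left\langle \left[x^i \bar f_0(x)\right], \left[a(x)g(x)f_1(x)\right]\right\rangle_e - \left\langle \left[x^i \bar f_1(x)\right], \left[a(x)g(x)f_0(x)\right]\right\rangle_e,
\]
and pushing the bars across via Lemma \ref{commutative_law} turns this into $\langle [x^i], [a(x)g(x)(f_0(x)f_1(x) - f_1(x)f_0(x))]\rangle_e = 0$. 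For the second block, a row $x^j(\left[0\right],\left[g^{\perp_e}(x)\right])$ paired with the same codeword gives $\langle [x^j g^{\perp_e}(x)], [a(x)g(x)f_1(x)]\rangle_e$, which vanishes by Theorem \ref{cornerstone} since $g^{\perp_e}(x)\mid$ the relevant product (indeed $g^{\perp_e}(x)$ is exactly the generator polynomial of $\langle g(x)\rangle^{\perp_e}$ and $a(x)g(x)f_1(x)\in\langle g(x)\rangle$). So orthogonality holds row-by-row, hence $\mathscr{C}^{\perp_s} \supseteq \text{rowspace}(G^{\perp_s})$; combined with the dimension count $\dim\text{rowspace}(G^{\perp_s}) = n+\deg(g(x)) = \dim\mathscr{C}^{\perp_s}$, equality follows.

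The one genuine subtlety — and the step I expect to be the main obstacle — is showing that the rows of $G^{\perp_s}$ are actually linearly independent, i.e. that the first block really has full rank $n$ and that it meets the second block only in $\{\mathbf 0\}$. This is exactly where the hypothesis $\gcd(\bar f_0(x),g(x))=1$ is used: it guarantees that the $1$-generator QC code generated by $(\left[\bar f_0(x)\right],\left[\bar f_1(x)\right])$ has dimension $n$ (since $\gcd(\bar f_0(x), \bar f_1(x), x^n-1)$ divides $\gcd(\bar f_0(x), x^n - 1)$, and one checks this is coprime to $g(x)$, which is the obstruction that matters for the join with the $g^{\perp_e}$-block), and that no nonzero combination of its rows has zero left-hand component and simultaneously lies in $\langle g^{\perp_e}(x)\rangle$ on the right. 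I would handle this by a careful rank argument: reduce $G^{\perp_s}$ to the shape where the first block contributes a full-rank left half, note the second block has zero left half, so the combined rank is $n + \text{rank}(G_{g^{\perp_e}}) = n + \deg(g(x))$; then invoke the matching dimension of $\mathscr{C}^{\perp_s}$ to conclude. The routine polynomial-divisibility bookkeeping (relating $\gcd(\bar f_0, g)$ to $\gcd$s involving $x^n-1$, and confirming the dimension of the cyclic code $\langle g^{\perp_e}(x)\rangle$ is $\deg g$) I would relegate to a line or two, citing Definition \ref{one_quasi-cyclic_def} and the standard facts on cyclic-code dimensions.
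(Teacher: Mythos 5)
Your orthogonality half is sound and is essentially the paper's: the paper likewise pairs a generic codeword of $\mathscr{C}$ with a generic codeword of the row space of $G^{\perp_s}$ and reduces everything to Lemma \ref{commutative_law} and Theorem \ref{cornerstone} (minor slip: the second-block pairing yields $-\left\langle [x^jg^{\perp_e}(x)],[a(x)g(x)f_0(x)]\right\rangle_e$, with $f_0$, not $f_1$). The dimension count $\dim\mathscr{C}^{\perp_s}=n+\deg(g(x))$ is also correct. The genuine gap is in the step you yourself flag as the main obstacle: the rank of $G^{\perp_s}$. Your proposed reduction --- ``the first block contributes a full-rank left half, the second block has zero left half, so the ranks add'' --- requires $F_0$ to have rank $n$, i.e.\ $\gcd(\bar f_0(x),x^n-1)=1$. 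The hypothesis only gives $\gcd(\bar f_0(x),g(x))=1$, and $\bar f_0(x)$ may share factors with the complementary part of $x^n-1$: in the paper's Example \ref{good_Ex2}, $\gcd(x^{15}-1,\bar f_0(x))$ has degree $6$, so $F_0$ has rank $9$, not $15$. Hence the first block contains nonzero row combinations of the form $(\mathbf 0_n,\ast)$, and these are precisely the vectors that could collide with the second block $(\mathbf 0,G_{g^{\perp_e}})$; your reduction silently assumes they do not exist.

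The paper closes this by first applying Lemma \ref{QCmatrix_spread} to $(F_0,F_1)$, splitting it into rows $(F_0^{*},F_1^{*})$ with both components active, rows $(A,\mathbf 0)$, and rows $(\mathbf 0,B)$, where $B$ generates the cyclic code $\left\langle \frac{x^n-1}{\gcd(x^n-1,\bar f_0(x))}\right\rangle$. It then checks two things: (i) $(F_0,F_1)$ has rank $n$, which needs $\gcd(\bar f_0(x),\bar f_1(x),x^n-1)=1$ (deduced from $\gcd(f_0,f_1,h)=1$ via reciprocation together with the hypothesis), and (ii) $\left\langle \frac{x^n-1}{\gcd(x^n-1,\bar f_0(x))}\right\rangle\cap\left\langle g^{\perp_e}(x)\right\rangle=\{\mathbf 0\}$, equivalently $\operatorname{lcm}\bigl(g^{\perp_e}(x),\frac{x^n-1}{\gcd(x^n-1,\bar f_0(x))}\bigr)\equiv 0\pmod{x^n-1}$ --- this is where $\gcd(\bar f_0(x),g(x))=1$ is actually consumed. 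Your write-up states the right target (``no nonzero combination of its rows has zero left-hand component and simultaneously lies in $\langle g^{\perp_e}(x)\rangle$ on the right'') but the argument offered to prove it does not apply; without (ii) or an equivalent, the rank of $G^{\perp_s}$ could fall short of $n+\deg(g(x))$ and equality with $\mathscr{C}^{\perp_s}$ would not follow.
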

\begin{proof}
	 Let $\mathscr{C}^{\circ}$ denote the code generated by $G^{\perp_{s}}$.  
 Then any codeword of $\mathscr{C}$ and $\mathscr{C}^{\circ}$ has the form $\mathbf{c_{1}}=(\left[a(x) f_{0}(x) g(x)\right],$ $\left[a(x) f_{1}(x) g(x)\right])$ and $\mathbf{c_{2}}=(\left[b(x) \bar{f}_{0}(x)\right],\left[b(x) \bar{f}_{1}(x)+c(x)g^{\perp_e}(x)\right])$, where $a(x),b(x),c(x)\in\mathbb{R}_{q,n}$. 
	Using a similar approach of Theorem \ref{one_quasi-cyclic}, it is easy to prove that $\left\langle \mathbf{c_{1}}, \mathbf{c_{2}}\right\rangle_{s}=\mathbf{c_{1}} \cdot \Omega _{mn} \cdot \mathbf{c_{2}}^{T}=0$.
	Therefore, $\mathscr{C}$ and $\mathscr{C}^{\circ}$ are mutually symplectic orthogonal.
%	According to Definition \ref{one_quasi-cyclic_def}, $\mathscr{C}$ is a $1$-generator quasi-cyclic code has generator $(\left[g(x){f}_{0}(x)\right]$, $\left[g(x){f}_{1}(x)\right]$,$\ldots$, $\left[g(x){f}_{\ell -1}(x)\right])$, where $0\le j\le \ell -1$. 

 		%Since $\gcd(\bar{f}_{m-1}(x),x^n-1)=1$, the rank of $(F_{m-1}, F_{j})$ is $n$. In addition, 

 		By Lemma \ref{QCmatrix_spread}, the generator matrix of $\mathscr{C}^{\circ}$ also can be transformed into 
\begin{equation}
	G_1^{\perp_{s}}=
	\begin{pmatrix}
		F_{0}^{*}	& F_{1}^{*}\\
		A	& \mathbf{0}\\
		\mathbf{0}	& B\\
		\mathbf{0}&G_{g^{\perp_e}}
	\end{pmatrix},
\end{equation}
 		where $(F_{0}^{*},F_1^{*})$ denotes the first $(n-\deg(\gcd(x^n-1,\bar{f}_1(x)))-\deg(\gcd(x^n-1,\bar{f}_{0}(x))))$-row of $(F_{0},F_1)$, $A$ and $B$ are the generator matrices of cyclic codes generated by $\left[  \frac{x^n-1}{\gcd(x^n-1,\bar{f}_{1}(x))}\right]$ and $\left[ \frac{x^n-1}{\gcd(x^n-1,\bar{f}_{0}(x))}\right]$, respectively.
 		
 		According to Definition \ref{one_quasi-cyclic_def}, $\gcd(\bar{f}_{0}(x),\bar{f}_{1}(x),g^{\perp_e}(x))=1$. Given that $\gcd(f_0(x),g(x))=1$, we have $\gcd(\bar{f}_0(x),\tilde{g}(x) )=1$. As $\tilde{g}(x)g^{\perp_e}(x)=x^n-1$,  $\gcd(\bar{f}_{0}(x),\bar{f}_{1}(x),x^n-1)=1$ and $\operatorname{lcm}\left(g^{\perp_e}(x),\frac{x^n-1}{\gcd(x^n-1,\bar{f}_{0}(x))}\right) \equiv 0 \pmod{x^n-1}$ hold. This implies that, the rank of $(F_{0}, F_{1})$ is $n$ and the rows of $B$ and $G_{g^{\perp_e}}$ are linearly independent. Therefore, the rank of $G_1^{\perp_{s}}$ is $n+\deg(g(x))$. Thus, $\mathscr{C}$ and $\mathscr{C}^{\circ}$ are symplectic dual codes.
\end{proof}
%\begin{remark}\label{discuss_structure}
%	It is worth mentioning that Theorem \ref{Dual_Structure} only gives one form of the generator matrix of $\mathscr{C}^{\perp_s}$.
%	If we change the conditions in Theorem \ref{Dual_Structure}, the generator matrix of $\mathscr{C}^{\perp_s}$ can also be written in other different forms.
%	
%	 (1) For $\ell=2$, limiting $\gcd(\bar{f}_1(x),g(x))=1$, we can use a similar method to prove that $
%	\begin{pmatrix}
%		F_{0}	& F_{1}\\
%		G_{g^{\perp_e}}&
%	\end{pmatrix}_{n\times (n+\deg(g(x)))}$ is also a generator matrix of $\mathscr{C}^{\perp_s}$.
%
%	(2) For $\ell\ge4$, if $\gcd(\bar{f}_j(x),g(x))=1$, then $G_{g^{\perp_e}}$ can be written in other positions of the last line (the $j$-th block) of $G_2^{\perp_{s}}$.
%\end{remark}

%Obviously, according to Theorem \ref{Dual_Structure}, for $1$-generator quasi-cyclic codes, it is difficult to construct quasi-cyclic codes with large symplectic dual distances when the index $\ell \ge 4$. Therefore, we will only detail the $1$-generator quasi-cyclic codes with $\ell = 2$ as shown in the following theorem, which describes the upper and lower bounds on its minimum symplectic dual distance.

\begin{theorem}\label{SDQCBound} With the above notation,
	let $\mathscr{C}$ be a $1$-generator QC code with generator $(\left[f_0(x)g(x)\right],\left[f_1(x)g(x)\right])$ satisfying
 	$\gcd(f_0(x),g(x))=1$. 
  For $\alpha\in \mathbb{F}_q^*$, define $\mathcal{I}^{(\alpha)}_{\perp}(\mathscr{C})=\gcd(\bar{f}_1(x)+\alpha  \bar{f}_0(x),g^{\perp_{e}}(x))$. Let $\mathcal{S}_{\perp} =\{\mathcal{I}^{(\alpha)}_{\perp}(\mathscr{C}) \mid \mathcal{I}^{(\alpha)}_{\perp}(\mathscr{C})\ne 1, \forall\alpha\in\mathbb{F}_q^*\}$.
  %Let $\mathcal{S} =\{\mathcal{I}^{(\alpha)}_{\perp}(\mathscr{C})\mid \mathcal{I}^{(\alpha)}_{\perp}(\mathscr{C})=\gcd(\bar{f}_1(x)+\alpha  \bar{f}_0(x),g^{\perp_{e}}(x))\ne1, \alpha  \in \mathbb{F}_q^*\}$.
  
	Suppose that $\mathscr{C}_0$, $\mathscr{C}_1$, $\mathscr{C}_2$, $\mathscr{C}_3$, $\mathscr{C}_4$, $\mathscr{C}_5$, $\mathscr{C}_6$ and $\mathscr{C}_7$ are cyclic codes determined by 
 $\left[g^{\perp_{e}}(x)\right]$,
	$\left[\bar{f}_0(x)\right]$, $\left[\gcd(\bar{f}_1(x),g^{\perp_{e}}(x))\right]$,
	$\left[\frac{x^n-1}{\gcd( x^n-1,\bar{f}_1(x))}\right]$, 	$\left[\frac{x^n-1}{\gcd( x^n-1,\bar{f}_0(x))}\right]$, $\left[\gcd\left( \frac{x^n-1}{\gcd( x^n-1,\bar{f}_0(x))},g^{\perp_{e}}(x) \right)\right]$, $\left[ \operatorname{lcm}(\bar{f}_0(x),\gcd(\bar{f}_1(x),g^{\perp_{e}}(x))\right]$ and $\left[\bar{f}_0(x)\frac{g^{\perp_{e}}(x)}{\gcd(\bar{f}_1(x),g^{\perp_{e}}(x))}\right]$, respectively.

Define $\mathcal{D}_{\perp}(\mathscr{C})$ as in Equation (\ref{The.6.Eq.1}).
\begin{figure*}
	\begin{equation}\label{The.6.Eq.1}
		\mathcal{D}_{\perp}(\mathscr{C})=\max \left\{ \left\lceil\left(d_H(\mathscr{C}_1)+d_H(\mathscr{C}_2)+(q-|\mathcal{S}_{\perp} |-1)+\sum\limits_{\mathcal{I}^{(\alpha)}_{\perp}(\mathscr{C})\in \mathcal{S}_{\perp} } d_H\left(\left[ \mathcal{I}^{(\alpha)}_{\perp}(\mathscr{C}) \right]\right)\right)/ q\right\rceil, d_H(\mathscr{C}_1),d_H(\mathscr{C}_2)\right\}.
	\end{equation}
\end{figure*}
Writing $d_{\mathrm{upper}}=\min\left\{d_H(\mathscr{C}_0),d_H(\mathscr{C}_3),d_H(\mathscr{C}_4)\right\}$ and  $d_{\mathrm{lower}}$ as in Equation (\ref{Dual_lower_Bound}),
\begin{figure*}\begin{equation}\label{Dual_lower_Bound}
	d_{\mathrm{lower}}=
	\left\{
	\begin{array}{ll}
		\min \left\{ d_H(\mathscr{C}_3),d_H(\mathscr{C}_5),\mathcal{D}_{\perp}(\mathscr{C}) \right\},& \mathcal{S}_{\perp} =\emptyset, g(x)\mid \bar{f}_1(x),                                                           \\
		\min \left\{ d_H(\mathscr{C}_3),d_H(\mathscr{C}_5),d_H(\mathscr{C}_7),\mathcal{D}_{\perp}(\mathscr{C}) \right\},& \mathcal{S}_{\perp} =\emptyset, g(x)\nmid \bar{f}_1(x),                                       \\
		\min\left\{ d_H(\mathscr{C}_3),d_H(\mathscr{C}_5), d_H(\mathscr{C}_6), \mathcal{D}_{\perp}(\mathscr{C})\right\},&q=2,\mathcal{S}_{\perp} \ne\emptyset, g(x)\mid \bar{f}_1(x),                                   \\
		\min\left\{ d_H(\mathscr{C}_3),d_H(\mathscr{C}_5), d_H(\mathscr{C}_6),d_H(\mathscr{C}_7), \mathcal{D}_{\perp}(\mathscr{C})\right\},& q=2,\mathcal{S}_{\perp} \ne\emptyset,g(x)\nmid \bar{f}_1(x),\\
		\min \left\{ d_H(\mathscr{C}_3),d_H(\mathscr{C}_5),\max\{d_H(\mathscr{C}_1),d_H(\mathscr{C}_2)\} \right\},&q>2,\mathcal{S}_{\perp}\ne \emptyset,g(x)\mid \bar{f}_1(x),                     \\
		\min \left\{ d_H(\mathscr{C}_3),d_H(\mathscr{C}_5),d_H(\mathscr{C}_7),\max\{d_H(\mathscr{C}_1),d_H(\mathscr{C}_2)\} \right\},&q>2,\mathcal{S}_{\perp}\ne \emptyset,g(x)\nmid \bar{f}_1(x). \\
	\end{array}
	\right.
\end{equation} \end{figure*}
then we have $d_{\mathrm{lower}}\le d_s(\mathscr{C}^{\perp_s}) \le d_{\mathrm{upper}}$.
\end{theorem}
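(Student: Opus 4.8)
The plan is to transcribe the argument of Theorem~\ref{SQCBound} to the $2$-generator description of $\mathscr{C}^{\perp_s}$ supplied by Theorem~\ref{Dual_Structure} (which applies since $\gcd(\bar{f}_0(x),g(x))=1$ is assumed). First I fix a codeword $\mathbf{c}=(\mathbf{c}',\mathbf{c}'')=(\left[b(x)\bar{f}_0(x)\right],\left[b(x)\bar{f}_1(x)+c(x)g^{\perp_e}(x)\right])$ with $b(x),c(x)\in\mathbb{R}_{q,n}$, and record $\mathbf{c}'+\alpha\mathbf{c}''=\left[b(x)(\bar{f}_0(x)+\alpha\bar{f}_1(x))+\alpha c(x)g^{\perp_e}(x)\right]$ for $\alpha\in\mathbb{F}_q^*$. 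Since $g^{\perp_e}(x)\mid x^n-1$, each of these three vectors lies in an explicit cyclic code: $\mathbf{c}'\in\langle\bar{f}_0(x)\rangle=\mathscr{C}_1$; $\mathbf{c}''\in\langle\bar{f}_1(x)\rangle+\langle g^{\perp_e}(x)\rangle=\langle\gcd(\bar{f}_1(x),g^{\perp_e}(x))\rangle=\mathscr{C}_2$; and $\mathbf{c}'+\alpha\mathbf{c}''\in\langle\gcd(\bar{f}_0(x)+\alpha\bar{f}_1(x),g^{\perp_e}(x))\rangle$, which is all of $\mathbb{R}_{q,n}$ when that gcd is $1$ and otherwise the cyclic code generated by $\mathcal{I}^{(\alpha^{-1})}_\perp(\mathscr{C})$ (using $\bar{f}_0+\alpha\bar{f}_1=\alpha(\bar{f}_1+\alpha^{-1}\bar{f}_0)$, so the reindexing $\alpha\mapsto\alpha^{-1}$ is harmless). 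The engine is the weight identity $q\mathbf{w}_s(\mathbf{c})=\mathbf{w}_H(\mathbf{c}')+\mathbf{w}_H(\mathbf{c}'')+\sum_{\alpha\in\mathbb{F}_q^*}\mathbf{w}_H(\mathbf{c}'+\alpha\mathbf{c}'')$ from~\cite{ling2010generalization}, used together with $\gcd(\bar{f}_0(x),g(x))=1$ and its consequence $\gcd(\bar{f}_0(x),\bar{f}_1(x),x^n-1)=1$ already derived inside the proof of Theorem~\ref{Dual_Structure}; this coprimality is what collapses composite gcds such as $\gcd\big(x^n-1,\frac{x^n-1}{\gcd(x^n-1,\bar{f}_1(x))}\bar{f}_0(x)\big)$ back to $\frac{x^n-1}{\gcd(x^n-1,\bar{f}_1(x))}$.

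For the upper bound I exhibit three families of codewords: $b(x)=0$ yields $(\mathbf{0}_n,\left[c(x)g^{\perp_e}(x)\right])$, whose symplectic weight equals the Hamming weight of an element of $\mathscr{C}_0$; $c(x)=0$ with $\frac{x^n-1}{\gcd(x^n-1,\bar{f}_1(x))}\mid b(x)$ forces $\mathbf{c}''=\mathbf{0}_n$ and makes $\mathbf{c}'$ range over $\mathscr{C}_3$; and $c(x)=0$ with $\frac{x^n-1}{\gcd(x^n-1,\bar{f}_0(x))}\mid b(x)$ forces $\mathbf{c}'=\mathbf{0}_n$ and makes $\mathbf{c}''$ range over $\mathscr{C}_4$. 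Hence $d_s(\mathscr{C}^{\perp_s})\le\min\{d_H(\mathscr{C}_0),d_H(\mathscr{C}_3),d_H(\mathscr{C}_4)\}=d_{\mathrm{upper}}$.

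For the lower bound I case-split on a nonzero $\mathbf{c}$. If $\mathbf{c}''=\mathbf{0}_n$, then $b(x)\bar{f}_1(x)\in\langle g^{\perp_e}(x)\rangle$, i.e.\ $\frac{g^{\perp_e}(x)}{\gcd(\bar{f}_1(x),g^{\perp_e}(x))}\mid b(x)$, so $\mathbf{c}'$ ranges over $\mathscr{C}_7$ and $\mathbf{w}_s(\mathbf{c})=\mathbf{w}_H(\mathbf{c}')\ge d_H(\mathscr{C}_7)$; this is where the dichotomy $g(x)\mid\bar{f}_1(x)$ versus $g(x)\nmid\bar{f}_1(x)$ enters, as in the divisible case $\mathscr{C}_7$ reduces to $\mathscr{C}_3$ and may be omitted. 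If $\mathbf{c}'=\mathbf{0}_n$, then $\frac{x^n-1}{\gcd(x^n-1,\bar{f}_0(x))}\mid b(x)$, so $\mathbf{c}''$ ranges over $\langle\gcd(\frac{x^n-1}{\gcd(x^n-1,\bar{f}_0(x))},g^{\perp_e}(x))\rangle=\mathscr{C}_5$ and $\mathbf{w}_s(\mathbf{c})\ge d_H(\mathscr{C}_5)$. If both coordinates are nonzero I distinguish whether some $\mathbf{c}'+\alpha\mathbf{c}''$ vanishes: if so then $\mathbf{c}'=-\alpha\mathbf{c}''$, hence $\mathrm{Supp}(\mathbf{c}')=\mathrm{Supp}(\mathbf{c}'')$ and $\mathbf{w}_s(\mathbf{c})=\mathbf{w}_H(\mathbf{c}')=\mathbf{w}_H(\mathbf{c}'')\ge\max\{d_H(\mathscr{C}_1),d_H(\mathscr{C}_2)\}$, and for $q=2$ moreover $\mathbf{c}'=\mathbf{c}''\in\mathscr{C}_1\cap\mathscr{C}_2=\mathscr{C}_6$; if not, feeding the three membership facts into the weight identity gives $q\mathbf{w}_s(\mathbf{c})\ge d_H(\mathscr{C}_1)+d_H(\mathscr{C}_2)+(q-|\mathcal{S}_\perp|-1)+\sum_{\mathcal{I}^{(\alpha)}_\perp(\mathscr{C})\in\mathcal{S}_\perp}d_H(\left[\mathcal{I}^{(\alpha)}_\perp(\mathscr{C})\right])$, the term $q-|\mathcal{S}_\perp|-1$ counting the $\alpha$ with trivial gcd (weight $\ge1$ each) and the sum the rest, which combined with $\mathbf{w}_s(\mathbf{c})\ge\max\{d_H(\mathscr{C}_1),d_H(\mathscr{C}_2)\}$ yields $\mathbf{w}_s(\mathbf{c})\ge\mathcal{D}_\perp(\mathscr{C})$. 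Intersecting the per-case estimates gives $d_s(\mathscr{C}^{\perp_s})\ge d_{\mathrm{lower}}$ as in~(\ref{Dual_lower_Bound}) and~(\ref{The.6.Eq.1}), the six listed cases matching the values of the flags $\mathcal{S}_\perp=\emptyset$, $q=2$, and $g(x)\mid\bar{f}_1(x)$.

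The main obstacle is the bookkeeping: pinning down which cyclic code each of $\mathbf{c}'$, $\mathbf{c}''$, $\mathbf{c}'+\alpha\mathbf{c}''$ sits in, proving the ideal identities $\langle\bar{f}_1(x)\rangle+\langle g^{\perp_e}(x)\rangle=\langle\gcd(\bar{f}_1(x),g^{\perp_e}(x))\rangle$, $\mathscr{C}_1\cap\mathscr{C}_2=\langle\operatorname{lcm}(\bar{f}_0(x),\gcd(\bar{f}_1(x),g^{\perp_e}(x)))\rangle$, and the description of the $\mathbf{c}''=\mathbf{0}_n$ locus as $\mathscr{C}_7$ — all of which hinge on $g^{\perp_e}(x)\mid x^n-1$ and on the coprimality consequences of $\gcd(\bar{f}_0(x),g(x))=1$ — and then carving the combined estimate into the six cases of~(\ref{Dual_lower_Bound}). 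A secondary check is the internal consistency $d_{\mathrm{lower}}\le d_{\mathrm{upper}}$, which reduces to the containments $\mathscr{C}_0,\mathscr{C}_4\subseteq\mathscr{C}_5$ and to the precise statement of when $\mathscr{C}_7$ degenerates to $\mathscr{C}_3$.
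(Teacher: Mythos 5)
Your proposal is correct and follows essentially the same route as the paper: invoke Theorem~\ref{Dual_Structure} to get the $2$-generator form of $\mathscr{C}^{\perp_s}$, apply the weight identity $q\mathbf{w}_s=\mathbf{w}_H(\mathbf{c}')+\mathbf{w}_H(\mathbf{c}'')+\sum_\alpha\mathbf{w}_H(\mathbf{c}'+\alpha\mathbf{c}'')$, and case-split to pin each component in an explicit cyclic code. The only organizational difference is that you split directly on the codeword components $\mathbf{c}'=\mathbf{0}_n$ / $\mathbf{c}''=\mathbf{0}_n$, whereas the paper splits on $a(x)\bar{f}_0(x)\equiv 0$ / $a(x)\bar{f}_1(x)\equiv 0$ and then recovers the extra $\mathbf{c}''=\mathbf{0}_n$ codewords (those with $a\bar{f}_1\not\equiv 0$ but $a\bar{f}_1+bg^{\perp_e}\equiv 0$) through the $\mathscr{C}_0\cap\mathscr{C}_8$ discussion and the $g(x)\mid\bar{f}_1(x)$ dichotomy; your framing collapses that into a single description via $\mathscr{C}_7$, with the divisible case degenerating to $\mathscr{C}_3$, which is a slightly cleaner way to reach the same six-case formula.
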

\begin{proof}
Let $t=(n-\deg(\gcd(x^n-1,\bar{f}_1(x)))-\deg(\gcd(x^n-1,\bar{f}_{0}(x))))$ and $\mathscr{C}_8$ be the cyclic code generated by $\left[ \bar{f}_1(x) \right]$. 
	Since $\gcd(f_0(x),g(x))=1$,
	by Theorem \ref{Dual_Structure}, the symplectic dual of $\mathscr{C}$ is a 2-generator QC code with generator matrix
	\begin{equation}
		G^{\perp_{s}}
  % =\left(\begin{array}{cc}
		% 	F_0 & F_1 \\
		% 	\mathbf{0} & G_{g^{\perp_e}}
		% \end{array}\right)
  =
  \begin{pmatrix}
		F_{0}^{*}	& F_{1}^{*}\\
		A	& \mathbf{0}\\
		\mathbf{0}	& B\\
		\mathbf{0}&G_{g^{\perp_e}}
	\end{pmatrix},
	\end{equation}
where
$(F_{0}^{*},F_1^{*})=\mathrm{Circ}_t\left( \left[\bar{f}_0(x)\right],\left[\bar{f}_1(x)\right] \right)$, $A$ and $B$ are the generator matrices of cyclic codes generated by $\left[  \frac{x^n-1}{\gcd(x^n-1,\bar{f}_{1}(x))}\right]$ and $\left[ \frac{x^n-1}{\gcd(x^n-1,\bar{f}_{0}(x))}\right]$, and $G_{g^{\perp_e}}$ is the generator matrix of cyclic code generated by $[g^{\perp_e}(x)]$, respectively.  
	% where $(F_{0},F_1)$ and $(\mathbf{0},G_{g^{\perp_e}})$ are generator matrices of quasi-cyclic codes generated by $(\left[ \bar{f}_{0}(x)\right], \left[ \bar{f}_{1}(x)\right])$ and $(\left[ 0\right], \left[ G_{g^{\perp_e}}\right])$, respectively. 
Accordingly, any codeword of $\mathscr{C}^{\perp_{s}}$ can be expressed as $\mathbf{c}=(\left[a(x)\bar{f}_0(x)\right],\left[a(x)\bar{f}_1(x)+b(x)g^{\perp_{e}}(x)\right])$, where $a(x),b(x)\in\mathbb{R}_{q,n}$. 
	By \cite{ling2010generalization}, we have
 % \begin{equation}
     $q\mathbf{w}_{s}(\mathbf{c})=\mathbf{w}_{H}(\left[a(x)\bar{f}_0(x)\right])+\mathbf{w}_{H}(\left[a(x)\bar{f}_1(x)+b(x)g^{\perp_{e}}(x)\right])+\sum\limits_{\alpha  \in \mathbb{F}_q^*} \mathbf{w}_{H}(\left[a(x)(\bar{f}_1(x)+\alpha   \bar{f}_0(x))+b(x)g^{\perp_{e}}(x)\right]).$
 % \end{equation}

	Similar to the proof of Theorem \ref{SQCBound}, if $a(x)\bar{f}_1(x)\equiv  0 \pmod{x^n-1}$ or $a(x)\bar{f}_0(x) \equiv  0 \pmod{x^n-1}$, then $\mathbf{c}=(\left[a(x)\bar{f}_0(x)\right],\left[b(x)g^{\perp_{e}}(x)\right])$ or $(\mathbf{0}_n,\left[a(x)\bar{f}_1(x)+b(x)g^{\perp_{e}}(x)\right])$.
	At this point, we have $\min \left\{ d_H(\mathscr{C}_3),d_H(\mathscr{C}_5) \right\}  \le \mathbf{w}_{s}(\mathbf{c}) \le 
	\min\left\{d_H(\mathscr{C}_0),d_H(\mathscr{C}_3),d_H(\mathscr{C}_4)\right\}$.

	%Let $\mathscr{C}^{\circ}$ be a $1$-generator quasi-cyclic code determined by $(\left[\bar{f}_0(x)\right],\left[\gcd(\bar{f}_1(x),g^{\perp_{e}}(x))\right])$. 
  
 If $a(x)\bar{f}_0(x)\not\equiv 0 \pmod{x^n-1}$ and $a(x)\bar{f}_1(x) \not\equiv 0 \pmod{x^n-1}$, there are the following scenarios to discuss. 

	Firstly,  if $g(x)\mid \bar{f}_1(x)$, then $\operatorname{lcm}(\bar{f}_1(x),g^{\perp_e}(x))\equiv 0 \pmod{x^n-1}$. 
 It implies that $\mathscr{C}_0\cap \mathscr{C}_8=\{ \mathbf{0}_{n}\}$. 
 %, so $\left[ g^{\perp_e}(x) \right]\cap\left[ \bar{f}_1(x) \right]=\{\mathbf{0}\}$.
	 If $g(x)\nmid \bar{f}_1(x)$, then the two cyclic codes $\mathscr{C}_0$ and $\mathscr{C}_8$ have nontrivial intersection.
 Therefore, codewords of the form $(\left[a(x)\bar{f}_0(x)\right],\mathbf{0}_n)$ may also exist in $\mathscr{C}^{\perp_s}$. 
 In this case, there exists $\frac{g^{\perp_{e}}(x)}{\gcd(\bar{f}_1(x),g^{\perp_{e}}(x))} \mid a(x)$. Thus, when $g(x)\nmid \bar{f}_1(x)$, we can only deduce $d_s(\mathscr{C}^{\perp_s})\ge d_H(\mathscr{C}_7)$. 
	
	Secondly, if for all $ \alpha  $ in $\mathbb{F}_q^*$, $\gcd(\bar{f}_1(x)+\alpha   \bar{f}_0(x),g^{\perp_{e}}(x))= 1$, i.e., $\mathcal{S}_{\perp} =\emptyset$, then 
	\begin{equation}\label{EDSBound1}
		d_s(\mathscr{C}^{\perp_s})\ge \left(d_H(\mathscr{C}_1)+d_H(\mathscr{C}_2)+(q-1) \right)/ q.
	\end{equation}
	
	Thirdly, if there exists $\alpha  $ in $\mathbb{F}_q^*$ such that  $\gcd(\bar{f}_1(x)+\alpha   \bar{f}_0(x),g^{\perp_{e}}(x))\ne 1$, i.e., $\mathcal{S}_{\perp} \ne\emptyset$, then there may exist $\mathrm{Supp}(\left[a(x)\bar{f}_0(x)\right]) =\mathrm{Supp}(\left[a(x)\bar{f}_1(x)+b(x)g^{\perp_{e}}(x)\right])$.
	Thus, for $q>2$, we can only deduce $d_s(\mathscr{C}^{\perp_s})\ge \max\left\{ d_H(\mathscr{C}_1),d_H(\mathscr{C}_2) \right\}$.
	But for $q=2$, we have
	$d_s(\mathscr{C}^{\perp_s})\ge d_H( \left[ \operatorname{lcm}(\bar{f}_0(x),\gcd(\bar{f}_1(x),g^{\perp_{e}}(x))) \right])$. 
	
	For the case of $\mathrm{Supp}(\left[a(x)\bar{f}_0(x)\right])\ne$ 
	$ \mathrm{Supp}(\left[a(x)\bar{f}_1(x)+\right. $ $\left.b(x)g^{\perp_{e}}(x)\right])$,
	we have Equation (\ref{EDSBound2}) holds.

	 \begin{figure*}
	\begin{equation}\label{EDSBound2}
		d_s(\mathscr{C}^{\perp_s})\ge \left(d_H(\mathscr{C}_1)+d_H(\mathscr{C}_2)+(q-|\mathcal{S}_{\perp} |-1)+\sum\limits_{\mathcal{I}^{(\alpha)}_{\perp}(\mathscr{C})\in \mathcal{S}_{\perp} } d_H\left(\left[ \mathcal{I}^{(\alpha)}_{\perp}(\mathscr{C}) \right]\right)\right)/ q.
	\end{equation} \end{figure*} 
	
	Moreover, at this time, $d_s(\mathscr{C}^{\perp_s})$ must be greater than or equal to $\max\left\{ d_H(\mathscr{C}_1),d_H(\mathscr{C}_2) \right\}$ when Equations (\ref{EDSBound1}) and (\ref{EDSBound2}) are satisfied.
 Therefore, we have $d_s(\mathscr{C})\ge\mathcal{D}_{\perp}(\mathscr{C})$.
 
	From the combination of the above conditions, one can obtain upper and lower bounds for the minimum symplectic distance of $\mathscr{C}^{\perp_s}$. This concludes the proof.
\end{proof}

\begin{remark}\label{improve_bound_Galindo}
Theorem \ref{SDQCBound} estimates the range of the minimum symplectic distance for $2$-generator QC codes $\mathscr{C}^{\perp_s}$ in Theorem \ref{Dual_Structure}. 
It should be noted that Galindo et al. also studied the lower bound on minimum symplectic distance for a class of  $2$-generator QC codes with generators $(\left[g_1(x)\right],\left[f(x)g_1(x)\right])$ and $([0],\left[g_2(x)\right])$, where $g_i(x) \mid (x^n-1)$ and $f(x)\in \mathbb{R}_{q,n}$ (please see Proposition 4 in \cite{galindo2018quasi}). 
It is easy to verify that if we constraint $g_1(x)=1$, then Galindo's codes are a subclass of $\mathscr{C}^{\perp_s}$ in Theorem \ref{Dual_Structure}.   Therefore,
Theorem \ref{SDQCBound} can be seen as a valid  generalization of Proposition 4 in \cite{galindo2018quasi}. 
\end{remark}

\begin{example}\label{good_Ex2}
	Let $q=2$ and $n=15$. Set $g(x)=x^4 + x + 1$, $f_0(x)=x^{13} + x^{12} + x^{11} + x^8 + x^7 + x^4 + x^3 + x^2 + 1$ and $f_1(x)=x^{13} + x^9 + x^8 + x^7 + x^6 + x^2 + 1$, then $g^{\perp_e}(x)=x^{11} + x^{10} + x^9 + x^8 + x^6 + x^4 + x^3 + 1$. It is easy to check that $\gcd(f_0(x),f_1(x))=1$, $\gcd(\bar{f}_0(x),g(x))=1$ and 
 $g^{\perp_e}(x) \mid g(x)(f_{0}(x) \bar{f}_{1}(x)-f_{1}(x) \bar{f}_{0}(x))$, so $(\left[g(x)f_0(x)\right],\left[g(x)f_1(x)\right])$ generates a symplectic self-orthogonal $\left[30,11\right]_2$ QC code $\mathscr{C}$.

Since $\mathcal{S}_{\perp} =\{\mathcal{I}^{(1)}_{\perp}(\mathscr{C})=\gcd(\bar{f}_1(x)+ \bar{f}_0(x),g^{\perp_{e}}(x))= x + 1\}$ and $\operatorname{lcm}(\bar{f}_1(x),g^{\perp_e}(x))\equiv 0 \pmod{x^n-1}$, we get

 	\begin{equation} \label{improve_boundD}
 			\begin{array}{rl}
 	\min\left\{ d_H(\mathscr{C}_3),d_H(\mathscr{C}_5), d_H(\mathscr{C}_6), \mathcal{D}_{\perp}(\mathscr{C})\right\}  \le d_s(\mathscr{C}^{\perp_s})\\
 	 \le 
 	\min\left\{d_H(\mathscr{C}_0),d_H(\mathscr{C}_3),d_H(\mathscr{C}_4)\right\},
 \end{array}
 \end{equation}
where 
 $
 	\mathcal{D}_{\perp}(\mathscr{C})=\max \left\{ \left\lceil\left(d_H(\mathscr{C}_1)+d_H(\mathscr{C}_2)+ d_H\left(\left[ \mathcal{I}^{(1)}_{\perp}(\mathscr{C}) \right]\right)\right)/ 2\right\rceil\right.
 	,$ $\left.  d_H(\mathscr{C}_1),d_H(\mathscr{C}_2)\right\}.
$
% and $\mathscr{C}_0$, $\mathscr{C}_1$, $\mathscr{C}_2$, $\mathscr{C}_3$, $\mathscr{C}_4$, $\mathscr{C}_5$ and $\mathscr{C}_6$ are cyclic codes determined by $\left[g^{\perp_{e}}(x)\right]$,
%  $\left[\bar{f}_0(x)\right]$, $\left[\gcd(\bar{f}_1(x),g^{\perp_{e}}(x))\right]$,
%  $\left[\frac{x^{15}-1}{\gcd( x^{15}-1,\bar{f}_1(x))}\right]$, 	$\left[\frac{x^{15}-1}{\gcd( x^{15}-1,\bar{f}_0(x))}\right]$, $\left[\gcd\left( \frac{x^{15}-1}{\gcd( x^{15}-1,\bar{f}_0(x))},g^{\perp_{e}}(x) \right)\right]$ and $\left[\operatorname{lcm}(\bar{f}_0(x),\gcd(\bar{f}_1(x),g^{\perp_{e}}(x))\right]$, respectively.
Here, we list the generator polynomials and parameters of the cyclic codes involved.
 \begin{itemize}
      \item  $\mathscr{C}_0$: Generator polynomial $\left[g^{\perp_{e}}(x)\right]$. 
      Parameters $\left[15,4,8\right]_2$.
      \item  $\mathscr{C}_1$: Generator polynomial $\left[\bar{f}_0(x)\right]=\left[x^6 + x^4 + x^3 + x^2 + 1\right]$. Parameters $\left[15,9,4\right]_2$.
      \item  $\mathscr{C}_2$: Generator polynomial $\left[\gcd(\bar{f}_1(x),g^{\perp_{e}}(x))\right]=\left[x^4 + x + 1\right]$. Parameters $\left[15,11,3\right]_2$.
     \item $\mathscr{C}_3$: Generator polynomial $\left[\frac{x^{15}-1}{\gcd( x^{15}-1,\bar{f}_1(x))}\right]=\left[x^7 + x^6 + x^5 + x^2 + x + 1\right]$. Parameters $\left[15,8,4\right]_2$.
      \item $\mathscr{C}_4$: Generator polynomial $\left[\frac{x^{15}-1}{\gcd( x^{15}-1,\bar{f}_0(x))}\right]=\left[x^9 + x^7 + x^6 + x^3 + x^2 + 1\right]$. Parameters $\left[15,6,6\right]_2$.
      \item  $\mathscr{C}_5$: Generator polynomial $\left[\gcd\left( \frac{x^{15}-1}{\gcd( x^{15}-1,\bar{f}_0(x))},g^{\perp_{e}}(x) \right)\right]$ $=\left[x^5 + x^4 + x^2 + 1\right]$. Parameters $\left[15,10,4\right]_2$.
      \item  $\mathscr{C}_6$: Generator polynomial $\left[\operatorname{lcm}(\bar{f}_0(x),\gcd(\bar{f}_1(x),g^{\perp_{e}}(x))\right]=\left[x^{10} + x^8 + x^5 + x^4 + x^2 + x + 1\right]$. Parameters $\left[15,5,7\right]_2$.
      \item  Generator polynomial $\left[\mathcal{I}^{(1)}_{\perp}(\mathscr{C})\right]= \left[x + 1\right]$. Parameters $\left[15,14,2\right]_2$.
 \end{itemize}

 It follows directly that the lower and upper bounds on the minimum symplectic distance of $\mathscr{C}^{\perp_s}$ are both $4$ based on the parameters of the relevant cyclic code, so the parameters of $\mathscr{C}^{\perp_s}$ are $\left[30,19,4\right]_2^s$. By using Magma \cite{bosma1997magma}, we can compute that the specific parameters of $\mathscr{C}^{\perp_s}$ are indeed $\left[30,19,4\right]_2^s$.
Thus, by Theorem \ref{CRSS}, we can obtain a binary QECC with parameters $\left[\left[15,4,4\right]\right]_2$, which is an optimal QECC according to \cite{Grassltable}.
\end{example}

% \begin{remark}
% 	Theorems \ref{one_quasi-cyclic} and \ref{theo-h-QC} can help us to determine which quasi-cyclic codes are symplectic self-orthogonal, thus greatly reducing the search space. Furthermore, Theorems \ref{SQCBound} and \ref{SDQCBound} can help to determine the range of the minimum symplectic distances of $1$-generator quasi-cyclic codes with index two and their symplectic dual codes (the distances of cyclic codes can be determined by previous bounds in \cite{huffman2010fundamentals}), with lower cost, thereby further helping us to reduce the search space. 
% 	Moreover, although we do not derive the symplectic dual codes of $h$-quasi-cyclic codes and related bounds, we can still utilize them to efficiently construct symplectic self-orthogonal codes with the help of Theorem \ref{theo-h-QC}.  
% \end{remark}

By utilizing the results above, we devise many superior symplectic self-orthogonal codes via computerized search, corresponding to 117 record-breaking QECCs, see Table \ref{all_QECCs}.
Specifically, we give the specific generators of related codes and their corresponding parameters in Tables \ref{F_QECCs} and \ref{T_QECCs}. 
In order to save space, we give them in abbreviated form and represent the coefficient polynomials in ascending order, where the indexes of elements represent consecutive elements of the same number. For example, polynomial $1 + x^2 + x^3 + x^4$ over $\mathbb{F}_2$ is denoted as $101^3$. The specific parameters of corresponding codes are calculated by using Magma \cite{bosma1997magma}.

%In Table 2 and their derived codes (by Lemma \ref{propagation_rules}) are given in Table 2, all of which improve the lower bound on the minimum distances in code table \cite{Grassltable}.

\begin{remark}
	The new binary QECCs presented in this paper have been updated in Grassl's code table \cite{Grassltable}.  Additionally, after we completed this paper, Grassl et al. \cite{ezerman2024characterization} systematically constructed a large number of good QECCs using QC codes, and extended the quantum code table to $q\le 8$ and $n\le 100$.
\end{remark}

\section{Conclusion}\label{sec5}

In this work, we established the necessary and sufficient conditions for QC codes with even 
index to be symplectic self-orthogonal. 
Furthermore, through the decomposition of the code space of QC codes, we established lower and upper bounds on the minimum symplectic distances of a class of $1$-generator QC codes with index two and their symplectic dual codes. 
This contributed to improving the results presented in \cite{galindo2018quasi}, \cite{dastbasteh2023polynomial}, and \cite{GuanAdditive2023}, see Remarks \ref{improve_bound} and \ref{improve_bound_Galindo}, and Examples \ref{good_Ex1} and \ref{good_Ex_for additive}.
As an application, we constructed 117 record-breaking binary QECCs from symplectic self-orthogonal QC codes.

\begin{table*}[!t]
	\centering
	\begin{threeparttable}
		\caption{ New binary QECCs  and derived record-breaking codes\tnote{1}$^{,}$\tnote{2}}\label{all_QECCs}
		\footnotesize
		\begin{tabular}{cccccc}
			\hline
			NO. & Record-breaking  QECCs  & Code Table \cite{Grassltable} & NO. &  Record-breaking QECCs  & Code Table \cite{Grassltable} \\ \hline
			 1  & $[[40,6,10]]_2^{\star}$ &       $[[40,6,8]]_2^*$        & 60  & $[[63,31,8]]_2^{\star}$ &               -               \\
			 2  &     $[[41,6,10]]_2$     &        $[[41,6,8]]_2$         & 61  &   $[[73,18,13]]_2^*$    &       $[[73,18,12]]_2$        \\
			 3  &     $[[39,7,9]]_2$      &        $[[39,7,8]]_2$         & 62  &    $[[73,17,13]]_2$     &       $[[73,17,12]]_2$        \\
			 4  &     $[[39,6,9]]_2$      &        $[[39,6,8]]_2$         & 63  &    $[[73,16,13]]_2$     &       $[[73,16,12]]_2$        \\
			 5  &     $[[40,7,9]]_2$      &        $[[40,7,8]]_2$         & 64  &    $[[73,15,13]]_2$     &       $[[73,15,12]]_2$        \\
			 6  & $[[42,7,10]]_2^{\star}$ &        $[[42,7,9]]_2$         & 65  &    $[[73,14,13]]_2$     &       $[[73,14,12]]_2$        \\
			 7  &     $[[43,7,10]]_2$     &        $[[43,7,9]]_2$         & 66  &    $[[74,18,13]]_2$     &       $[[74,18,12]]_2$        \\
			 8  &     $[[44,7,10]]_2$     &        $[[44,7,9]]_2$         & 67  &    $[[74,17,13]]_2$     &       $[[74,17,12]]_2$        \\
			 9  &     $[[45,7,10]]_2$     &        $[[45,7,9]]_2$         & 68  &    $[[74,16,13]]_2$     &       $[[74,16,12]]_2$        \\
			10  &     $[[46,7,10]]_2$     &        $[[46,7,9]]_2$         & 69  &    $[[74,15,13]]_2$     &       $[[74,15,12]]_2$        \\
			11  &     $[[41,8,9]]_2$      &        $[[41,8,8]]_2$         & 70  &    $[[75,18,13]]_2$     &       $[[75,18,12]]_2$        \\
			12  &    $[[40,5,10]]_2^*$    &        $[[40,5,9]]_2$         & 71  &    $[[75,17,13]]_2$     &       $[[75,17,12]]_2$        \\
			13  &     $[[41,5,10]]_2$     &        $[[41,5,9]]_2$         & 72  &    $[[75,16,13]]_2$     &       $[[75,16,12]]_2$        \\
			14  &     $[[42,5,10]]_2$     &        $[[42,5,9]]_3$         & 73  &    $[[75,15,13]]_2$     &       $[[75,15,12]]_2$        \\
			15  &     $[[39,6,9]]_2$      &        $[[39,6,8]]_4$         & 74  &    $[[76,18,13]]_2$     &       $[[76,18,12]]_2$        \\
			16  &     $[[40,6,9]]_2$      &        $[[40,6,8]]_5$         & 75  &    $[[76,17,13]]_2$     &       $[[76,17,12]]_2$        \\
			17  &     $[[41,6,9]]_2$      &        $[[41,6,8]]_6$         & 76  &    $[[76,16,13]]_2$     &       $[[76,16,12]]_2$        \\
			18  &    $[[42,6,10]]_2^*$    &        $[[42,6,9]]_2$         & 77  &    $[[76,15,13]]_2$     &       $[[76,15,12]]_2$        \\
			19  &     $[[43,6,10]]_2$     &        $[[43,6,9]]_2$         & 78  &    $[[77,18,13]]_2$     &       $[[77,18,12]]_2$        \\
			20  &     $[[44,6,10]]_2$     &        $[[44,6,9]]_2$         & 79  &    $[[77,17,13]]_2$     &       $[[77,17,12]]_2$        \\
			21  &     $[[45,6,10]]_2$     &        $[[45,6,9]]_2$         & 80  &    $[[77,16,13]]_2$     &       $[[77,16,12]]_2$        \\
			22  &     $[[41,7,9]]_2$      &        $[[41,7,8]]_2$         & 81  &    $[[77,15,13]]_2$     &       $[[77,15,12]]_2$        \\
			23  &     $[[41,6,9]]_2$      &        $[[41,6,8]]_2$         & 82  &    $[[78,18,13]]_2$     &       $[[78,18,12]]_2$        \\
			24  &    $[[42,13,8]]_2^*$    &        $[[42,13,7]]_2$        & 83  &    $[[78,17,13]]_2$     &       $[[78,17,12]]_2$        \\
			25  &    $[[42,14,8]]_2^*$    &        $[[42,14,7]]_2$        & 84  &    $[[78,16,13]]_2$     &       $[[78,16,12]]_2$        \\
			26  &    $[[45,4,11]]_2^*$    &        $[[45,4,10]]_2$        & 85  &    $[[78,15,13]]_2$     &       $[[78,15,12]]_2$        \\
			27  &     $[[46,4,11]]_2$     &        $[[46,4,10]]_2$        & 86  &    $[[79,18,13]]_2$     &       $[[79,18,12]]_2$        \\
			28  &     $[[47,4,11]]_2$     &        $[[47,4,10]]_2$        & 87  &    $[[79,17,13]]_2$     &       $[[79,17,12]]_2$        \\
			29  &    $[[45,5,11]]_2^*$    &        $[[45,5,10]]_2$        & 88  &    $[[79,16,13]]_2$     &       $[[79,16,12]]_2$        \\
			30  &     $[[46,5,11]]_2$     &        $[[46,5,10]]_2$        & 89  &    $[[80,18,13]]_2$     &       $[[80,18,12]]_2$        \\
			31  &     $[[47,5,11]]_2$     &        $[[47,5,10]]_2$        & 90  &    $[[80,17,13]]_2$     &       $[[80,17,12]]_2$        \\
			32  &     $[[48,5,11]]_2$     &        $[[48,5,10]]_2$        & 91  &    $[[80,16,13]]_2$     &       $[[80,16,12]]_2$        \\
			33  &    $[[45,6,10]]_2^*$    &        $[[45,6,9]]_2$         & 92  &    $[[81,18,13]]_2$     &       $[[81,18,12]]_2$        \\
			34  &    $[[45,10,9]]_2^*$    &        $[[45,10,8]]_2$        & 93  &    $[[72,19,12]]_2$     &       $[[72,19,11]]_2$        \\
			35  &    $[[48,3,12]]_2^*$    &        $[[48,3,11]]_2$        & 94  &    $[[72,18,12]]_2$     &       $[[72,18,11]]_2$        \\
			36  & $[[45,21,7]]_2^{\star}$ &        $[[45,21,6]]_2$        & 95  &   $[[78,25,11]]_2^*$    &       $[[78,25,10]]_2$        \\
			37  &    $[[48,5,11]]_2^*$    &        $[[48,5,10]]_2$        & 96  &   $[[78,25,12]]_2^*$    &       $[[78,25,10]]_2$        \\
			38  &    $[[49,4,12]]_2^*$    &        $[[49,4,11]]_2$        & 97  &    $[[78,24,12]]_2$     &       $[[78,24,11]]_2$        \\
			39  &    $[[51,8,11]]_2^*$    &        $[[51,8,10]]_2$        & 98  &    $[[78,23,12]]_2$     &       $[[78,23,11]]_2$        \\
			40  &     $[[51,7,11]]_2$     &        $[[51,7,10]]_2$        & 99  &    $[[78,22,12]]_2$     &       $[[78,22,11]]_2$        \\
			41  &     $[[51,6,11]]_2$     &        $[[51,6,10]]_2$        & 100 &    $[[78,21,12]]_2$     &       $[[78,21,11]]_2$        \\
			42  &     $[[50,9,10]]_2$     &        $[[50,9,9]]_2$         & 101 &    $[[79,25,12]]_2$     &       $[[79,25,11]]_2$        \\
			43  &    $[[51,9,11]]_2^*$    &        $[[51,9,10]]_2$        & 102 &    $[[79,24,12]]_2$     &       $[[79,24,11]]_2$        \\
			44  &    $[[50,10,10]]_2$     &        $[[50,10,9]]_2$        & 103 &    $[[79,23,12]]_2$     &       $[[79,23,11]]_2$        \\
			45  & $[[55,20,9]]_2^{\star}$ &               -               & 104 &    $[[79,22,12]]_2$     &       $[[79,22,11]]_2$        \\
			46  &   $[[56,11,11]]_2^*$    &       $[[56,11,10]]_2$        & 105 &    $[[80,25,12]]_2$     &       $[[80,25,11]]_2$        \\
			47  &    $[[57,11,11]]_2$     &       $[[57,11,10]]_2$        & 106 &    $[[80,24,12]]_2$     &       $[[80,24,11]]_2$        \\
			48  &    $[[58,11,11]]_2$     &       $[[58,11,10]]_2$        & 107 &    $[[80,23,12]]_2$     &       $[[80,23,11]]_2$        \\
			49  &    $[[60,6,13]]_2^*$    &        $[[60,6,12]]_2$        & 108 &    $[[81,25,12]]_2$     &       $[[81,25,11]]_2$        \\
			50  &   $[[60,10,12]]_2^*$    &       $[[60,10,11]]_2$        & 109 &    $[[81,24,12]]_2$     &       $[[81,24,11]]_2$        \\
			51  &     $[[60,9,12]]_2$     &        $[[60,9,11]]_2$        & 110 &    $[[82,25,12]]_2$     &       $[[82,25,11]]_2$        \\
			52  &    $[[61,10,12]]_2$     &       $[[61,10,11]]_2$        & 111 &    $[[77,26,11]]_2$     &       $[[77,26,10]]_2$        \\
			53  &   $[[60,11,12]]_2^*$    &       $[[60,11,11]]_2$        & 112 &    $[[77,25,11]]_2$     &       $[[77,25,10]]_2$        \\
			54  &    $[[61,11,12]]_2$     &       $[[61,11,11]]_2$        & 113 &    $[[77,24,11]]_2$     &       $[[77,24,10]]_2$        \\
			55  &    $[[59,12,11]]_2$     &       $[[59,12,10]]_2$        & 114 &   $[[78,26,11]]_2^*$    &       $[[78,26,10]]_2$        \\
			56  &    $[[62,6,14]]_2^*$    &        $[[62,6,13]]_2$        & 115 &    $[[79,26,11]]_2$     &       $[[79,26,10]]_2$        \\
			57  &     $[[62,5,14]]_2$     &        $[[62,5,13]]_2$        & 116 &   $[[78,29,10]]_2^*$    &        $[[78,29,9]]_2$        \\
			58  &     $[[61,7,13]]_2$     &        $[[61,7,12]]_2$        & 117 &   $[[105,80,6]]_2^*$    &       $[[105,80,5]]_2$        \\
			59  &     $[[61,6,13]]_2$     &        $[[61,6,12]]_2$        &  -  &            -            &               -               \\ \hline
		\end{tabular}
		\begin{tablenotes}    %这行要添加， 从这开始
			\footnotesize               %这行要添加
			\item[1] The QECCs with superscripts ``$*$" and ``$\star$" can be found in Table \ref{F_QECCs} and Table \ref{T_QECCs} as construction polynomials, respectively. The rest of the QECCs are derived by Lemma \ref{propagation_rules}.          %这行要添加
			\item[2] Some additions: 
			$[[55,20,9]]_2$ and $[[63,31,8]]_2$ were previously presented in our first version \cite{guan2021new}, but were not formally published in \cite{guan2022new}; here we formally submit them for publication.
			After completing the first version of this paper \cite{guan2022symplectic}, we note that in the very near time, Dastbasteh et al. \cite{dastbasteh2023polynomial} also obtained $[[45,6,10]]_2$, $[[45,10,9]]_2$ and $[[51,8,11]]_2$, and Seneviratne \cite{Grassltable} obtained $[[40,6,10]]_2$, respectively.
%			After completing the first version \cite{guan2022symplectic} of this paper, we notice that later \cite{dastbasteh2023polynomial} also gets $[[45,6,10]]_2$, $[[45,10,9]]_2$ and $[[51,8,11]]_2$.
%			   In addition, during our revision of this paper \cite{guan2022symplectic}, we noticed that Padmapani Seneviratne also obtained $[[40,6,10]]_2$ and updated it in the code table \cite{Grassltable}.        %这行要添加
		\end{tablenotes}            %这行要添加
	\end{threeparttable}       %这行要添加，到这里结束
\end{table*}

\begin{table*}[!t]
	\caption{One-generator symplectic self-orthogonal QC codes with $\ell=2$ and record-breaking  QECCs}\label{F_QECCs}
	\centering
	\footnotesize
	\begin{tabular}{ccccc}
		\hline
		No. &Codes      &Symplectic Dual   &                                                                                                                                                                                                                             $g(x),f_0(x),f_1(x)$                                                                                                                                                                                                                              &    Record-breaking  QECCs     \\ \hline
		 1  &  $[80,35]_2^s$  &  $[80,45,10]_2^s$   &                                                                                                                     \makecell[c]{$1^{2}0^{2}1^2$, $0^{4}1^{3}01^{3}0101^{3}0^{2}10^{2}1^{3}0101^{3}01^3$,\\
			$0101^{2}0^{3}1^{2}0^{5}10^{2}1010^{2}10^{5}1^{2}0^{3}1^{2}01$}                                                                                                                      & $[[40,5,10]]_2$  \\
		 2  &  $[84,36]_2^s$  &  $[84,48,10]_2^s$   &                                                                                                                                                \makecell[c]{$101^2,1^{3}0^{3}1^{2}01^{3}0101^{2}0^{3}1^{3}0^{3}1^{2}0101^{3}01^{2}0^{3}1^2$,\\$010^{2}10^{2}1^{2}0^{3}10101^{2}0^{2}1010^{2}1^{2}01010^{3}1^{2}0^{2}10^{2}1$}                                                                                                                                                 & $[[42,6,10]]_2$  \\
		 3  &  $[84,29]_2^s$  &   $[84,55,8]_2^s$   &                                                                                                                                                     \makecell[c]{$101010^{3}10^{3}1,1^{2}0^{2}101^{2}0^{2}1^{3}01^{2}01^{2}01^{3}01^{2}01^{2}01^{3}0^{2}1^{2}010^{2}1$,\\$0^{7}10^{3}1^{2}0^{4}10^{7}10^{4}1^{2}0^{3}1$}                                                                                                                                                      & $[[42,13,8]]_2$  \\
		 4  &  $[84,28]_2^s$  &   $[84,56,8]_2^s$   &                                                                                                                                                         \makecell[c]{$10^{3}10^{3}10101,0^{4}10^{3}10^{3}101^{3}0^{2}10^{3}10^{2}1^{3}010^{3}10^{3}1$,\\$01^{3}0101^{6}0^{3}10^{2}1^{2}01^{2}0^{2}10^{3}1^{6}0101^3$}                                                                                                                                                         & $[[42,14,8]]_2$  \\
		 5  &  $[90,41]_2^s$  &  $[90,49,11]_2^s$   &                                                                                                                                                        \makecell[c]{$10^{2}1^2,1010101^{3}0^{2}10^{2}1^{3}01^{10}01^{3}0^{2}10^{2}1^{3}0101$,\\$101^{4}0^{3}1^{2}01010^{2}101^{3}0^{2}1^{3}010^{2}10101^{2}0^{3}1^4$ }                                                                                                                                                        & $[[45,4,11]]_2$  \\
		 6  &  $[90,40]_2^s$  &  $[90,50,11]_2^s$   &                                                                                                                                                  \makecell[c]{$10101^2,1^{2}0^{5}1010101^{2}0^{2}1^{2}0^{8}1^{2}0^{2}1^{2}0101010^{5}1$,\\$1^{2}0^{2}1^{3}0^{2}101^{3}010^{3}1^{3}0^{2}1^{3}0^{3}101^{3}010^{2}1^{3}0^{2}1$}                                                                                                                                                  & $[[45,5,11]]_2$  \\
		 7  &  $[90,39]_2^s$  &  $[90,51,10]_2^s$   &                                                                                                                                                   \makecell[c]{$10^{2}10^{2}1,10101^{2}0^{3}10^{2}10^{3}10^{2}10^{6}10^{2}10^{3}10^{2}10^{3}1^{2}01$,\\$1^{2}01^{3}0^{2}1010^{6}1010^{2}1^{2}0^{2}1010^{6}1010^{2}1^{3}01$}                                                                                                                                                   & $[[45,6,10]]_2$  \\
		 8  &  $[90,35]_2^s$  &   $[90,55,9]_2^s$   &                                                                                                                                                                                $1^5,1^{2}01^{2}010^{3}1^{4}010101^{3}0^{2}1^{3}010101^{4}0^{3}101^{2}01,10^{5}10^{5}10^{5}1^{10}0^{5}10^{5}1$                                                                                                                                                                                 & $[[45,10,9]]_2$  \\
		 9  &  $[96,45]_2^s$  &  $[96,51,12]_2^s$   &                                                                                                                                                      \makecell[c]{$1^2,010^{2}10^{9}1^{2}01010^{2}1^{2}01^{2}0^{2}10101^{2}0^{9}10^{2}1$,\\$01^{2}010^{2}101^{5}0^{2}1^{3}01^{2}0^{5}1^{2}01^{3}0^{2}1^{5}010^{2}101^2$}                                                                                                                                                      & $[[48,3,12]]_2$  \\
		 10  &  $[96,43]_2^s$  &  $[96, 53,11]_2^s$  &                                                                                                                                                       \makecell[c]{$1^2,0^{4}1^{3}0101^{3}01^{7}01^{2}01^{2}01^{7}01^{3}0101^3$,\\$0^{4}10^{2}10^{2}1^{3}01^{2}01010^{2}1^{2}01^{2}0^{2}10101^{2}01^{3}0^{2}10^{2}1$  }                                                                                                                                                       & $[[48,5,11]]_2$  \\
		11  & $[98, 45]_2^s$  &  $[98, 53,12]_2^s$  &                                                                                                                                       \makecell[c]{$101^2,010^{2}1^{2}0^{2}1^{2}010^{2}10^{6}1^{2}0^{4}1^{2}0^{6}10^{2}101^{2}0^{2}1^{2}0^{2}1$,\\$0^{3}1^{2}0^{4}10^{2}101^{2}01010^{2}1^{2}0^{2}1^{2}0^{2}10101^{2}010^{2}10^{4}1^2$}                                                                                                                                       & $[[49,4,12]]_2$  \\
		12  & $[102, 43]_2^s$ & $[102, 59,11]_2^s$  &                                                                                                                                     \makecell[c]{$10^{3}1^{2}01^2,1^{2}0101^{3}01^{2}0^{4}10^{3}1^{2}0101^{4}0101^{2}0^{3}10^{4}1^{2}01^{3}0101$,\\$1^{3}01^{6}0^{2}1^{2}010^{3}1^{2}0^{2}101^{2}010^{2}1^{2}0^{3}101^{2}0^{2}1^{6}01^2$}                                                                                                                                     & $[[51,8,11]]_2$  \\
		13  & $[102, 42]_2^s$ & $[102, 60,11]_2^s$  &                                                                                                                                            \makecell[c]{$10^{3}1^{2}01^2,0^{2}1010^{3}1^{7}01^{2}01^{3}0^{2}10^{2}10^{2}1^{3}01^{2}01^{7}0^{3}101$,\\$01^{5}0^{2}10^{4}1^{2}0^{4}10101^{2}0^{2}1^{2}01010^{4}1^{2}0^{4}10^{2}1^5$}                                                                                                                                            & $[[51,9,11]]_2$  \\
		14  & $[112, 45]_2^s$ & $[112, 67,11]_2^s$  &                                                                                                                                   \makecell[c]{$1^{2}0^{2}1^6,101010^{3}101^{2}0^{3}1^{4}0101^{2}0^{3}1^{3}0^{3}1^{2}0101^{4}0^{3}1^{2}010^{3}101$,\\$1^{4}0^{2}10^{9}1^{2}0^{2}1^{2}01^{3}0^{2}10^{2}1^{3}01^{2}0^{2}1^{2}0^{9}10^{2}1^3$}                                                                                                                                   & $[[56,11,11]]_2$ \\
		15  & $[120, 54]_2^s$ & $[120, 66,13]_2^s$  &                                                                                                                                         \makecell[c]{$1^{2}0^{2}1,0101010^{3}1^{5}01^{3}01^{2}0^{2}10^{3}1^{2}0^{3}1^{2}0^{3}10^{2}1^{2}01^{3}01^{5}0^{3}10101$,\\$1^{5}01^{4}0^{5}1010^{2}1^{5}01^{9}01^{5}0^{2}1010^{5}1^{4}01^4$}                                                                                                                                          & $[[60,6,13]]_2$  \\
		16  & $[120, 50]_2^s$ & $[120, 70,12]_2^s$  &                                                                                                                         \makecell[c]{$10^{5}101,0^{2}10^{3}10^{2}1^{2}01^{4}010^{4}1^{2}0^{2}1^{2}0^{5}1^{2}0^{2}1^{2}0^{4}101^{4}01^{2}0^{2}10^{3}1$,\\$01^{3}01^{3}01^{2}0^{3}101^{3}010^{2}1^{3}010^{5}101^{3}0^{2}101^{3}010^{3}1^{2}01^{3}01^3$}                                                                                                                         & $[[60,10,12]]_2$ \\
		17  & $[120, 49]_2^s$ & $[120, 71,12]_2^s$  &                                                                                                                           \makecell[c]{$10^{5}101,1^{2}010^{3}1^{4}0^{2}1010^{2}1010^{3}101^{3}0101^{3}010^{3}1010^{2}1010^{2}1^{4}0^{3}101$,\\$0^{3}10^{3}1^{3}01^{3}0^{3}1^{2}0^{2}10^{3}1^{2}0^{7}1^{2}0^{3}10^{2}1^{2}0^{3}1^{3}01^{3}0^{3}1$}                                                                                                                            & $[[60,11,12]]_2$ \\
		18  & $[124, 56]_2^s$ & $[124, 68,14]_2^s$  &                                                                                                                        \makecell[c]{$1^{4}01,1^{2}0^{2}101^{4}0^{7}1^{3}0^{4}10^{2}10^{3}10^{3}10^{2}10^{4}1^{3}0^{7}1^{4}010^{2}1$,\\$10^{2}1^{2}0^{3}1^{2}01^{4}010^{2}1010^{2}1^{2}01^{2}0^{2}10^{2}1^{2}01^{2}0^{2}1010^{2}101^{4}01^{2}0^{3}1^2$}                                                                                                                        & $[[62,6,14]]_2$  \\
		19  & $[146, 55]_2^s$ & $[146, 91,13]_2^s$  &                                                                                                     \makecell[c]{$1^{2}0^{3}10^{2}1^{3}0^{4}1^{2}01$,\\$10101^{3}0^{7}1^{2}010^{3}10^{2}1^{2}0^{2}1^{6}010^{2}101^{6}0^{2}1^{2}0^{2}10^{3}101^{2}0^{7}1^{3}01$,\\$1^{6}0^{2}101^{3}0^{2}10^{2}10^{2}1^{2}0^{6}1^{4}0^{8}1^{4}0^{6}1^{2}0^{2}10^{2}10^{2}1^{3}010^{2}1^5$}                                                                                                     & $[[73,18,13]]_2$ \\
		20  & $[156, 53]_2^s$ & $[156, 103,11]_2^s$ &                                                                                         \makecell[c]{$10101010^{7}10101010^{3}1$,\\$101^{2}0^{3}1^{2}0^{5}10^{3}1010^{2}10^{3}1^{3}010^{2}10^{4}10^{4}10^{2}101^{3}0^{3}10^{2}1010^{3}10^{5}1^{2}0^{3}1^2$,\\$01010^{2}101^{2}0^{5}10^{2}101^{4}0^{4}10^{2}101^{3}01^{2}01^{2}01^{3}010^{2}10^{4}1^{4}010^{2}10^{5}1^{2}010^{2}101$}                                                                                          & $[[78,25,11]]_2$ \\
		21  & $[156, 53]_2^s$ & $[156, 103,12]_2^s$ &                                                                                                 \makecell[c]{$10101010^{7}10101010^{3}1$,\\$10^{2}101^{2}01^{6}0101^{5}0^{2}1^{2}01^{2}01^{7}0^{2}10^{2}1^{7}01^{2}01^{2}0^{2}1^{5}0101^{6}01^{2}01$,\\$1^{5}0^{2}10^{3}1010^{4}1^{2}0^{2}1^{4}0^{5}10^{2}1^{2}01^{5}01^{2}0^{2}10^{5}1^{4}0^{2}1^{2}0^{4}1010^{3}10^{2}1^4$}                                                                                                 & $[[78,25,12]]_2$ \\
		22  & $[156, 52]_2^s$ & $[156, 104,11]_2^s$ &                                                                                                     \makecell[c]{$10101010^{7}10101010^{3}1$,\\$0^{3}101^{3}0^{5}10^{4}1010^{2}10^{4}1^{2}01010^{2}1^{3}01^{3}0^{2}10101^{2}0^{4}10^{2}1010^{4}10^{5}1^{3}01$,\\$01^{9}01^{2}0^{3}10101^{2}0^{3}10^{2}10^{2}101^{2}0^{9}1^{2}010^{2}10^{2}10^{3}1^{2}01010^{3}1^{2}01^9$}                                                                                                     & $[[78,26,11]]_2$ \\
		23  & $[156, 49]_2^s$ & $[156, 107,10]_2^s$ &                                                                                \makecell[c]{$10101010^{7}10101010^{3}1$,\\$10^{3}10101^{2}0^{5}1^{2}010^{2}10^{4}1010^{2}1^{2}0^{5}1^{3}0^{5}1^{2}0^{2}1010^{4}10^{2}101^{2}0^{5}1^{2}0101$,\\ $0^{2}101^{2}010^{2}1^{2}0^{5}1^{3}01^{2}01^{2}01^{2}0^{2}10^{2}10^{2}10^{3}10^{2}10^{2}10^{2}1^{2}01^{2}01^{2}01^{3}0^{5}1^{2}0^{2}101^{2}01$}                                                                                & $[[78,29,10]]_2$ \\
		24  & $[210, 25]_2^s$ & $[210, 185,6]_2^s$  & \makecell[c]{$10^{2}1^{3}0^{2}1^{4}010^{5}1^{4}0^{2}1^{4}0^{3}10^{4}1^{4}01^{3}01^{2}01010^{5}1^{2}010^{4}101^{3}0^{5}1^{2}0^{2}1^2$,\\$10^{2}1^{2}0^{2}101^{5}0^{4}101^{2}0^{2}1^{2}01^{4}01^{3}0101^{3}0^{2}1^{2}0^{2}1^{2}0^{2}1^{2}0^{2}$\\$10^{2}10^{2}1^{4}010^{7}1^{2}0101^{3}01010^{5}1^{2}01^{2}0^{2}1010^{3}1,10^{2}1^{2}01^{2}0101^{2}01^{5}$\\$0^{3}1^{2}0^{2}10^{6}1^{4}0^{4}1010101^{3}0101^{3}0^{2}101^{2}010^{2}10^{7}1^{4}0^{4}1^{3}01^{4}0^{2}10^{3}101^2$} & $[[105,80,6]]_2$ \\ \hline
	\end{tabular}
\end{table*}

\begin{table*}[!t]
			\centering
		\begin{threeparttable}
	\caption{Two-generator symplectic self-orthogonal QC codes\tnote{3} with $\ell=2$  and record-breaking  QECCs}\label{T_QECCs}
	\centering
	\footnotesize
	\begin{tabular}{ccccc}	
		\hline
		No. &     Codes      &   Symplectic Dual  &                                                                                                                                $g_1(x),g_2(x),f(x)$                                                                                                                                 &    Record-breaking  QECCs    \\ \hline
		 1  & $[80,34]_2^s$  &  $[80,46,10]_2^s$   &                                                                                               \makecell[c]{$1^8$, $1^{40}$, $0^{4}1^{5}01^{2}0^{2}101^{3}0^{3}1^{3}010^{2}1^{2}01^5$}                                                                                               & $[[40,6,10]]_2$ \\
		 2  & $[84,35]_2^s$  &  $[84,49,10]_2^s$   &                                                                                               \makecell[c]{$10^{7}101$, $101^{3}0^{2}101^{3}0^{2}101^{3}0^{2}101^{3}0^{2}101^{3}0^{2}101^3$,\\
		 	$1^{3}0^{8}1^{3}0^{4}1^{7}0^{4}1^{3}0^{8}1^2$}                                                                                               & $[[42,7,10]]_2$ \\
		 3  & $[90,24]_2^s$  & $[ 90, 66, 7 ]_2^s$ &                                          \makecell[c]{$1^{4}0101^{4}0^{2}101^{2}0^{2}10^{3}1$,\\
	$1^{2}01^{2}01^{2}01^{2}01^{2}01^{2}01^{2}01
	^{2}01^{2}01^{2}01^{2}01^{2}01^{2}01^{2}01^2$,\\
	$0101^{2}0^{5}1010^{20}1010^{5}1^{2}01$}                                          & $[[45,21,7]]_2$ \\
		 4  & $[110,35]_2^s$ &  $[110,75,9]_2^s$   &                                                  \makecell[c]{$10^{2}101^{2}010^{2}10^{3}1^{2}0^{4}10^{2}10^{2}1010101^2$,\\
	$1010^{2}1010^{2}1^{4}01^{4}01^{2}01^{4}01^{4}0^{2}1010^{2}101$,\\
	$010^{3}1^{2}0^{42}1^{2}0^{3}1$}                                                  & $[[55,20,9]]_2$ \\
		 5  & $[126,32]_2^s$ &  $[126,94,8]_2^s$   & \makecell[c]{$10^{4}101^{2}0^{2}1^{2}0^{4}10^{2}10101010101^{2}01^{3}01^3$,\\
	$1010^{2}1^{2}0^{2}1^{2}010^{2}1^{3}0^{2}10^{2}1010^{3}1^{2}0101^{2}01010^{4}1^{3}0^{4}1^{2}0^{2}1$,\\ $0^{2}1^{2}0^{2}1^{3}010^{5}101^{2}010101^{2}010^{6}101^{2}010101^{2}010^{5}101^{3}0^{2}1^2$} & $[[63,31,8]]_2$ \\ \hline
	\end{tabular}
\begin{tablenotes}    %这行要添加， 从这开始
	\footnotesize               %这行要添加
	\item[3] These $2$-generator QC codes with generators $([g_1(x)f(x)],[g_1(x)])$ and $([g_2(x)],[g_2(x)f(x)])$.        %这行要添加
%	\item[4] $[[55,20,9]]_2$ and $[[63,31,8]]_2$ were presented in our first version of \cite{guan2021new}, but were not formally published in \cite{guan2022new}; here we formally submit them for publication.  In addition, during our revision of this paper \cite{guan2022symplectic}, we noticed that Padmapani Seneviratne also obtained $[[40,6,10]]_2$ and updated it in the code table \cite{Grassltable}. 
\end{tablenotes}            %这行要添加
\end{threeparttable}       %这行要添加，到这里结束
\end{table*}

%\section*{Acknowledgments}
%The authors would like to thank Professor Markus Grassl, Dr. Shitao Li, and Dr. Yang Li for their valuable suggestions, which greatly improved the presentation and quality of this paper.

%%%%%%%%%%%%%%%%%%%%%%%%%%%%%%%%%%%%%%%%%%%%%%%%%%%%%%%%%%%%%%%%%%%%%%%%%
\bibliographystyle{IEEEtran} 
\bibliography{reference} 
\end{document}